\newtheorem{theorem}{Theorem}
\newtheorem{lemma}{Lemma}
\newtheorem{definition}{Definition}
\newtheorem{prop}{Proposition}
\newtheorem{remark}{Remark}
\newtheorem{assumption}{Assumption}
\newcommand{\maj}{\mathrm{maj}}
\newcommand{\sign}{\mathrm{sign}}
\newcommand{\Th}{\textrm{th}}
\renewcommand{\epsilon}{\varepsilon}
\newcommand\floor[1]{\left\lfloor{#1} \right\rfloor}
\DeclarePairedDelimiter{\ceil}{\lceil}{\rceil}
\newcommand{\mtx}{\mathbf}
\newcommand{\vct}{\bm}
\newcommand\bmt[1]{\textcolor{black}{#1}}
\newcommand\PP[1]{\mathds{P}{(#1)}}
\newcommand{\RR}{\mathds{R}}
\newcommand{\EE}[1]{\mathds{E}\left[{#1}\right]}
\newcommand\ZNorm[1]{\lVert{#1}\rVert_0}
\newcommand{\ept}[1]{\mathds{E}\left[{#1}\right]}
\newcommand{\@giventhatstar}[2]{\left(#1\;\middle|\;#2\right)}
\newcommand{\@giventhatnostar}[3][]{#1(#2\;#1|\;#3#1)}
\newcommand{\giventhat}{\@ifstar\@giventhatstar\@giventhatnostar}
\newcommand\footnoteref[1]{\protected@xdef\@thefnmark{\ref{#1}}\@footnotemark}
\newif\ifarxiv
\title{Election Coding for Distributed Learning: \\Protecting SignSGD against Byzantine Attacks}
\author{
	Jy-yong Sohn\\
	\texttt{jysohn1108@kaist.ac.kr}\\
	\And
	Dong-Jun Han\\
	\texttt{djhan93@kaist.ac.kr}\\
	\AND
	Beongjun Choi\\
	\texttt{bbzang10@kaist.ac.kr}\\
	\And
	Jaekyun Moon\\
	\texttt{jmoon@kaist.edu}\\
	\AND
	\textnormal{School of Electrical Engineering,}\\
	\textnormal{Korea Advanced Institute of Science and Technology (KAIST)}
}
\begin{document}
	
	\maketitle
	
	\begin{abstract}
		Current distributed learning systems suffer from serious performance degradation under Byzantine attacks. 
		This paper proposes \textsc{Election Coding}, a coding-theoretic framework to guarantee Byzantine-robustness for distributed learning algorithms 
		based on signed stochastic gradient descent (SignSGD) that minimizes the worker-master communication load.
		The suggested framework explores new information-theoretic limits of 
		finding the majority opinion 
		when some workers could be attacked by adversary, and paves the road to implement robust and communication-efficient distributed learning algorithms. Under this framework, we construct two types of codes, random Bernoulli codes and deterministic algebraic codes, that tolerate Byzantine attacks with a controlled amount of computational redundancy and guarantee convergence in general non-convex scenarios. 
		For the Bernoulli codes, we provide an upper bound on the error probability in estimating the signs of the true gradients, which gives useful insights into code design for Byzantine tolerance. The proposed deterministic codes are proven to perfectly tolerate arbitrary Byzantine attacks.
		Experiments on real datasets confirm that the suggested codes provide substantial improvement in Byzantine tolerance of distributed learning systems
		employing SignSGD.
	\end{abstract}
	\vspace{-5mm}
	
	\section{Introduction}
	\vspace{-2mm}
	The modern machine learning paradigm is moving toward parallelization and decentralization \cite{ben2018demystifying, dean2012large, lian2017can} to speed up the training and provide reliable solutions to time-sensitive real-world problems. 
	There has been extensive work on developing distributed learning algorithms \cite{abadi2016tensorflow, paszke2017automatic, seide2016cntk, recht2011hogwild, mcmahan2016communication, konevcny2016federated} to exploit large-scale computing units.  
	These distributed algorithms are usually implemented in parameter-server (PS) framework \cite{li2014scaling}, where a central PS (or master) aggregates the computational results (e.g., gradient vectors minimizing empirical losses) of distributed workers to update the shared model parameters. 
	In recent years, two issues have emerged as major drawbacks that 
	limit the performance of distributed learning: 
	\textit{Byzantine attacks} and \textit{communication burden}.
	
	\vspace{-1mm}
	Nodes affected by Byzantine attacks send arbitrary messages to PS, which would mislead the model updating process and severely degrade learning capability. 
	To counter the threat of Byzantine attacks,  much attention has been focused on robust solutions \cite{alistarh2018byzantine, pmlr-v80-mhamdi18a, elmhamdi2019fast}.
	Motivated by the fact that a naive linear aggregation at PS cannot even tolerate Byzantine attack on a single node, the authors of \cite{chen2017distributed, blanchard2017machine, pmlr-v80-yin18a} considered median-based aggregation methods.
	However, as data volume and the number of workers increase, computing the median involves a large cost \cite{blanchard2017machine} far greater than the cost for batch gradient computations. Thus, recent works  \cite{chen2018draco, rajput2019detox} instead suggested redundant gradient computation that tolerates Byzantine attacks.
	
	\vspace{-1mm}
	Another issue is the high communication burden caused by transmitting gradient vectors between PS and workers for updating network models.
	Regarding this issue, the authors of \cite{wang2018atomo, lin2017deep, pmlr-v80-wu18d, pmlr-v80-bernstein18a, bernstein2018signsgd_ICLR, alistarh2017qsgd, wen2017terngrad, pmlr-v97-karimireddy19a} considered quantization of real-valued gradient vectors.
	The signed stochastic gradient descent method (\textsc{SignSGD}) suggested in \cite{pmlr-v80-bernstein18a} compresses a real-valued gradient vector $\vct{g}$ into a binary
	vector $\sign(\vct{g})$, and updates the model using the 1-bit compressed gradients. This scheme minimizes the communication load from PS to each worker for transmitting the aggregated gradient.
	A further variation called \textsc{SignSGD with Majority Vote (SignSGD-MV)}\cite{pmlr-v80-bernstein18a, bernstein2018signsgd_ICLR} also applies 1-bit quantization on gradients communicated from each worker to PS in achieving minimum master-worker communication in both directions.
	These schemes have been shown to minimize the communication load while maintaining the SGD-level convergence speed in general non-convex problems. A major issue that remains is the lack of 
	Byzantine-robust solutions suitable for such communication-efficient learning algorithms.

	\begin{figure}[!t]
		\vspace{-5mm}
		\centering\	
		\subfloat[][Conventional Setting \textbf{vs} Suggested \textsc{Election Coding}] {\includegraphics[height=27mm]
			{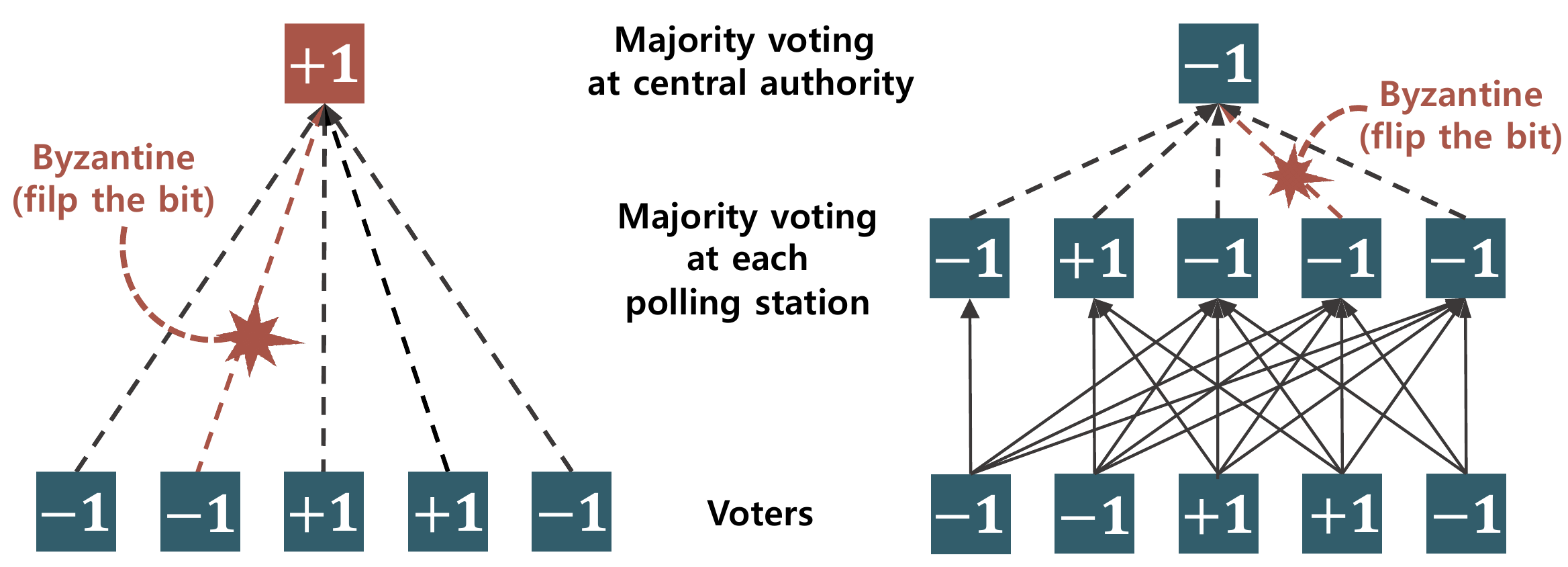}	\label{Fig:HierVote_Comp}}
		\vspace{-2mm} 
		\quad
		\subfloat[][Suggested scheme applied to \textsc{SignSGD}] {\includegraphics[height=23mm]
			{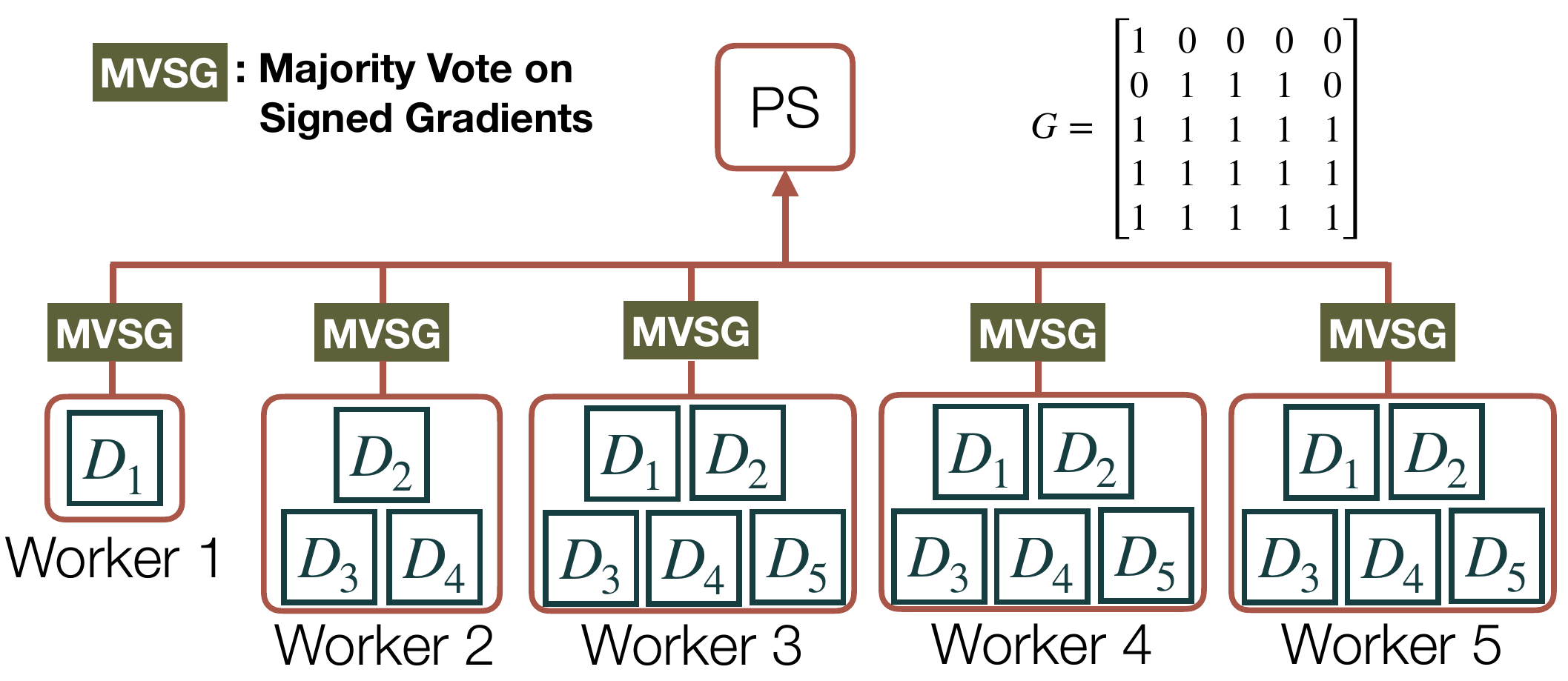}	\label{Fig:HierVote_Sugg}}
		\caption{\bmt{Use of coding to protect the majority vote. (a) The conventional scheme (without coding) is vulnerable to Byzantine attacks, while the suggested scheme is not. In the suggested scheme, each polling station gathers the votes of a subset of voters, and sends the majority value to the master. (b) In the real setup, voters are like data partitions $D_i$ while polling stations represent workers. 
		}}
		\label{Fig:HierVote}
		\vspace{-5mm}
	\end{figure}

	\vspace{-1mm}
	In this paper, we propose \textsc{Election Coding}, a coding-theoretic framework to make 
	\textsc{SignSGD-MV}\cite{pmlr-v80-bernstein18a} highly robust to Byzantine attacks. 
	In particular, we focus on estimating the next step for model update used in~\cite{pmlr-v80-bernstein18a}, i.e., the \textit{majority} voting on the signed gradients extracted from $n$ data partitions,
	under the scenario where $b$ of the $n$ worker nodes are under Byzantine attacks. 
	
	\vspace{-1mm}
	Let us illustrate the concept of the suggested framework as a voting scenario where $n$ people vote for either one candidate ($+1$) or the other  ($-1$). Suppose that each individual must send her vote directly to the central election commission, or a master. Assume that frauds can happen during the transmittal of votes, possibly flipping the result of a closely contested election, \bmt{as a single fraud did in the first example of Fig.~\ref{Fig:HierVote_Comp}}. A simple strategy can effectively combat this type of voting fraud. First set up multiple polling stations and let each person go to multiple stations to cast her ballots. Each station finds the majority vote of its poll and sends it to the central commission. Again a fraud can happen as the transmissions begin. However, \bmt{as seen in the second example of Fig.~\ref{Fig:HierVote_Comp}}, the single fraud was not able to change the election result, thanks to the built-in redundancy in the voting process. In this example, coding amounts to telling each voter to go to which polling stations. In the context of SignSGD, the individual votes are like the locally computed gradient signs that must be sent to the PS, and through some ideal redundant allocation of data partitions we wish to protect the integrity of the gathered gradient computation results under Byzantine attacks on locally computed gradients.

	\vspace{-3mm}
	\paragraph{Main contributions:}
	Under this suggested hierarchical voting framework, we construct two \textsc{Election} coding schemes: random Bernoulli codes and deterministic algebraic codes.
	Regarding the random Bernoulli codes, which are based on arbitrarily assigning data partitions to each node with probability $p$, we obtain an upper bound on the error probability in estimating the sign of the true gradient. Given $p = \Theta (\sqrt{\log (n) /n
	})$, the estimation error vanishes to zero for arbitrary $b$, under the asymptotic regime of large $n$. Moreover, the convergence of Bernoulli coded systems are proven in general non-convex optimization scenarios.
	As for the deterministic codes, we first obtain the necessary and sufficient condition on the data allocation rule, in order to accurately estimate the majority vote under Byzantine attacks. Afterwards, we suggest an explicit coding scheme which achieves perfect Byzantine tolerance for arbitrary $n,b$. %
	\vspace{-1mm}
	Finally, the mathematical results are confirmed by simulations on well-known machine learning architectures. We implement the suggested coded distributed learning algorithms in PyTorch, and deploy them on Amazon EC2 using Python with MPI4py package.
	We trained \textsc{Resnet-18} using \textsc{Cifar-10} dataset as well as a logistic regression model using Amazon Employee Access dataset. The experimental results confirm that the suggested coded algorithm has significant advantages in tolerating Byzantines compared to the conventional uncoded method, under various attack scenarios.

	\vspace{-3mm}
	\paragraph{Related works:}
	The authors of \cite{chen2018draco} suggested a coding-theoretic framework \textsc{Draco} for Byzantine-robustness of distributed learning algorithms.  Compared to the codes in \cite{chen2018draco}, our codes have the following two advantages. First, our codes are more suitable for \textsc{SignSGD} setup (or in general compressed gradient schemes) with limited communication burden. \bmt{The codes proposed in \cite{chen2018draco} were designed for real-valued gradients, while our codes are intended for quantized/compressed gradients.}
	Second, the random Bernoulli codes suggested in this paper can be designed in a more flexible manner.
	The computational redundancy $r=2b+1$ of the codes in \cite{chen2018draco} linearly increases, which is burdensome for large $b$.
	As for Bernoulli codes proposed in this paper, we can control the redundancy by choosing an appropriate connection probability $p$.
	Simulation results show that our codes having a small expected redundancy of $\mathds{E}[r]=2, 3$ enjoy significant gain compared to the uncoded scheme for various $n,b$ settings.
	A recent work  \cite{rajput2019detox} suggested a framework \textsc{Detox} which combines two existing schemes: computing redundant gradients and robust aggregation methods. However, \textsc{Detox} still suffers from a high computational overhead, since it relies on 
	the computation-intensive geometric median aggregator. 
	For communicating 1-bit compressed gradients, a recent work \cite{bernstein2018signsgd_ICLR} analyzed the Byzantine-tolerance of the naive \textsc{SignSGD-MV} scheme. This scheme can only achieve a limited accuracy as $b$ increases, whereas the proposed coding schemes can achieve high accuracy in a wide range of $b$ as shown in the experimental results provided in Section \ref{Sec:Simul}.  Moreover, as proven in Section~\ref{Sec:AlgCode}, the suggested deterministic codes achieve the ideal accuracy of $b=0$ scenario, regardless of the actual number of Byzantines in the system. 
	\vspace{-3mm}
	\paragraph{Notations:}
	{The sum of elements of vector $\vct{v}$ is denoted as $\ZNorm{\vct{v}}$. Similarly, $\ZNorm{\mtx{M}}$ represents the sum of elements of matrix $\mtx{M}$. The sum of the absolute values of the vector elements is denoted as $||{\vct{v}}||_1$. An $n \times n$ identity matrix is denoted as $\mtx{I}_n$.
		The set $\{1, 2, \dots, n\}$ is denoted by $[n]$.
		An $n \times k$ all-ones matrix
		is denoted as $\mathbb{1}_{n \times k}$.}

	\vspace{-2mm}
	\section{Suggested framework for Byzantine-robust distributed learning}
	\vspace{-1mm}
	\subsection{Preliminary: \textsc{SignSGD with Majority Vote (SignSGD-MV)}}
	
	\begin{figure}
		\vspace{-5mm}
		\centering
		\includegraphics[width=0.97\linewidth]{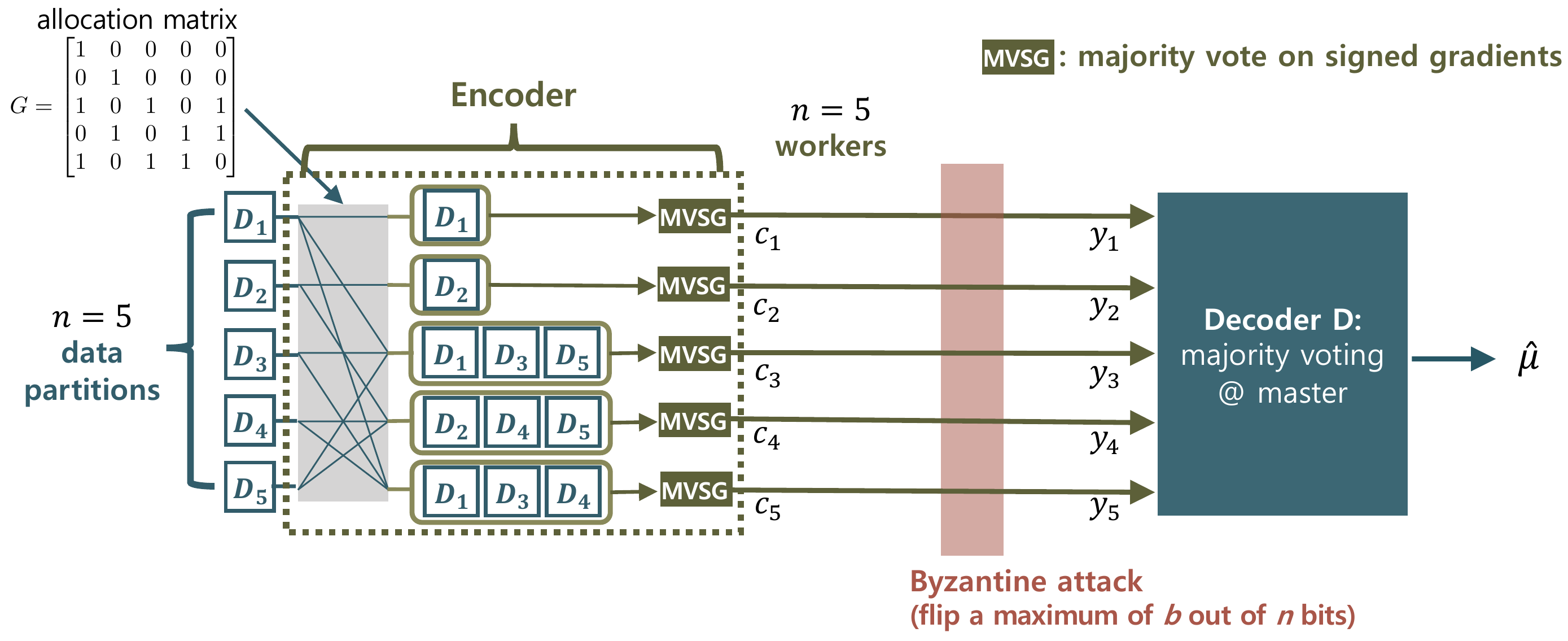}
		\caption{A formal description of the suggested \textsc{Election Coding} framework for estimating the majority opinion $\mu$. This framework is applied for each coordinate of the model parameter $\vct{w} \in \RR^d $ in a parallel manner. 
		}
		\label{Fig:SysMod}
		\vspace{-5mm}
	\end{figure}
	
	\vspace{-1mm}
	Here we review \textsc{SignSGD with Majority Vote} \cite{pmlr-v80-bernstein18a, bernstein2018signsgd_ICLR} applied to distributed learning setup with $n$ workers, where the goal is to optimize model parameter $\vct{w} \subseteq \RR^d$. 
	Assume that each training data is sampled from distribution $\mathcal{D}$.
	We divide the training data into $n$ partitions, denoted as $\{D_j\}_{j \in [n]}$.
	Let $\vct{g} = [g^{(1)},  \cdots, g^{(d)}]$ be the gradient calculated when the whole training data is given. Then, the output of the stochastic gradient oracle for an input data point $x$ is denoted by $\tilde{\vct{g}}(x) = [\tilde{g}^{(1)}(x), \cdots,\tilde{g}^{(d)}(x)]$, which is an estimate of the ground-truth $\vct{g}$. Since processing is identical across $d$ coordinates or dimensions, we shall simply focus on one coordinate $k \in [d]$, with the understanding that encoding/decoding and computation are done in parallel across all coordinates. The superscript will be dropped unless needed for clarity.    
	At each iteration, $B$ data points are selected for each data partition. We denote the set of data points selected for the $j^{\Th}$ partition as $\mathcal{B}_j \subseteq D_j$, satisfying $\lvert \mathcal{B}_j \rvert = B$.
	The output of the stochastic gradient oracle for $j^{\Th}$ partition is expressed as 
	$\tilde{g}_j = \sum_{x \in \mathcal{B}_j} \tilde{g} (x)/B$. 
	For a specific coordinate, the set of gradient elements computed for $n$ data partitions is denoted as 
	$\tilde{\vct{g}}= [\tilde{g}_1,  \cdots, \tilde{g}_n]$.
	The sign of the gradient is represented as
	$\vct{m}  =  [m_1,  \cdots, m_{n}]$
	where $m_j = \sign(\tilde{g}_j) \in \{+1,-1\}$.
	We define the \textit{majority opinion} as 
	$\mu = \maj (\vct{m}),$ 
	where $\maj(\cdot)$ is the \textit{majority} function which outputs the more frequent element in the input argument. %
	At time $t$, \textsc{SignSGD-MV} updates the model as
	$w_{t+1}  = w_t - \gamma \mu, $
	where $\gamma$ is the learning rate.

	\vspace{-1mm}
	\subsection{Proposed \textsc{Election Coding} framework}\label{Sec:SysParam}
	
	\vspace{-1mm}
	The suggested framework for estimating the majority opinion $\mu$ is illustrated in Fig. \ref{Fig:SysMod}. 
	This paper suggests applying codes for allocating data partitions into worker nodes. 
	The data allocation matrix $\mtx{G} \in \{0,1\}^{n \times n}$ is defined as follows: $G_{ij} = 1$ if data partition $j$ is allocated to node $i$, and $G_{ij} = 0$ otherwise. We define $P_i = \{ j : G_{ij}=1 \}$, the set of data partitions assigned to node $i$.
	Given a matrix $\mtx{G}$, the computational redundancy compared to the uncoded scheme is expressed as 
	$r = \ZNorm{\mtx{G}}/n$, the average number of data partitions handled by each node.
	Note that the uncoded scheme corresponds to $\mtx{G}=\mtx{I}_{n}$. 
	Once node $i$ computes $\{m_j\}_{j \in P_i}$ from the assigned data partitions, it generates a bit $c_{i} = E_i ( \vct{m} ; \mtx{G} ) =  \maj(\{m_j\}_{j\in P_i})$ 
	using encoder $E_i$. In other words, node $i$ takes the majority of the signed gradients obtained from partitions $D_j$ observed by the node. We denote the $n$ bits generated from worker nodes by $\vct{c} = [c_{1}, \cdots c_{n}]$.
	After generating $c_{i} \in \{+1,-1\}$, node $i$ transmits
	\begin{equation}\label{Eqn:y_j}
	y_{i} = 
	\begin{cases}
	\mathcal{X}, & \text{ if } \text{ node } i \text{ is a Byzantine\footnotemark node} \\
	c_{i}, & \text{ otherwise}
	\end{cases}
	\end{equation}
	to PS, where $\mathcal{X} \in \{c_{i}, - c_{i} \}$ holds since each node is allowed
	to transmit either $+1$ or $-1$. 
	We denote the number of Byzantine nodes as $b=n \alpha$, where $\alpha$ is the portion corresponding to adversaries.
	PS observes $\vct{y} = [y_{1}, \cdots, y_{n}]$ and estimates 
	$\mu$ using a decoding function $D$.
	Using the output $\hat{\mu}$ of the decoding function, PS updates the model as ${\omega}_{t+1} = {\omega}_{t} - \gamma \hat{\mu}$.
	\footnotetext{We assume in this work that Byzantine attacks take place in the communication links and that the distributed nodes themselves are authenticated and trustful. As such, \textit{Byzantine} nodes in our context are the nodes whose transmitted messages are compromised.
	}

	\vspace{-1mm}
	There are three design parameters which characterize the suggested framework: the data allocation matrix $\mtx{G}$, the encoder function $E_i$ at worker $i \in [n]$, and the decoder function $D$ at the master. 
	In this paper, we propose low-complexity hierarchical voting where both the encoder/decoder are majority voting functions.
	Under this suggested framework, 
	we focus on devising $\mtx{G}$ which tolerates Byzantine attacks on $b$ nodes. Although this paper focuses on applying codes for SignSGD, the proposed idea can easily be extended to multi-bit quantized SGD with a slight modification: instead of finding the majority value at the encoder/decoder, we take the average and then round it to the nearest quantization level.
	
	\vspace{-3mm}
	\section{Random Bernoulli codes }\label{Sec:Bernoulli}
	\vspace{-2mm}
	
	We first suggest random Bernoulli codes, where each node randomly selects each data partition with connection probability $p$ independently. Then, $\{G_{ij}\}$ are independent and identically distributed Bernoulli random variables with $G_{ij} \sim Bern(p)$. 
	The idea of random coding is popular in coding and information theory (see, for example, \cite{mackay2005fountain}) and has also been applied to computing distributed gradients \cite{charles2017approximate}, but random coding for tolerating Byzantines in the context of SignSGD is new and  
	requires unique analysis.
	Note that depending on the given Byzantine attack scenario, flexible code construction is possible by adjusting the connection probability $p$.  
	Before introducing theoretical results on random Bernoulli codes, we clarify the assumptions used for the mathematical analysis. 
	We assume several properties of loss function $f$ and gradient $g = \nabla f$, summarized as below. Assumptions~\ref{Assumption:lower_bound},~\ref{Assumption:smooth},~\ref{Assumption:variance_bound} are commonly used in proving the convergence of general non-convex optimization,
	and Assumption~\ref{Assumption:symmtric_unimodal} is used and justified in previous works on \textsc{SignSGD}~\cite{pmlr-v80-bernstein18a, bernstein2018signsgd_ICLR}. 

	\begin{assumption}\label{Assumption:lower_bound}
		For arbitrary $x$, the loss value is bounded as $f(x) \geq f^{\star}$ for some constant $f^{\star}$. 
	\end{assumption}
	
	\begin{assumption}\label{Assumption:smooth}
		For arbitrary $x,y$, there exists a vector of Lipschitz constants $\vct{L} = [L^{(1)}, \cdots, L^{(d)}]$ satisfying
		$|f(y)-[f(x)+g(x)^{T}(y-x)]| \leq \frac{1}{2} \sum_{i} L^{(i)}(y^{(i)}-x^{(i)})^{2}$.
	\end{assumption}

	\begin{assumption}\label{Assumption:variance_bound}
		The output of the stochastic gradient oracle is the unbiased estimate on the ground-truth gradient, where the variance of the estimate is bounded at each coordinate. In other words, for arbitrary data point $x$, we have
		$\mathds{E}[\tilde{g}(x)] = g$  and $\mathds{E}[\left(\tilde{g}(x) - g \right) ^2 ] \leq \sigma^2.$
	\end{assumption}

	\begin{assumption}
		\label{Assumption:symmtric_unimodal}
		The component of the gradient noise for data point $x$, denoted by  $ \tilde{g} (x) - g$, has a unimodal distribution that is symmetric about the zero mean, for all coordinates and all $x \sim \mathcal{D}$.
	\end{assumption}

	\vspace{-2mm}
	\subsection{Estimation error bound}
	\vspace{-2mm}
	In this section, we measure the probability of the PS correctly estimating the sign of the true gradient, 
	under the scenario of applying suggested random Bernoulli codes. To be specific, we compare $\sign(g)$, the sign of the true gradient, and $\hat{\mu}$, the output of the decoder at the master. Before stating the first main theorem which provides an upper bound on the estimation error $\PP{ \hat{\mu} \neq \sign(g)} $, we start from finding the statistics of the mini-batch gradient $\tilde{g}_j$ obtained from data partition $D_j$, the proof of which is in
	Supplementary Material.
	\begin{prop}[Distribution of batch gradient]\label{Lemma:symm_unimodal_partition}
		The mini-batch gradient $\tilde{g}_j$ for data partition $D_j$ follows a unimodal distribution that is symmetric around the mean $g$. The mean and variance of  $\tilde{g}_j$ are given as 
		$\mathds{E} [ \tilde{g}_j ]= g,$ and $
		\textrm{var}(\tilde{g}_j) = \mathds{E} [ ( \tilde{g}_j - g)^2 ]  \leq \bar{\sigma}^2 \coloneqq  \sigma^2/B.$
	\end{prop}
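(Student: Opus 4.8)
The plan is to write the batch gradient as a normalized sum of independent per-sample gradient estimates and to handle the four claims (mean, variance, symmetry, unimodality) in turn. Writing $\tilde{g}_j = \frac{1}{B}\sum_{x \in \mathcal{B}_j} \tilde{g}(x)$, I would introduce the centered per-sample noise $Z_x := \tilde{g}(x) - g$ for $x \in \mathcal{B}_j$. By Assumption~\ref{Assumption:variance_bound} each $Z_x$ has zero mean and variance at most $\sigma^2$, by Assumption~\ref{Assumption:symmtric_unimodal} each $Z_x$ follows a symmetric unimodal law about $0$, and since the $B$ points of the mini-batch are drawn independently from the oracle, the $Z_x$ are mutually independent. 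The mean claim is then immediate from linearity of expectation, $\mathbb{E}[\tilde{g}_j] = \frac{1}{B}\sum_{x}\mathbb{E}[\tilde{g}(x)] = g$, and the variance claim follows from independence, $\mathrm{var}(\tilde{g}_j) = \frac{1}{B^2}\sum_{x}\mathrm{var}(\tilde{g}(x)) \leq \frac{1}{B^2}\cdot B \cdot \sigma^2 = \sigma^2/B = \bar{\sigma}^2$.

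For symmetry, I would work with $\tilde{g}_j - g = \frac{1}{B}\sum_{x} Z_x$ and use that a sum of independent random variables each symmetric about $0$ is again symmetric about $0$. Concretely, $-\sum_{x} Z_x = \sum_{x}(-Z_x)$, and since each $-Z_x \stackrel{d}{=} Z_x$ with independence preserved under flipping the signs of the individual coordinates, the vector $(-Z_x)_{x}$ has the same joint law as $(Z_x)_{x}$; hence $\sum_x Z_x$ and $-\sum_x Z_x$ are identically distributed. Scaling by $1/B$ preserves this, so $\tilde{g}_j - g$ is symmetric about $0$, i.e.\ $\tilde{g}_j$ is symmetric about $g$.

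The main obstacle is unimodality, because unimodality is \emph{not} preserved under convolution in general; symmetry is what rescues the argument. I would invoke the classical fact (Wintner's theorem) that the convolution of two symmetric unimodal distributions on $\mathbb{R}$ is again symmetric and unimodal about $0$. The cleanest justification uses Khinchine's representation: a law is symmetric unimodal about $0$ if and only if its density is a mixture of uniform densities on intervals $[-z,z]$, $z \geq 0$. Under this representation the convolution of two such laws is a mixture, over the pairs of interval lengths, of convolutions of symmetric uniform densities; each such convolution is a trapezoidal (or triangular) density that is symmetric and non-increasing on $[0,\infty)$, and a nonnegative mixture of densities that are all symmetric and non-increasing on $[0,\infty)$ is itself symmetric and non-increasing on $[0,\infty)$, hence symmetric unimodal about $0$. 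Applying this inductively across the $B$ independent terms shows $\sum_{x} Z_x$ is symmetric unimodal about $0$; scaling by $1/B$ preserves this, so $\tilde{g}_j - g$, and therefore $\tilde{g}_j$ about its mean $g$, is symmetric unimodal. Combining the four parts completes the proof, with the unimodality-under-convolution step being the only nonroutine ingredient.
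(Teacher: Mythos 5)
Your proposal is correct and follows essentially the same route as the paper: both write $\tilde{g}_j - g$ as a normalized sum of independent, zero-mean, symmetric unimodal per-sample noises, get the mean and variance from linearity and independence, and settle unimodality via the fact that convolution of symmetric unimodal laws about zero is again symmetric unimodal. The only difference is that the paper cites this key fact directly (Theorem 2.1 of Purkayastha, 1998, its Lemma on convolution of unimodal distributions), whereas you identify it as Wintner's theorem and additionally sketch its justification through Khinchine's mixture-of-uniforms representation.
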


	\begin{definition}\label{Def:S_k}
		Define  $S^{(k)} = \left|g^{(k)}\right|/\bar{\sigma}^{(k)}$ as the signal-to-noise ratio (SNR) of the stochastic gradient observed at each mini-batch, for coordinate $k \in [d]$.
	\end{definition}

	Now, we measure the estimation error of random Bernoulli codes. Recall that we consider a hierarchical voting system, where each node makes a local decision on the sign of gradient,
	and the PS makes a global decision by aggregating the local decisions of distributed nodes. 
	We first find the local estimation error $q$ of a Byzantine-free (or intact) node $i$ as below, which is proven in %
	Supplementary Material.
	\begin{lemma}[Impact of local majority voting]\label{Thm:q_k_updated}
		Suppose $p \geq p^{\star} = 2  \sqrt{\frac{C \log(n)}{n} }$ for some $C>0$. %
		Then, the estimation error probability of a Byzantine-free node $i$ on the sign of gradient along a given coordinate is bounded:
		\vspace{-1mm}
		\begin{align}\label{Eqn:q_k}
		q = \PP{c_{i} \neq \sign(g)}   
		& \leq q^{\star} \coloneqq  2 \cdot \max  \big\{   \frac{2}{n^{2C}},  {\rm e}^{- \sqrt{ C n \log(n)} \frac{S^2}{2(S^2+4)}} \big\},
		\end{align}
	\end{lemma}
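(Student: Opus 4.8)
The plan is to trace the error through the two stages of the hierarchical vote: first quantify how often a single mini-batch produces the wrong sign, then control the majority vote taken by node $i$ over the random set $P_i$ of partitions it observes. Assume without loss of generality that $\sign(g)=+1$ (the case $g<0$ is identical by symmetry). By Proposition~\ref{Lemma:symm_unimodal_partition} the batch gradients $\tilde g_j$ are i.i.d., symmetric and unimodal about $g$ with variance at most $\bar\sigma^2$, so the signs $m_j=\sign(\tilde g_j)$ are i.i.d.\ with common wrong-sign probability $r \coloneqq \PP{m_j \neq +1} = \PP{\tilde g_j \le 0} < \tfrac12$. Writing $K = \lvert P_i\rvert \sim \mathrm{Bin}(n,p)$ for the number of partitions node $i$ observes (independent of the signs), the encoder output $c_i$ is the majority of $K$ i.i.d.\ signs, so $c_i \neq +1$ exactly when at least half of these signs are wrong. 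I would then bound $q=\PP{c_i\neq+1}$ by a union bound that separates a purely combinatorial event from a statistical one: fix a threshold $k_0$ slightly below the mean $np$ and write $q \le \PP{K < k_0} + \PP{\text{node }i\text{ errs and } K\ge k_0}$. The first probability does not involve the gradient statistics and will furnish the $\tfrac{2}{n^{2C}}$ term; the second will furnish the $S$-dependent term.

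The crux is the single-partition estimate, i.e.\ a quantitative lower bound on $1-2r = \PP{\lvert \tilde g_j - g\rvert < \lvert g\rvert}$ in terms of the SNR $S=\lvert g\rvert/\bar\sigma$. This is exactly where Assumption~\ref{Assumption:symmtric_unimodal} is essential: I would invoke Gauss's inequality for unimodal distributions, which gives $\PP{\lvert \tilde g_j - g\rvert \ge \lvert g\rvert} \le \tfrac{4}{9S^2}$ in the high-SNR regime $S \ge 2/\sqrt3$ and $\PP{\lvert \tilde g_j - g\rvert \ge \lvert g\rvert} \le 1 - S/\sqrt3$ in the low-SNR regime. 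Both bounds are increasing in the true variance, so replacing it by its upper bound $\bar\sigma^2$ (equivalently, using $S$) keeps them valid. Checking the two regimes separately, these imply the single clean inequality $(1-2r)^2 \ge \frac{S^2}{2(S^2+4)}$, which is the form that propagates to the final exponent. Establishing this unifying inequality across both SNR regimes — in particular pinning down the constant so that the $S^2+4$ denominator emerges as a simultaneous lower bound for $S^2/3$ (low SNR) and $(1-\tfrac{4}{9S^2})^2$ (high SNR) — is the step I expect to require the most care.

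With the single-partition bound in hand the two concentration steps are routine. For the second term, conditioned on $K=k$ the number of wrong signs is $\mathrm{Bin}(k,r)$ with $r<\tfrac12$, so Hoeffding's inequality gives $\PP{\mathrm{Bin}(k,r)\ge k/2} \le e^{-2k(1/2-r)^2} = e^{-k(1-2r)^2/2} \le e^{-k S^2/(4(S^2+4))}$; since this bound is decreasing in $k$, summing it against the law of $K$ over $k\ge k_0$ leaves at most $e^{-k_0 S^2/(4(S^2+4))}$. For the first term I would take $k_0 = (1-\sqrt p)\,np$ and apply the Chernoff lower tail $\PP{K < (1-\delta)np} \le e^{-np\delta^2/2}$ with $\delta=\sqrt p$, so that $\PP{K<k_0} \le e^{-np^2/2}$; the hypothesis $p\ge p^\star = 2\sqrt{C\log(n)/n}$ makes $np^2\ge 4C\log n$ and hence this probability at most $n^{-2C}$. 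The same hypothesis gives $np \ge np^\star = 2\sqrt{Cn\log n}$, which turns $e^{-k_0 S^2/(4(S^2+4))}$ into $e^{-\sqrt{Cn\log n}\, S^2/(2(S^2+4))}$. Adding the two contributions and using $a+b \le 2\max\{a,b\}$ yields the stated $q^\star$.

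The one delicate point I would flag is the threshold balancing: the factor $(1-\sqrt p)$ in $k_0$ costs a little in the second exponent and the Chernoff constant accounts for the factor $2$ in $\tfrac{2}{n^{2C}}$, so to recover the clean rate I would either absorb these into constants or state the bound in the large-$n$ regime where $\sqrt p \to 0$. Ties in the majority (even $K$, including the degenerate $K=0$) are handled by charging them to the error event, which only weakens the constants and does not affect the exponential rates.
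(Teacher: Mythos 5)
Your overall architecture is the same as the paper's: a single-partition sign-error bound derived from the symmetric-unimodal assumption (the paper invokes Lemma D.1 of the SignSGD paper, which is exactly the Gauss-inequality computation you describe), a conditional Hoeffding bound for the within-node majority given $K=n_i$ partitions, monotonicity of that bound in $k$, binomial concentration for $K$, and finally $a+b \leq 2\max\{a,b\}$. Your handling of ties and of $K=0$ is also fine. The gap is quantitative, and you flag it yourself but dismiss it too quickly: as written, your argument proves $q \leq n^{-2C} + e^{-(1-\sqrt p)\sqrt{Cn\log n}\,S^2/(2(S^2+4))}$, not the stated bound. The factor $(1-\sqrt p)$ sits \emph{inside} the exponent, so it cannot be ``absorbed into constants'' --- the lemma has no free constant to absorb it into, and it degrades the rate itself (for $p$ near $1$ your second term is near $1$, and even for moderate $p$ your bound is polynomially weaker than $q^\star$); restricting to large $n$ changes the statement. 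The root cause is how you spend the factor $2$ in the hypothesis $p \geq p^\star = 2\sqrt{C\log(n)/n}$: you weaken the per-partition inequality to $(1-2r)^2 \geq S^2/(2(S^2+4))$, which costs a factor $2$ in the conditional exponent, and then you use $np \geq 2\sqrt{Cn\log n}$ to repay that debt --- leaving nothing to pay for the multiplicative slack $(1-\sqrt p)$ in your threshold $k_0$.

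Both deficits disappear if you tighten two steps, and this is exactly what the paper does. First, the stronger inequality $(1-2r)^2 \geq \frac{S^2}{S^2+4}$ (equivalently $\frac{1}{4(1/2-r)^2}-1 \leq \frac{4}{S^2}$) does hold in both SNR regimes --- the step you expected to require the most care: for $S \leq 2/\sqrt 3$ it follows from $(1-2r)^2 \geq S^2/3$ together with $S^2+4 \geq 3$; for $S > 2/\sqrt 3$ one checks $\frac{1}{(1-\frac{4}{9S^2})^2} - 1 \leq \frac{1}{S^2}\cdot\frac{8/9}{1-\frac{8}{9S^2}} < \frac{4}{S^2}$, the last inequality using $S^2 > 4/3 > 8/7$. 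This yields the conditional bound $\PP{c_i \neq \sign(g) \mid K=k} \leq e^{-kS^2/(2(S^2+4))}$ with the full constant. Second, choose the threshold additively rather than multiplicatively: $k_0 = n(p-\epsilon)$ with $\epsilon = \sqrt{C\log(n)/n} = p^\star/2$. One-sided Hoeffding then gives $\PP{K < k_0} \leq e^{-2n\epsilon^2} = n^{-2C} \leq 2/n^{2C}$, while the hypothesis $p \geq 2\epsilon$ gives $k_0 \geq n\epsilon = \sqrt{Cn\log n}$, so the second term is exactly $e^{-\sqrt{Cn\log n}\,S^2/(2(S^2+4))}$. Summing and applying $a+b \leq 2\max\{a,b\}$ yields the stated $q^\star$ with no residual slack; with these two repairs your outline coincides with the paper's proof.
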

	\vspace{-2mm}
	where the superscripts on $g$ and $S$ that point to the coordinate are dropped to ease the nontation.	
	As the connection probability $p$ increases, we can take a larger $C$, thus the \emph{local} estimation error bound decreases. This makes sense because each node tends to make a correct decision as the number of observed data partitions increases.

	Now, consider the scenario with $b=n\alpha$ Byzantine nodes and $n(1-\alpha)$ intact nodes. 
	From \eqref{Eqn:y_j}, an intact node $i$ sends $y_{i} = c_{i}$ for a given coordinate. Suppose all Byzantine nodes send the reverse of the true gradient, i.e., $y_{i} = - \sign(g)$, corresponding to the worst-case scenario which maximizes the global estimation error probability $P_{\textrm{global}} = 	\PP{ \hat{\mu} \neq \sign(g)} $.
	Under this setting, the global error probability at PS is bounded as below,
	the proof of which is given in
	Supplementary Material.
	
	\begin{theorem}[Estimation error at master]\label{Thm:global_error_majority}
		Consider the scenario with connection probability $p = \Theta(\sqrt{\log(n)/n})$ as in Lemma~\ref{Thm:q_k_updated}. Suppose the portion of Byzantine-free nodes satisfies 
		\begin{align}\label{Eqn:condition_portion}
		1-\alpha > 
		\frac{(\sqrt{\log(\Delta)/n} + \sqrt{\log(\Delta)/n + 4 u_{\textrm{min}}^{\star}})^2}{8 (u_{\textrm{min}}^{\star})^2} 
		\end{align}
		for some $\Delta > 2$,
		where $u_{\textrm{min}}^{\star} = 1 - \max\limits_{k } \{ q^{(k)\star}\}$ is a lower bound on the local estimation success probability.
		Then, for each coordinate, the global estimation error at PS is bounded as
		\begin{align*}
		P_{\textrm{global}} = 	\PP{ \hat{\mu} \neq \sign(g)}   
		< 1/\Delta. %
		\end{align*}
	\end{theorem}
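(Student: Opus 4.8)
The plan is to reduce the global error to a one-sided tail bound for a sum of independent indicators and then apply Hoeffding's inequality, with the somewhat elaborate threshold in \eqref{Eqn:condition_portion} emerging from solving the resulting inequality for $1-\alpha$. By the symmetry built into the majority decoder we may assume without loss of generality that $\sign(g)=+1$. With this convention every Byzantine node transmits $-1$, while an intact node $i$ transmits $c_i=+1$ precisely when its local vote is correct, which by Lemma~\ref{Thm:q_k_updated} happens with probability $1-q\ge 1-q^{\star}\ge u_{\textrm{min}}^{\star}$.

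Next I would set up the counting. Let $N=n(1-\alpha)$ be the number of intact nodes and let $X$ denote the number of them that transmit $+1$. Since all $n\alpha$ Byzantine nodes contribute $-1$, the master sees exactly $X$ votes equal to $+1$ and $n-X$ votes equal to $-1$; hence the majority decoder errs iff $n-X>X$, i.e. iff $X<n/2$ (ties, which only occur for even $n$, are absorbed by using $P_{\textrm{global}}\le\PP{X\le n/2}$). By linearity of expectation $\EE{X}=\sum_{i}(1-q_i)\ge N u_{\textrm{min}}^{\star}$, so with $s\coloneqq \EE{X}-n/2\ge n\big((1-\alpha)u_{\textrm{min}}^{\star}-\tfrac12\big)$ we obtain $P_{\textrm{global}}\le\PP{X\le \EE{X}-s}$.

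Treating the $N$ intact decisions as independent $\{0,1\}$ variables, Hoeffding's inequality gives $\PP{X\le\EE{X}-s}\le\exp(-2s^2/N)$, so it suffices to force $2s^2/N\ge\log\Delta$, i.e. $n(1-\alpha)u_{\textrm{min}}^{\star}-n/2\ge\sqrt{N\log(\Delta)/2}$. Dividing by $n$ and writing $t=\sqrt{\log(\Delta)/n}$, this becomes the quadratic inequality $(1-\alpha)u_{\textrm{min}}^{\star}-\tfrac12\ge t\sqrt{(1-\alpha)/2}$ in the variable $\sqrt{1-\alpha}$; taking its larger root and using $t/\sqrt2+\sqrt{t^2/2+2u_{\textrm{min}}^{\star}}=\tfrac1{\sqrt2}\big(t+\sqrt{t^2+4u_{\textrm{min}}^{\star}}\big)$ reproduces exactly the stated condition \eqref{Eqn:condition_portion}, at which point $\exp(-2s^2/N)\le 1/\Delta$ and the theorem follows. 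I should note that the condition already forces $(1-\alpha)u_{\textrm{min}}^{\star}>\tfrac12$ (its $n\to\infty$ limit), guaranteeing $s>0$ so that the tail bound is applied in its valid regime.

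The step I expect to be the main obstacle is justifying the independence (equivalently, the concentration) of the intact nodes' local decisions, since all of them are computed from the \emph{same} partition signs $\vct{m}=[m_1,\dots,m_n]$ and thus are not unconditionally independent. The clean way to handle this is to condition on $\vct{m}$: given $\vct{m}$, each vote $c_i=\maj(\{m_j:G_{ij}=1\})$ is a function of the $i$-th row of $\mtx{G}$ alone, and since the rows of the Bernoulli matrix are mutually independent, the conditional decisions are genuinely independent and altering one row moves $X$ by at most one. Applying the tail bound conditionally and then averaging over $\vct{m}$ — together with the local guarantee of Lemma~\ref{Thm:q_k_updated}, which already integrates out the gradient noise to yield the marginal success probability $u_{\textrm{min}}^{\star}$ — is what legitimizes the use of Hoeffding above, while the remaining algebra leading to \eqref{Eqn:condition_portion} is mechanical.
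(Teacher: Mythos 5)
Your core argument is, step for step, the paper's own proof: reduce to the worst case in which every Byzantine sends $-\sign(g)$, count the intact nodes voting correctly, apply the one-sided Hoeffding bound to that count, and solve the resulting quadratic in $\sqrt{2(1-\alpha)}$ to see that \eqref{Eqn:condition_portion} forces the deviation parameter $\epsilon = u_{\textrm{min}}^{\star} - \frac{1}{2(1-\alpha)}$ to exceed $\sqrt{\log(\Delta)/(2(1-\alpha)n)}$, whence $P_{\textrm{global}} \leq {\rm e}^{-2\epsilon^2 n(1-\alpha)} < 1/\Delta$. Your $s$, $N$ and $2s^2/N \geq \log \Delta$ are exactly the paper's $\epsilon_k$, $n(1-\alpha)$ and ${\rm e}^{-2\epsilon_k^2 n(1-\alpha)} \leq 1/\Delta$, and the paper, like the main body of your argument, simply declares the count to be binomial, $X_{\textrm{global}} \sim \mathcal{B}(n(1-\alpha), u_k)$, i.e.\ it assumes the intact nodes' votes are independent with the marginal success probability and never revisits that assumption.

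The genuine gap is in the step you yourself single out as what ``legitimizes'' the whole argument: the conditioning repair. Conditioning on $\vct{m}$ does make the votes $c_i$ i.i.d.\ across $i$ (the rows of $\mtx{G}$ are independent), but their conditional success probability is then the random variable $u(\vct{m}) = \PP{c_i = \sign(g) \mid \vct{m}}$, not $u_{\textrm{min}}^{\star}$. Lemma~\ref{Thm:q_k_updated} controls only the expectation, $\EE{u(\vct{m})} = 1-q \geq u_{\textrm{min}}^{\star}$, and says nothing about the lower tail of $u(\vct{m})$. Applying Hoeffding conditionally and then averaging over $\vct{m}$ does \emph{not} reproduce the unconditional bound at the mean parameter: on the event $u(\vct{m}) \leq \frac{1}{2(1-\alpha)}$ the conditional tail bound is vacuous, and a mixture of binomials can have a far heavier lower tail than the binomial at the mean (if $u(\vct{m})$ were $1$ with probability $0.9$ and $0$ with probability $0.1$, the mean is $0.9$ yet $\PP{X \leq n/2} \geq 0.1$). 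A correct repair must additionally show that $u(\vct{m})$ itself concentrates --- e.g.\ apply Hoeffding over the gradient noise to $W = \lvert \{ j : m_j = \sign(g) \} \rvert$, note that $u(\vct{m})$ is a monotone function of $W$ alone by exchangeability of the Bernoulli rows, and deduce that $u(\vct{m})$ exceeds $\frac{1}{2(1-\alpha)}$ by a margin except on an event of probability you then add to the final bound. That quenched argument is an extra piece of work, not a mechanical averaging of Lemma~\ref{Thm:q_k_updated}; since your proposal explicitly rests on this step and the paper silently bypasses it, the step as written fails.
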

	This theorem specifies the sufficient condition on the portion of Byzantine-free nodes that allows the global estimation error smaller than $1/\Delta$.
	Suppose $n$ and $p$ are given, thus $u_{\textrm{min}}^{\star}$ is fixed. As $\Delta$ increases, the right-hand-side of~\eqref{Eqn:condition_portion} also increases. 
	This implies that to have a smaller error bound (i.e., larger $\Delta$), the portion of Byzantine-free nodes needs to be increased, which makes sense.%

	\begin{remark}\label{Rmk}
		In the asymptotic regime of large $n$, the condition \eqref{Eqn:condition_portion} reduces to $1-\alpha > \frac{1}{2u_{\textrm{min}}^{\star}} \rightarrow \frac{1}{2}$ for arbitrary $\Delta$, since $q^{\star} \rightarrow 0$ as shown in~\eqref{Eqn:q_k}. 
		This implies that the global estimation error vanishes to zero even in the extreme case of having maximum Byzantine nodes $b=n/2$, provided that the number of nodes $n$ is in the asymptotic regime and the connection probability satisfies $p = \Theta(\sqrt{\log(n)/n})$. 
	\end{remark}
	
	Remark~\ref{Rmk} states the behavior of global error bound for asymptotically large $n$. In a more practical setup with, say, a Byzantines portion of $\alpha=0.2$, the batch size of $B=128$, the connection factor of $C=1$, and the SNR of $\lvert g \rvert / \sigma = 1$, the global error is bounded as $P_{\textrm{global}} < 0.01$ given $n \geq 40$.

	\subsection{Convergence analysis}
	The convergence of the suggested scheme can be formally stated as follows:
	
	\begin{theorem}[Convergence of the Bernoulli-coded \textsc{SignSGD-MV}]\label{Thm:Convergence}
		Suppose that Assumptions~\ref{Assumption:lower_bound},~\ref{Assumption:smooth},~\ref{Assumption:variance_bound},~\ref{Assumption:symmtric_unimodal} hold and the portion of Byzantine-free nodes satisfies~\eqref{Eqn:condition_portion} for $\Delta>2$.
		Apply the random Bernoulli codes with connection probability $p = \Theta(\sqrt{\log(n)/n})$ on SignSGD-MV, and run SignSGD-MV for $T$ steps with an initial model $\vct{w}_0$. Define the learning rate as $\upgamma(T)=\sqrt{\frac{f(\vct{w}_{0})-f^{*}}{\|\vct{L}\|_{1} T}}$.
		Then, the suggested scheme converges as $T$ increases in the sense
		\vspace{-2mm}
		\begin{align*}
		\frac{1}{T} \sum_{t=0}^{T-1} \EE{\left\|g(\vct{w}_{t})\right\|_{1}} \leq 
		\frac{3 \lVert \vct{L} \rVert_1}{2 (1-2/\Delta)} \upgamma(T) \rightarrow 0 \textrm{ \quad as \  } T \rightarrow \infty.
		\end{align*}
	\end{theorem}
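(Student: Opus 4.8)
The plan is to follow the standard SignSGD descent-lemma argument, with the Byzantine-coding content entering only through the per-coordinate error bound of Theorem~\ref{Thm:global_error_majority}. First I would apply the smoothness Assumption~\ref{Assumption:smooth} to the consecutive iterates $\vct{w}_t$ and $\vct{w}_{t+1} = \vct{w}_t - \upgamma \hat{\vct{\mu}}_t$, where $\hat{\vct{\mu}}_t = [\hat\mu^{(1)}, \dots, \hat\mu^{(d)}] \in \{+1,-1\}^d$ is the coordinatewise decoded sign vector. Since $w_{t+1}^{(k)} - w_t^{(k)} = -\upgamma \hat\mu^{(k)}$ and $(\hat\mu^{(k)})^2 = 1$, and the Lipschitz constants are nonnegative so that $\sum_i L^{(i)} = \|\vct{L}\|_1$, this gives
\begin{equation*}
f(\vct{w}_{t+1}) - f(\vct{w}_t) \leq -\upgamma \sum_{k} g^{(k)}(\vct{w}_t)\, \hat\mu^{(k)} + \frac{\upgamma^2}{2} \|\vct{L}\|_1 .
\end{equation*}

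Next I would use the sign identity $-g^{(k)}\hat\mu^{(k)} = -|g^{(k)}| + 2|g^{(k)}|\, \One{\hat\mu^{(k)} \neq \sign(g^{(k)})}$, which holds because $\hat\mu^{(k)} \in \{+1,-1\}$. Taking the conditional expectation given $\vct{w}_t$ and invoking $P_{\textrm{global}}^{(k)} = \PP{\hat\mu^{(k)} \neq \sign(g^{(k)})} < 1/\Delta$ from Theorem~\ref{Thm:global_error_majority} coordinatewise yields
\begin{equation*}
\EE{f(\vct{w}_{t+1}) - f(\vct{w}_t) \mid \vct{w}_t} \leq -\upgamma\left(1 - \frac{2}{\Delta}\right)\|g(\vct{w}_t)\|_1 + \frac{\upgamma^2}{2}\|\vct{L}\|_1 .
\end{equation*}
I would then take total expectation, rearrange, and telescope over $t = 0, \dots, T-1$, using Assumption~\ref{Assumption:lower_bound} to replace $\EE{f(\vct{w}_T)}$ by the lower bound $f^\star$:
\begin{equation*}
\upgamma\left(1 - \frac{2}{\Delta}\right)\sum_{t=0}^{T-1}\EE{\|g(\vct{w}_t)\|_1} \leq f(\vct{w}_0) - f^\star + \frac{T\upgamma^2}{2}\|\vct{L}\|_1 .
\end{equation*}
Dividing by $\upgamma(1-2/\Delta)T$ and substituting $\upgamma(T) = \sqrt{(f(\vct{w}_0)-f^\star)/(\|\vct{L}\|_1 T)}$ makes the two right-hand terms coincide, collapsing into $\frac{f(\vct{w}_0)-f^\star}{\upgamma T} + \frac{\upgamma}{2}\|\vct{L}\|_1 = \frac{3}{2}\|\vct{L}\|_1\, \upgamma(T)$, which delivers the claimed factor $\frac{3\|\vct{L}\|_1}{2(1-2/\Delta)}\upgamma(T) = \Theta(1/\sqrt{T}) \to 0$.

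The main obstacle I anticipate is justifying that the coordinatewise bound $P_{\textrm{global}}^{(k)} < 1/\Delta$ can be applied uniformly at every iteration $t$ and every coordinate $k$. Theorem~\ref{Thm:global_error_majority} presumes condition~\eqref{Eqn:condition_portion}, whose right-hand side depends on $u_{\textrm{min}}^{\star}$ and hence, through $q^{(k)\star}$, on the per-coordinate SNR $S^{(k)} = |g^{(k)}|/\bar\sigma^{(k)}$; along the trajectory the gradient magnitudes vary, and for coordinates with small $|g^{(k)}(\vct{w}_t)|$ the local bound $q^{(k)\star}$ degrades, potentially jeopardizing~\eqref{Eqn:condition_portion}. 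I would resolve this either by reading~\eqref{Eqn:condition_portion} as a requirement on the worst-case (smallest-SNR) coordinate, so the $1/\Delta$ bound holds throughout, or by noting that only the product $|g^{(k)}|\,P_{\textrm{global}}^{(k)}$ enters the descent inequality, so a vanishing SNR contributes a correspondingly small penalty. The remaining pieces -- the descent lemma, the sign decomposition, the telescoping, and the algebraic simplification after substituting $\upgamma(T)$ -- are routine and independent of the coding construction.
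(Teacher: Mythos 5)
Your proposal follows essentially the same route as the paper's proof: the smoothness descent lemma, the sign-error decomposition $-g_k\hat\mu_k = -\lvert g_k\rvert + 2\lvert g_k\rvert\,\One{\hat\mu_k \neq \sign(g_k)}$, insertion of the per-coordinate bound $\PP{\hat\mu_k \neq \sign(g_k)} < 1/\Delta$ from Theorem~\ref{Thm:global_error_majority}, telescoping over $t$ with Assumption~\ref{Assumption:lower_bound}, and substitution of $\upgamma(T)$, yielding the identical constant $\frac{3\lVert\vct{L}\rVert_1}{2(1-2/\Delta)}$. The subtlety you flag --- applying Theorem~\ref{Thm:global_error_majority} uniformly across coordinates and iterations despite the SNR-dependence of condition~\eqref{Eqn:condition_portion} --- is equally present and unaddressed in the paper's own proof, which silently assumes the condition holds along the whole trajectory, so your worst-case-coordinate reading matches the paper's implicit stance rather than constituting a genuine departure.
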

	\vspace{-2mm}
	A full proof is given in
	Supplementary Material. Theorem~\ref{Thm:Convergence} shows that the average of the absolute value of the gradient $g(\vct{w}_t)$ converges to zero, meaning that the gradient itself becomes zero eventually. This implies that the algorithm converges to a stationary point.
	
	\subsection{\bmt{Few remarks regarding random Bernoulli codes}}

	\begin{remark}
		\bmt{We have assumed that the number of data partitions $k$ is equal to the number of workers $n$. 	
			What if $n \neq k$? In this case, the required connection probability for Byzantine tolerance in Lemma 1 behaves as $p \sim \sqrt{\log(k)/k}$. Since the computational load at each worker is proportional to $p$, the load can be lowered by increasing $k$. However, the local majority vote becomes less accurate with increasing $k$ since each worker uses fewer data points, i.e., parameter $S$ in Definition~\ref{Def:S_k} deteriorates. 
		} 
	\end{remark}

	\begin{remark}
		\bmt{What is the minimal code redundancy while providing (asymptotically) perfect Byzantine tolerance?
			From the proof of Lemma~\ref{Lemma:symm_unimodal_partition}, we see that
			if the connection probability (which is proportional to redundancy) can be expressed as $p = f(k)/\sqrt{k}$ with $f(k)$ being a growing function of $k$,
			the number of data partition, then the random codes achieve
			full Byzantine protection (in the asymptotes of $k$).
			Our choice $p \sim \sqrt{\log(k)/k}$ meets this, although it is not proven whether this represents minimal redundancy.
		}
	\end{remark}
	
	\begin{remark}
		\bmt{The worker loads of random Bernoulli codes are non-uniform, but the load distribution tends to uniform as $n$ grows, due to concentration property of binomial distribution.}%
	\end{remark}

	\begin{algorithm}[t]
		\small
		\caption{\small Data allocation matrix $\mtx{G}$ satisfying perfect $b-$Byzantine tolerance
			($0 < b <\floor{n/2}$)
		}
		\label{Alg:DetCode}
		\begin{algorithmic}
			\STATE \textbf{Input:} Number of nodes $n$, number of Byzantine nodes $b$.  \STATE \textbf{Output:} Data allocation matrix $\mtx{G} \in \{0,1\}^{n \times n}$ that achieves perfect $b-$Byzantine tolerance.
			\STATE \textbf{Initialize:} Define $s = \frac{n-1}{2} -b$ and $L = \lfloor \frac{n-(2b+1)}{2(b+1)} \rfloor +1$. Initialize $\mtx{G}$ as the all-zero matrix.
			\STATE \textbf{Step 1:} Set the top left $s$-by-$s$ submatrix of $\mtx{G}$ as identity matrix, i.e., $\mtx{G}(1:s, 1:s) = \mtx{I}_s$.
			\STATE \textbf{Step 2:} Set the bottom $(n-s-L)$ rows as the all-one matrix, i.e., $\mtx{G}(s+L+1:n, :) = \mathbb{1}_{(n-s-L) \times n}$.
			\STATE \textbf{Step 3:} Fill in the matrix $\mtx{A} = \mtx{G}(s+1:s+L, s+1:n)$ as follows: Insert $2b+1$ ones on each row by shifting the location by $b+1$, i.e., $\mtx{A}(l, (l-1)(b+1) + (1:2b+1)) = \mathbb{1}_{1 \times (2b+1)}$ for $l=1,\cdots, L$.
		\end{algorithmic}
	\end{algorithm}

	\vspace{-3mm}
	\section{Deterministic codes for perfect Byzantine tolerance}\label{Sec:AlgCode}
	\vspace{-3mm}
	
	We now construct codes that guarantee perfect Byzantine tolerance. We say the system is \emph{perfect} $b$-Byzantine tolerant if $\PP{\hat{\mu} \neq \mu } = 0 $ holds on all coordinates\footnote{According to the assumption on the distribution of $\tilde{g}_j$, no codes satisfy $\PP{\hat{\mu} \neq \sign(g)} = 0$. Thus, we instead aim at designing codes that always satisfy $\hat{\mu} = \mu$; when such codes are applied to the system, the learning algorithm is guaranteed to achieve the ideal performance of SignSGD-MV~\cite{bernstein2018signsgd_ICLR} with no Byzantines.} under the attack of $b$ Byzantines. We devise codes that enable such systems. 
	Trivially, no probabilistic codes achieve this condition, and thus we focus on \emph{deterministic} codes where the entries of the allocation matrix $\mtx{G}$ are fixed.
	We use the notation $\mtx{G}(i,:)$ to represent $i^{th}$ row of $\mtx{G}$.
	We assume that the number of data partitions
	$\ZNorm{\mtx{G}(i,:)}$ assigned to an arbitrary node $i$ is an odd number, to avoid ambiguity of the majority function at the encoder. 
	We further map the sign values $\{+1, -1\}$ to binary values $\{1, 0\}$ for ease of presentation. 
	Define
	\begin{align}
	S_v (\vct{m}) \hspace{-.7mm} = \hspace{-.7mm} \{ i\in [n]& \hspace{-.4mm}: \hspace{-.4mm} \vct{m}^T \mtx{G}(i,:) \hspace{-.4mm} \geq \hspace{-.4mm} v,  
	\ZNorm{\mtx{G}(i,:)} = 2v-1   \}, \label{Eqn:J_v}
	\end{align}
	which is the set of nodes having at least $v$ partitions with $m_j=1$, out of $2v - 1$ allocated data partitions. Note that
	we have $c_i =  \maj( \{m_j\}_{j \in P_i} ) = 1$ iff $i \in S_v(\vct{m})$ holds for some $v$.

	Before providing the explicit code construction rule (i.e., data allocation matrix $\mtx{G}$), we state the necessary and sufficient condition on $\mtx{G}$ to achieve \emph{perfect} $b$-Byzantine tolerance.
	
	\begin{lemma}\label{Lemma:NSCondition}
		Consider using a data allocation matrix $\mtx{G}$. The system is \textit{perfect} $b-$Byzantine tolerant if and only if  $	\sum_{v=1}^{\floor{n/2}} \lvert S_v (\vct{m})
		\rvert  \leq \floor{\frac{n}{2}}-b$ for all vectors $\vct{m} \in \{0,1\}^n$ having weight $\ZNorm{\vct{m}}= \floor{n/2}$.
	\end{lemma}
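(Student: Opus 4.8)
The plan is to translate \emph{perfect $b$-Byzantine tolerance} into an equivalent counting condition on the intact encoder outputs, and then collapse the quantification over all $\vct{m}$ down to the single critical Hamming weight $\floor{n/2}$ by a monotonicity argument. Throughout I fix one coordinate, take $n$ odd so that every majority (both at the nodes, which have odd degree by assumption, and at the master) is tie-free, and write $N_1(\vct{m}) = \lvert \{ i : c_i = 1 \}\rvert$ for the number of intact nodes whose encoded bit is $1$. Since a node of degree $2v-1$ satisfies $c_i = \maj(\{m_j\}_{j\in P_i}) = 1$ exactly when at least $v$ of its partitions carry $m_j = 1$, the sets $S_v(\vct{m})$ are pairwise disjoint, and for every $\vct{m}$ with $\ZNorm{\vct{m}} \leq \floor{n/2}$ one has $\sum_{v=1}^{\floor{n/2}} \lvert S_v(\vct{m})\rvert = N_1(\vct{m})$: the only omitted degree class, the full-degree nodes ($v = (n+1)/2$), cannot output $1$ at such weights, so they contribute nothing. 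Hence the quantity appearing in the lemma is precisely $N_1(\vct{m})$.

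Next I would characterize the worst-case adversary. Without loss of generality take $\mu = 0$, i.e.\ $\ZNorm{\vct{m}} \leq \floor{n/2}$; the case $\mu = 1$ is handled at the end by the complementation $\vct{m} \mapsto \vct{1}-\vct{m}$, which flips every $c_i$ and therefore sends $N_1 \mapsto n - N_1$ and $\mu \mapsto 1-\mu$. Because the $b$ Byzantines may be placed on any nodes and may transmit either bit, the strongest attack flips $b$ zeros of $\vct{c}$ to ones, producing a received vector $\vct{y}$ with at most $N_1(\vct{m}) + b$ ones (the degenerate subcase of fewer than $b$ available zeros already forces $\maj(\vct{c})\neq\mu$, so it only helps the adversary). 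Using $n$ odd, $\maj(\vct{y}) = 0 = \mu$ for \emph{every} admissible attack if and only if $N_1(\vct{m}) + b \leq \floor{n/2}$, i.e.\ $N_1(\vct{m}) \leq \floor{n/2} - b$. This yields, for each individual $\vct{m}$ with $\mu = 0$, the equivalence between robustness at that input and the bound $\sum_v \lvert S_v(\vct{m})\rvert \leq \floor{n/2} - b$.

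The remaining step is to reduce the (infinitely many) inputs with $\mu = 0$ to those of weight exactly $\floor{n/2}$. The key observation is that $N_1(\vct{m})$ is monotone nondecreasing in the coordinatewise order on $\vct{m}$, since each local majority $c_i$ is a monotone Boolean function: turning some $m_j$ from $0$ to $1$ can only switch more $c_i$'s on. Consequently $\max_{\ZNorm{\vct{m}} \leq \floor{n/2}} N_1(\vct{m})$ is attained at weight exactly $\floor{n/2}$, so imposing the bound only on weight-$\floor{n/2}$ vectors automatically enforces it on all $\mu = 0$ vectors. Assembling the pieces gives both directions: necessity follows by specializing perfect tolerance to weight-$\floor{n/2}$ inputs and invoking the flip attack of the previous paragraph, while sufficiency runs the monotonicity extension to every $\mu=0$ input and then the complementation symmetry to every $\mu=1$ input.

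The step I expect to be the main obstacle is the adversary characterization in the second paragraph: making the ``for all attacks'' quantifier precise, verifying that the optimal attack really is ``flip $b$ zeros to ones'' (including the degenerate case where too few zeros exist), and carefully converting the strict-majority condition $\#\text{ones} < n/2$ into the exact integer inequality $\floor{n/2} - b$ via the parity of $n$. By contrast, once the counting identity $\sum_v \lvert S_v(\vct{m})\rvert = N_1(\vct{m})$ is established, the monotonicity reduction and the complementation symmetry are conceptually routine.
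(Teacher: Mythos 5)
Your proposal is correct and follows essentially the same route as the paper's proof: the worst-case attack characterization as flipping at most $b$ bits toward the wrong side, the self-duality of majority under complementation $\vct{m}\mapsto \vct{1}-\vct{m}$ to reduce the $\mu=1$ case to $\mu=0$, the counting identity $\sum_{v}\lvert S_v(\vct{m})\rvert = \lvert\{i: c_i=1\}\rvert$ (with the full-degree class contributing nothing at low weight), and monotonicity of the local majorities to collapse all weights $\leq \lfloor n/2 \rfloor$ to weight exactly $\lfloor n/2 \rfloor$ are precisely the paper's Propositions in its supplementary proof. Your explicit restriction to odd $n$ is harmless, since the paper's tie-breaking conventions and its deterministic construction effectively operate in the same regime.
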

	\vspace{-3mm}
	\begin{proof}
		The formal proof is in Supplementary Material, and here we just give an intuitive sketch. 
		Recall that the majority opinion is $\mu=0$ when $\vct{m}$ has weight $\ZNorm{\vct{m}} = \floor{n/2}$. Moreover, in the worst case attacks from $b$ Byzantines, the output $y_i$ and the computational result $c_i$ of node $i$ satisfy $n_0 \coloneqq \lvert \{ i: y_i = 1 \}\rvert = \lvert \{ i: c_i = 1 \}\rvert + b$. Since the estimate on the majority opinion is $\hat{\mu} = \maj\{y_1, \cdots, y_n \}$,  the sufficient and necessary condition for perfect Byzantine tolerance (i.e., $\hat{\mu} = \mu$) is $n_0 \leq \floor{n/2}$, or equivalently, $\lvert \{ i: c_i = 1 \}\rvert = \sum_{v=1}^{\floor{n/2}} \lvert S_v (\vct{m})
		\rvert  \leq \floor{n/2}-b$. 
	\end{proof}
	
	\vspace{-2mm}
	Based on Lemma~\ref{Lemma:NSCondition}, we can construct explicit matrices $\mtx{G}$ 
	that guarantee perfect $b-$Byzantine tolerance, under arbitrary $n,b$ settings. 
	The detailed code construction rule is given in Algorithm \ref{Alg:DetCode}, and the structure of the suggested allocation matrix $\mtx{G}$ is depicted in Fig. \ref{Fig:Generator_matrix}. %
	The following theorem states the main property of our code, which is proven in
	Supplementary Material.
	
	\begin{theorem}\label{Thm:upper_bound}
		The deterministic code given in Algorithm \ref{Alg:DetCode} satisfies perfect $b-$Byzantine tolerance for $0 < b < \floor{n/2}$, by utilizing a computational redundancy of 
		\begin{align} \label{Eqn:upper_bound}
		r &= \frac{n+(2b+1)}{2} - \left(\left\lfloor \frac{n-(2b+1)}{2(b+1)} \right\rfloor + \frac{1}{2}\right) \frac{n-(2b+1)}{n}.
		\end{align}
	\end{theorem}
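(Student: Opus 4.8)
The plan is to verify the two assertions separately: first that the matrix $\mtx{G}$ from Algorithm~\ref{Alg:DetCode} meets the necessary and sufficient condition of Lemma~\ref{Lemma:NSCondition}, and second that its weight yields the stated redundancy by direct counting. Throughout I use that the construction forces $n$ odd (so that $s=\frac{n-1}{2}-b$ is an integer and every row weight is odd, consistent with the standing assumption that each node gets an odd number of partitions); hence $\floor{n/2}=\frac{n-1}{2}$ and $\ZNorm{\vct{m}}=\frac{n-1}{2}=s+b$ for the vectors $\vct{m}$ appearing in Lemma~\ref{Lemma:NSCondition}.

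For the tolerance claim I would first classify the rows into the three blocks produced by the algorithm and compute what each contributes to $\sum_{v}\lvert S_v(\vct{m})\rvert$. The $s$ identity rows have weight $1$ (so $v=1$), and row $i\le s$ lies in $S_1(\vct{m})$ exactly when $m_i=1$; these contribute $a\coloneqq\lvert\{i\le s:m_i=1\}\rvert$. The all-ones rows have weight $n=2\cdot\tfrac{n+1}{2}-1$, so they could only lie in $S_{(n+1)/2}(\vct{m})$, but $\vct{m}^{T}\mtx{G}(i,:)=\ZNorm{\vct{m}}=\frac{n-1}{2}<\frac{n+1}{2}$, so they contribute nothing (indeed $\frac{n+1}{2}$ lies outside the summation range). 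The $L$ shifted rows have weight $2b+1$ (so $v=b+1$), and such a row lies in $S_{b+1}(\vct{m})$ iff its length-$(2b+1)$ window contains at least $b+1$ coordinates with $m_j=1$; call the number of such ``active'' rows $\ell$. Thus the condition of Lemma~\ref{Lemma:NSCondition} becomes $a+\ell\le\floor{n/2}-b=s$. Writing $w$ for the weight of $\vct{m}$ on columns $s+1,\dots,n$ (so $a+w=s+b$, and $w\ge b$ since $a\le s$), this is equivalent to the single inequality $\ell\le w-b$.

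The heart of the argument, and the step I expect to be the main obstacle, is proving $\ell\le w-b$ from the shift structure of $\mtx{A}$. I would exploit that consecutive windows are shifted by $b+1$ while having width $2b+1$, so two windows overlap in exactly $b$ columns when their indices differ by one and are disjoint otherwise. Ordering the active rows as $l_1<\dots<l_\ell$, I would show that each active window beyond the first donates at least one $1$ lying in no earlier window: window $l_j$ holds at least $b+1$ ones, shares at most $b$ columns (hence at most $b$ ones) with $l_{j-1}$, and is disjoint from every $l_{j'}$ with $j'<j-1$, so it contributes at least $(b+1)-b=1$ genuinely new one. Summing, the $w$ ones satisfy $w\ge(b+1)+(\ell-1)=\ell+b$, which is precisely $\ell\le w-b$; by Lemma~\ref{Lemma:NSCondition} this establishes perfect $b$-Byzantine tolerance. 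I would also confirm the construction is well posed, i.e.\ that the last window fits within the $n-s$ available columns, which follows from $L_0(b+1)+(2b+1)\le\frac{n+2b+1}{2}=n-s$ with $L_0=\floor{\frac{n-(2b+1)}{2(b+1)}}$.

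Finally, for the redundancy I would count ones block by block: the identity block gives $s$, the shifted block gives $L(2b+1)$, and the all-ones block gives $(n-s-L)\,n$, so $r=\ZNorm{\mtx{G}}/n=\frac{1}{n}\big[\,s+L(2b+1)+(n-s-L)n\,\big]$. Substituting $s=\frac{n-1}{2}-b$ and $L=L_0+1$ is then a routine simplification that collapses to \eqref{Eqn:upper_bound}; I would organize it by isolating the integer part $(n-s-L)$ together with the fractional terms $\frac{s}{n}$ and $\frac{L(2b+1)}{n}$, then matching against the expansion of $\big(L_0+\tfrac12\big)\frac{n-(2b+1)}{n}$.
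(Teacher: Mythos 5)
Your proposal is correct, and its outer structure matches the paper's: both reduce via Lemma~\ref{Lemma:NSCondition} and the three row-weight classes $1$, $2b+1$, $n$ to the single inequality $\lvert S_1(\vct{m})\rvert + \lvert S_{b+1}(\vct{m})\rvert \le s$, equivalently (in your notation) $\ell \le w-b$, and both finish with the same block-by-block count $r = \bigl(s + L(2b+1) + (n-s-L)n\bigr)/n$. However, your proof of the crucial inequality $\ell \le w - b$ is genuinely different from, and considerably more elementary than, the paper's. The paper treats it as an optimization problem: it defines $v_q^* = \min\{v(\vct{m}) : \lvert S_{b+1}(\vct{m})\rvert \ge q\}$ and, through two reduction lemmas --- a ``gathering'' process forcing $S_{b+1}(\vct{m})$ to be an interval of consecutive rows, then a further normalization (Algorithm~\ref{Alg:message_vector_overlapping}) clearing the ones at window-boundary positions --- restricts attention to canonical minimizers whose value is computed explicitly through assignment vectors $(a_0,\dots,a_\delta)$ subject to $a_l + a_{l+1} \ge b+1$, finally concluding $v_q^* \ge b+q$. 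You instead exploit the window geometry directly: consecutive windows of $\mtx{A}$ overlap in exactly $b$ columns and windows at distance at least two are disjoint, so each active window after the first, holding at least $b+1$ ones but sharing at most $b$ columns with all earlier active windows combined, contributes at least one genuinely new one; summing gives $w \ge (b+1) + (\ell-1) = \ell+b$ in a few lines. This union-style counting replaces the paper's multi-page message-vector surgery at no loss for the theorem itself; the only thing the paper's longer route buys is an (approximate) characterization of the extremal message vectors achieving the minimum, which the statement does not require. Your well-posedness check $L_0(b+1) + (2b+1) \le n-s$ and the closing algebra for $r$ coincide with the paper's.
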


	\begin{remark}[Performance/convergence guarantee]
		The code in Algorithm~\ref{Alg:DetCode} satisfies $\hat{\mu} = \mu$ for any realization of stochastic gradient %
		obtained from $n$ data partitions. Thus, our scheme achieves the ideal performance of SignSGD-MV with no Byzantines, regardless of the actual number of Byzantines in the system.
		Moreover, the convergence of our scheme is guaranteed from Theorem 2 of~\cite{bernstein2018signsgd_ICLR}.
	\end{remark}

	\begin{figure}[t]
		\centering	
		\includegraphics[height=21mm]{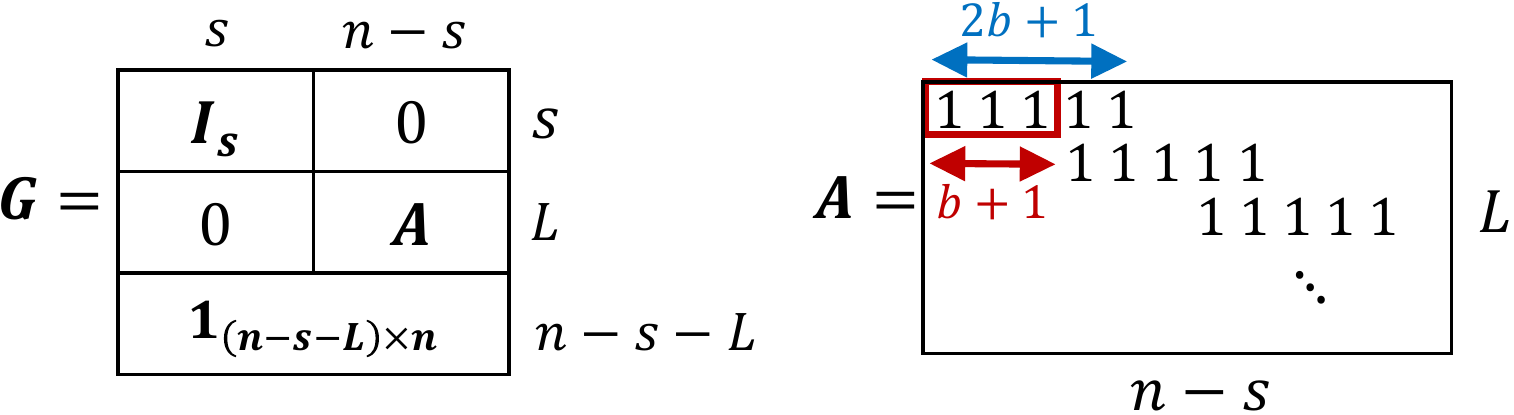}
		\caption{The structure of generator matrix $\mtx{G}$ devised in Algorithm \ref{Alg:DetCode}}
		\label{Fig:Generator_matrix}
		\vspace{-2mm}
	\end{figure}

	\vspace{-3mm}
	\section{Experiments on Amazon EC2}\label{Sec:Simul}

	Experimental results are obtained on Amazon EC2. 
	Considering a distributed learning setup, we used message passing interface  MPI4py\cite{DALCIN20111124}.%

	\paragraph{Compared schemes.} We compared the suggested coding schemes with the conventional uncoded scheme of \textsc{SignSGD with Majority Vote}. Similar to the simulation settings in the previous works \cite{pmlr-v80-bernstein18a, bernstein2018signsgd_ICLR}, we used the momentum counterpart $\textsc{Signum}$ instead of $\textsc{SignSGD}$ for fast convergence, and used a learning rate of $\gamma = 0.0001$ 
	and a momentum term of $\eta = 0.9$. 
	We simulated deterministic codes given in Algorithm \ref{Alg:DetCode}, and Bernoulli codes suggested in Section \ref{Sec:Bernoulli} with connection probability of $p$. Thus, the probabilistic code have expected computational redundancy of $\ept{r}=np$.

	\vspace{-3mm}
	\paragraph{Byzantine attack model.} We consider two attack models used in related works \cite{bernstein2018signsgd_ICLR, chen2018draco}: 1) the \textit{reverse} attack where a Byzantine node flips the sign of the gradient estimate, and 2) the \textit{directional} attack where a Byzantine node guides the model parameter in a certain direction. Here, we set the direction as an all-one vector.
	For each experiment, $b$ Byzantine nodes are selected arbitrarily.

	\vspace{-2mm}
	\subsection{Experiments on deep neural network models}

	\begin{figure}[!t]
		\vspace{-5mm}
		\centering
		\subfloat[][$n=5, b=1$ (epoch)] {\includegraphics[width=45mm]
			{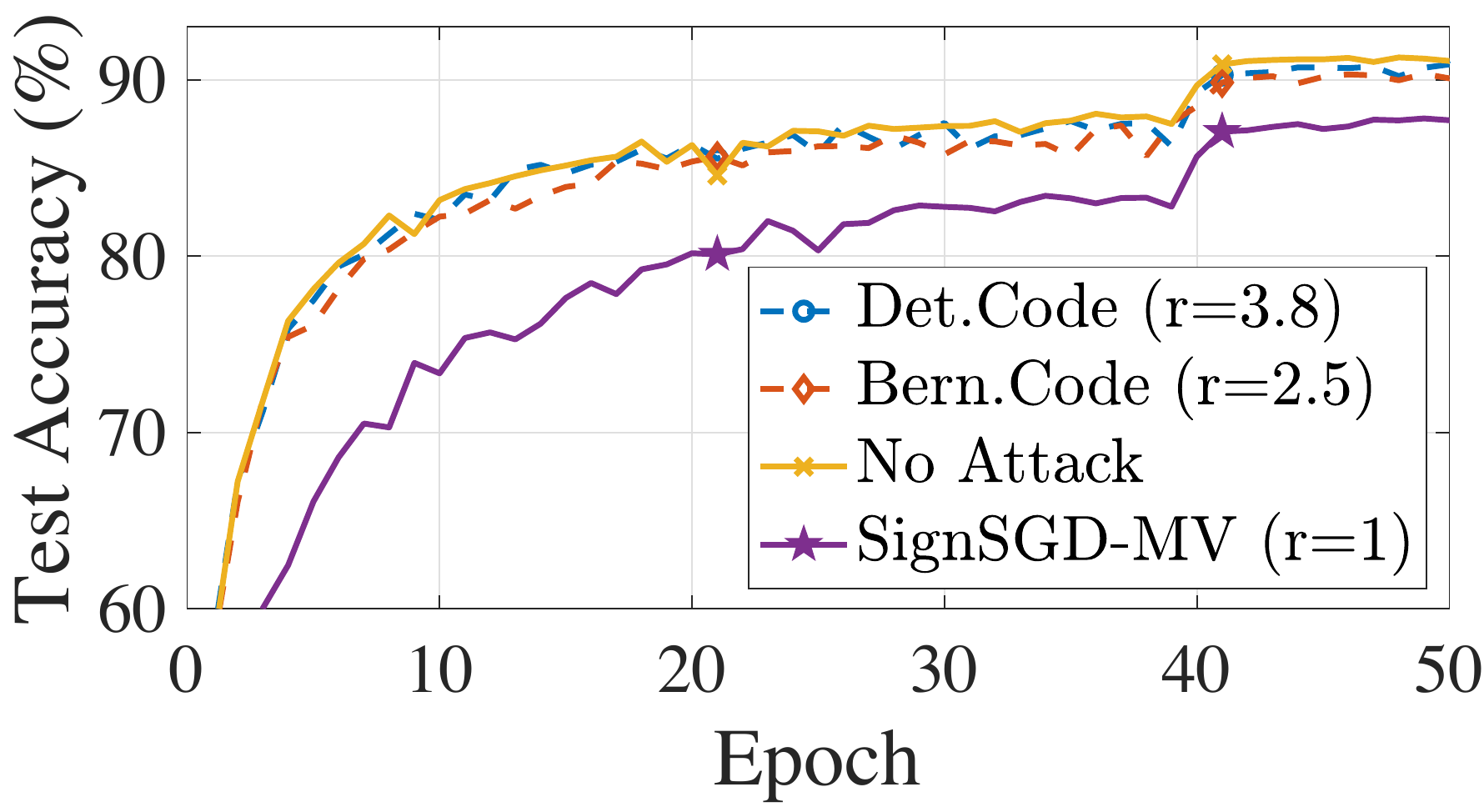}	\label{Fig:Simul_n5_b1_epoch}}
		\
		\subfloat[][$n=9, b=2$ (epoch) ]{\includegraphics[width=45mm]
			{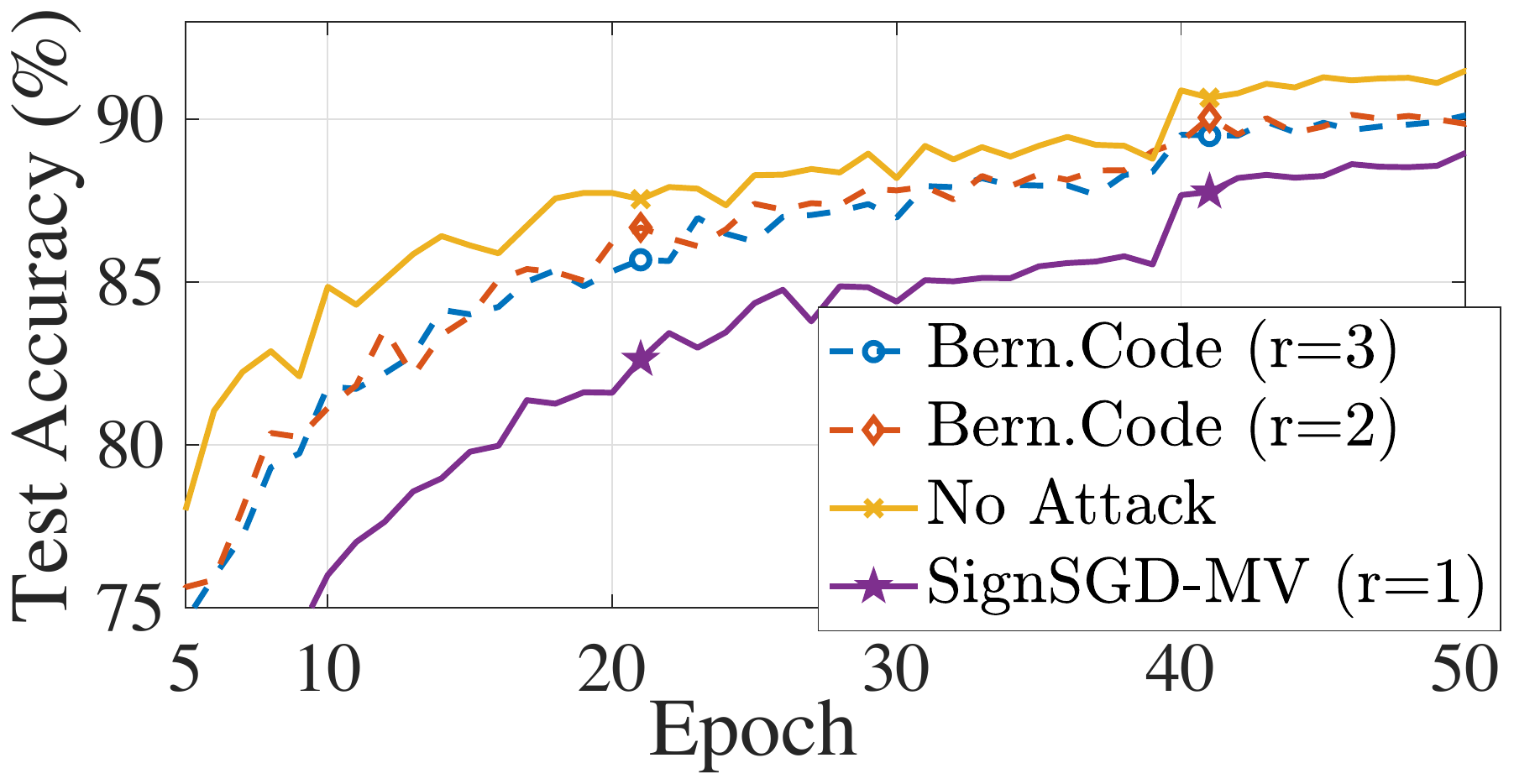}
			\label{Fig:Simul_n9_b2_epoch}}
		\
		\subfloat[][$n=15, b=3$ (epoch) ]{\includegraphics[width=45mm]
			{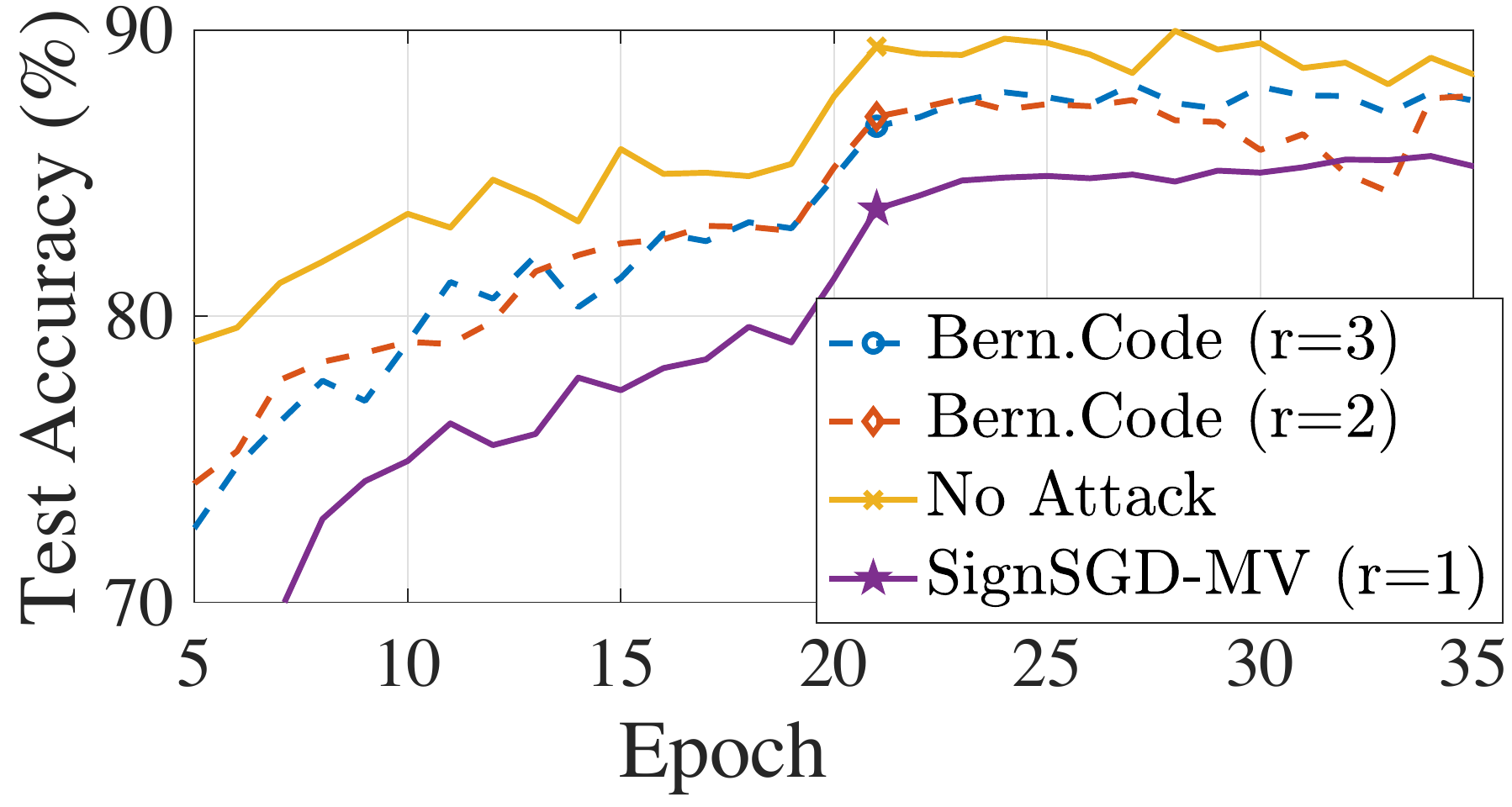}
			\label{Fig:Simul_n15_b3_epoch}}
		\vspace{-4mm}	
		\\	
		\subfloat[][$n=5, b=1$ (time)] {\includegraphics[width=45mm]
			{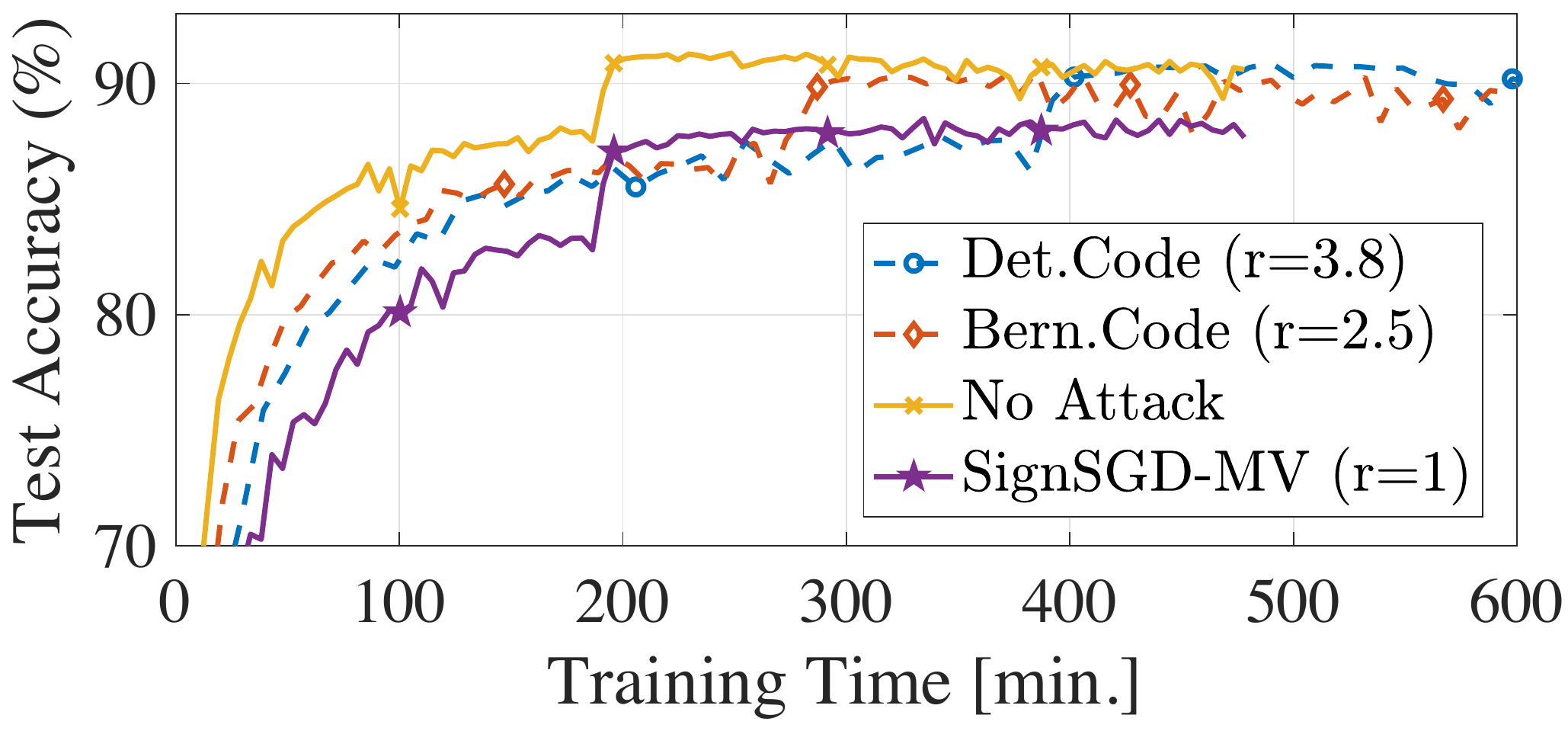}	\label{Fig:Simul_n5_b1_time}}
		\
		\subfloat[][$n=9, b=2$ (time)]{\includegraphics[width=45mm]
			{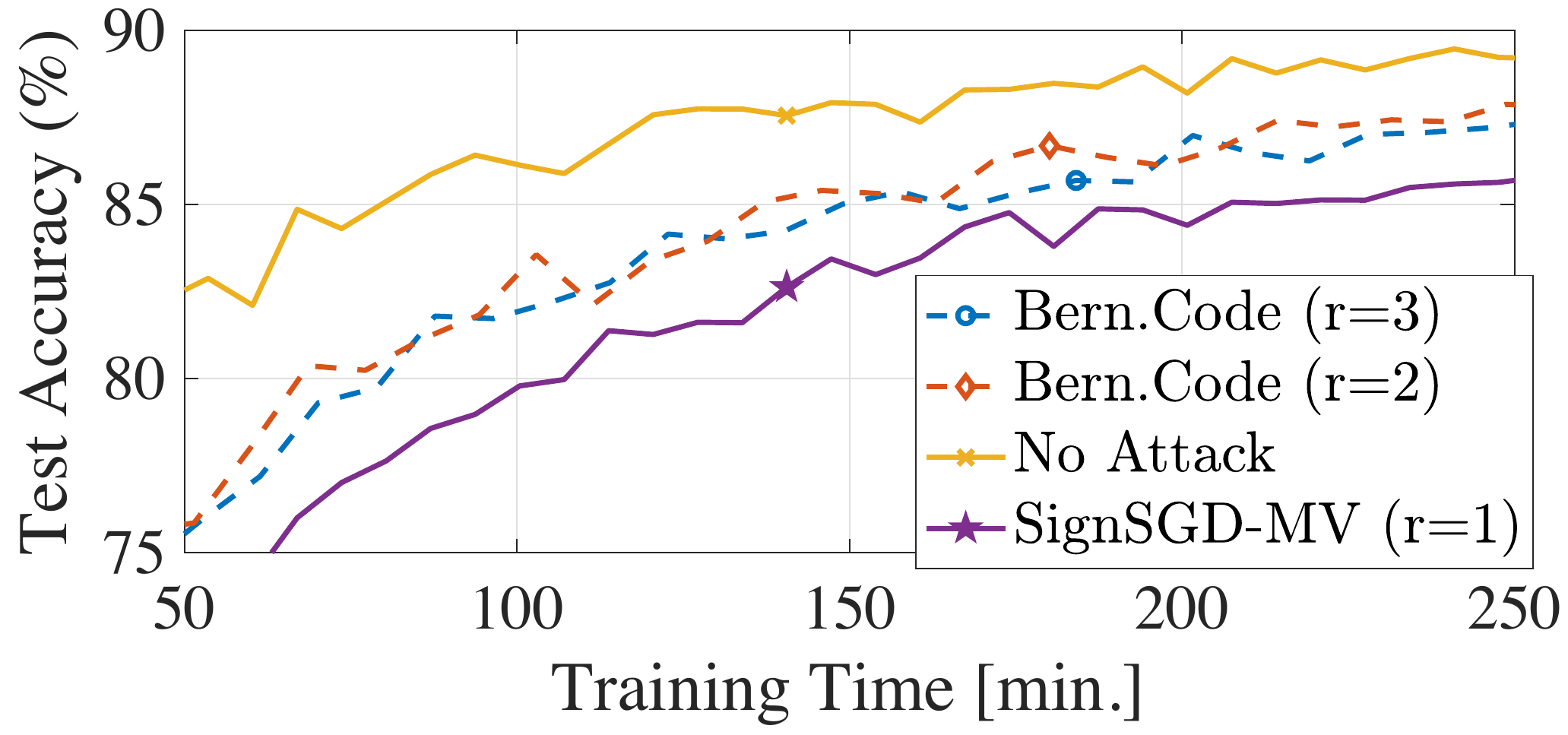}
			\label{Fig:Simul_n9_b2_time}} 
		\
		\subfloat[][$n=15, b=3$ (time)]{\includegraphics[width=45mm]{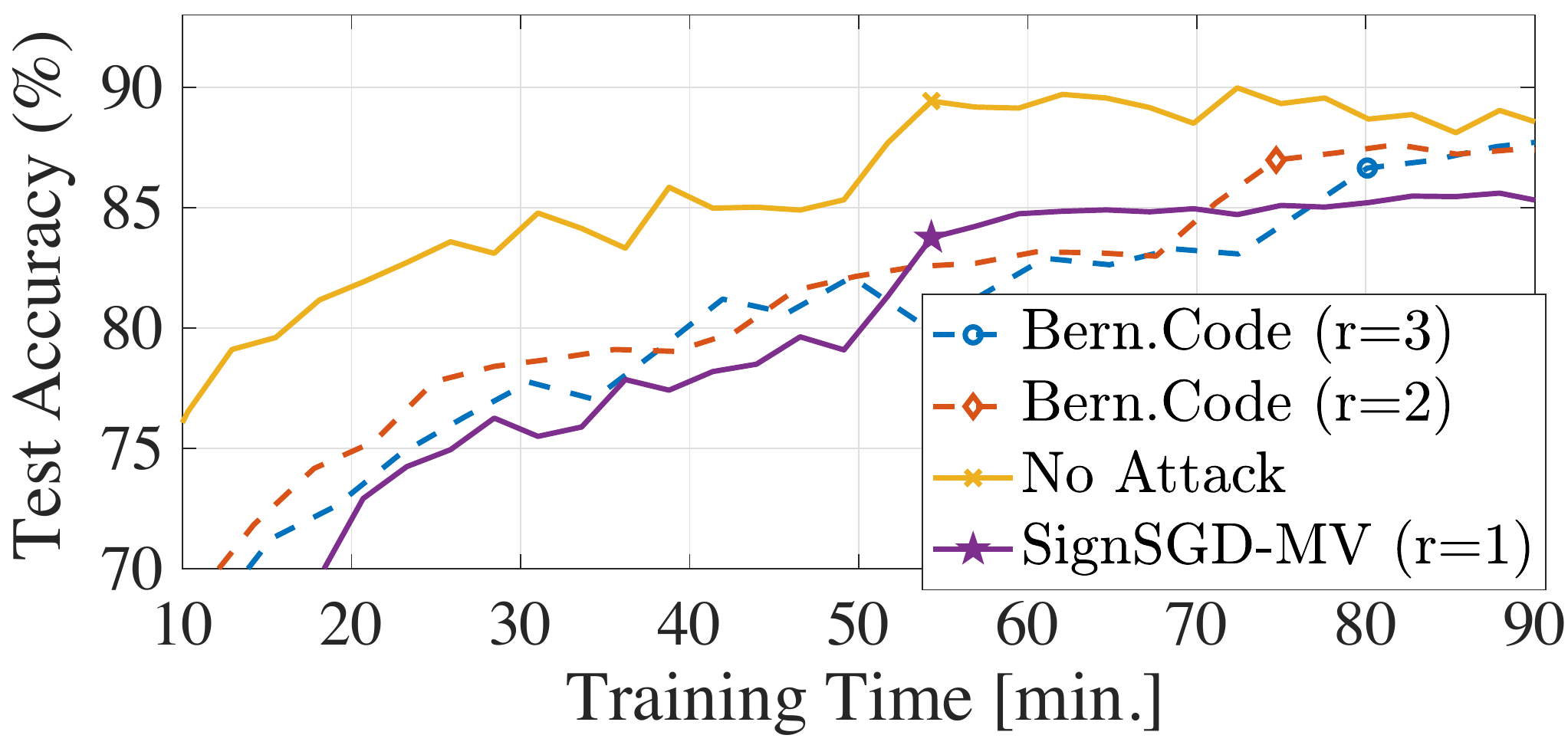}
			\label{Fig:Simul_n15_b3_time}} 
		\vspace{-2mm} 
		\caption{Simulation results with a small portion of Byzantines, under the \emph{reverse} attack on \textsc{Resnet-18} training \textsc{Cifar-10}. Deterministic/Bernoulli codes are abbreviated as "Det. code" and "Bern. code", respectively.}
		\label{Fig:Simul_Resnet}
		\vspace{-5mm}
	\end{figure}
	
	\vspace{-1mm}
	We trained a \textsc{Resnet-18} model on \textsc{Cifar-10} dataset. 
	Under this setting, the model dimension is $d=11,173,962$, and the number of training/test samples is set to $n_{\text{train}}=50000$ and $n_{\text{test}}=10000$, respectively. We used stochastic mini-batch gradient descent with batch size $B$, and our experiments are simulated on 
	\texttt{g4dn.xlarge} instances (having a GPU) for both workers and the master. 
	
	\vspace{-1mm}
	Fig. \ref{Fig:Simul_Resnet} illustrates the test accuracy performances of coded/uncoded schemes, when a small portion of nodes are adversaries. Each curve represents an average over $3$ independent train runs.
	Each column corresponds to different settings of $n$ and $b$: the first scenario is when $n=5, b=1$, the second one has $n=9, b=2$, and the last setup is $n=15,b=3$.
	For each scenario, two types of plots are given: one having
	horizontal axis of \textit{training epoch}, and the other with horizontal axis of \textit{training time}. We plotted the case with no attack as a reference to an ideal scenario.
	As in figures at the top row of Fig.~\ref{Fig:Simul_Resnet}, both deterministic and Bernoulli codes nearly achieve the ideal performance at each epoch, for all three scenarios. Overall, the suggested schemes enjoy $5-10 \%$ accuracy gains compared to the uncoded scheme in~\cite{bernstein2018signsgd_ICLR}. Moreover, the figures at the bottom row of Fig.~\ref{Fig:Simul_Resnet} show that the suggested schemes achieve a given level of test accuracy with less training time, compared to the uncoded scheme. 
	Interestingly, an expected redundancy as small as $\mathds{E}[r]=2, 3$ is enough to guarantee a meaningful performance gain compared to the conventional scheme~\cite{bernstein2018signsgd_ICLR} with $r=1$.
	
	\vspace{-1mm}
	Now, when the Byzantine portion of nodes is large, the results are even more telling. We plotted the simulation results in Fig. \ref{Fig:Simul_Resnet_mismatch}. Again, each curve reflects an average over $3$ independent runs. For various $n, b$ settings, it is clear that the uncoded scheme is highly vulnerable to Byzantine attacks, and the suggested codes with reasonable redundancy (as small as $\mathds{E}[r]=2, 3$) help maintaining high accuracy under the attack. When the model converges, the suggested scheme enjoys $15-30\%$ accuracy gains compared to the uncoded scheme; our codes provide remarkable training time reductions to achieve a given level of test accuracy.

	\begin{figure}[!t]
		\vspace{-3mm}
		\centering
		\subfloat[][$n=5, b=2$] {\includegraphics[width=44mm]
			{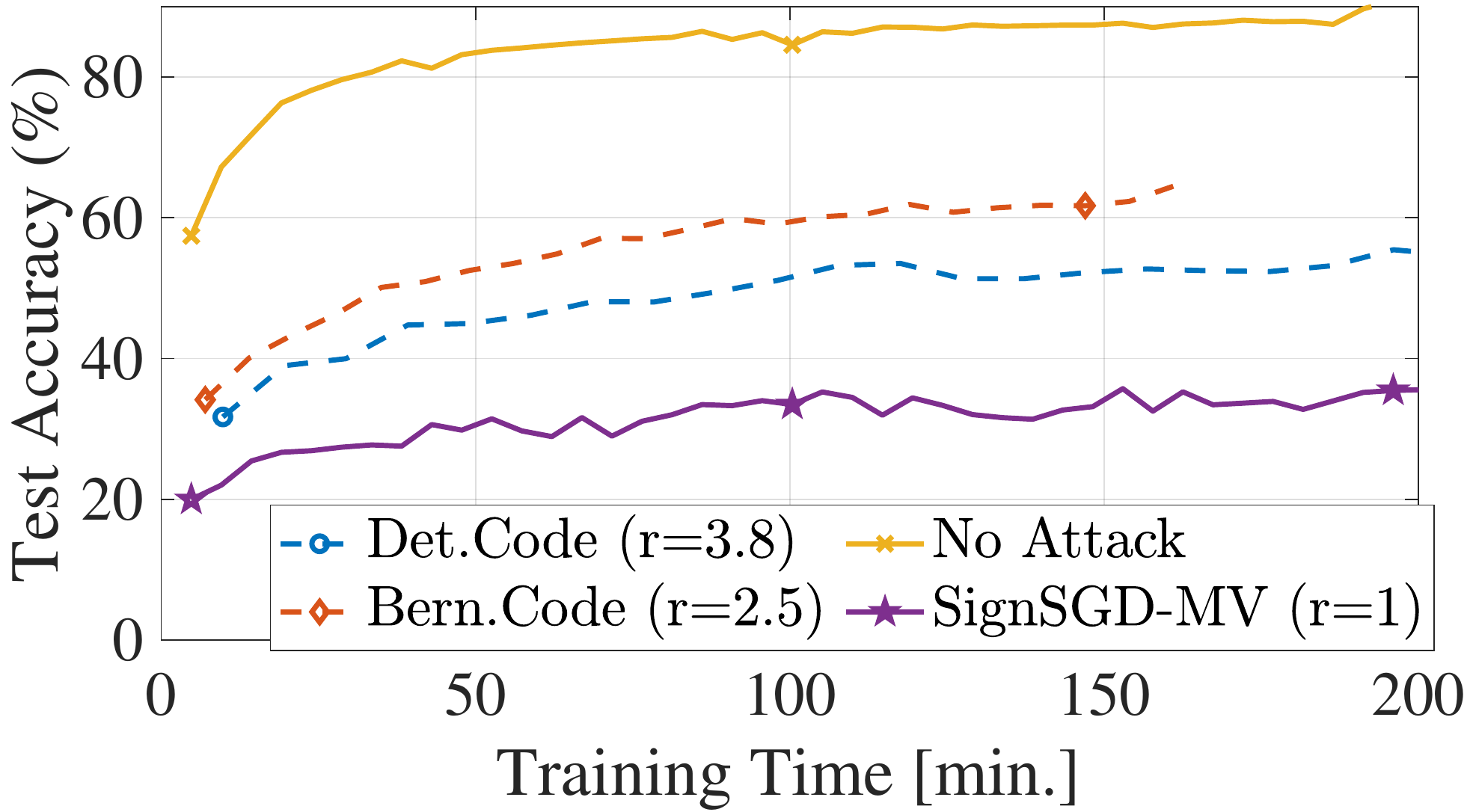}	\label{Fig:Simul_n5_b2_time}}
		\  
		\subfloat[][$n=9, b=3$]{\includegraphics[width=45mm]
			{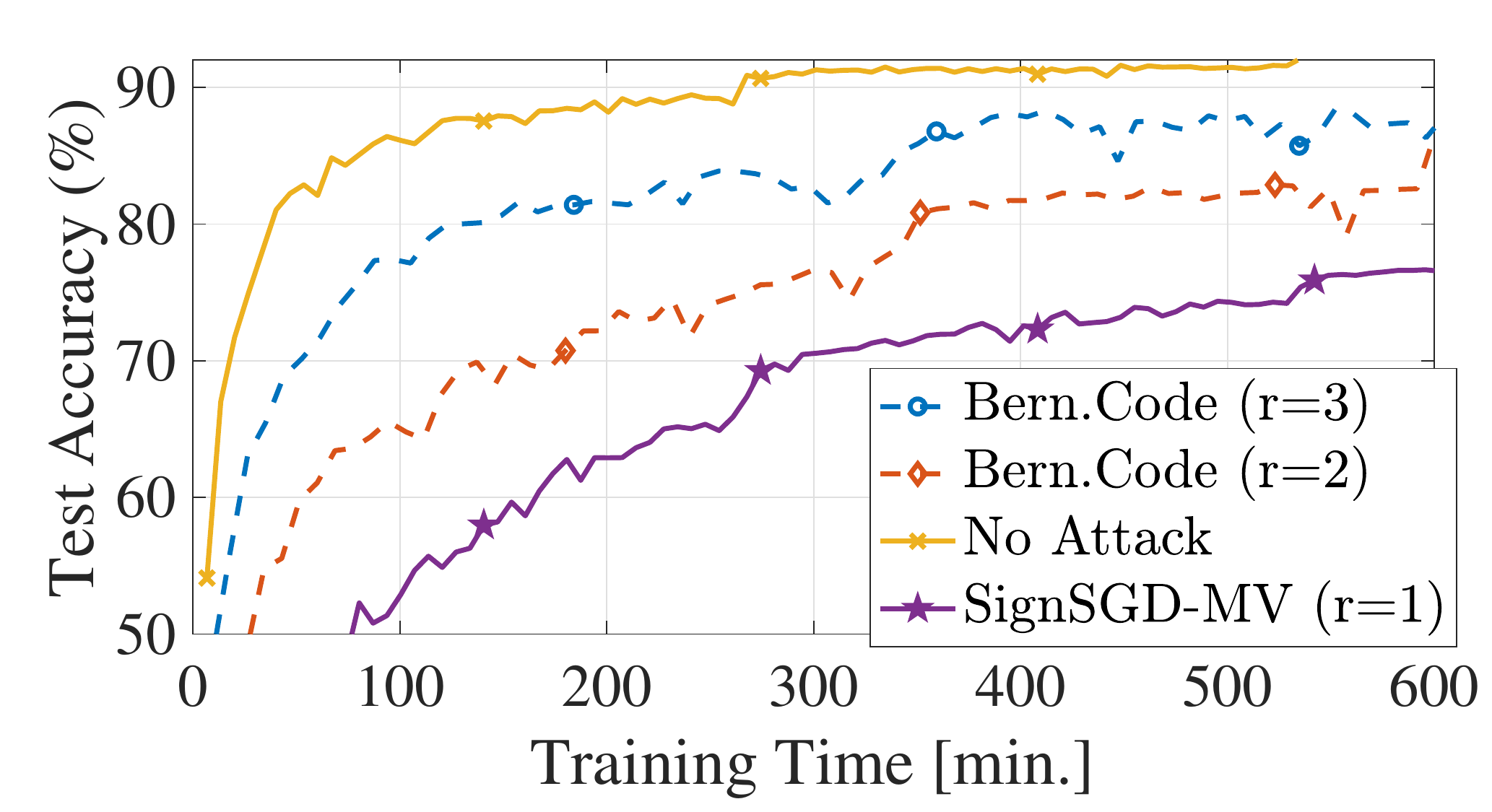}	\label{Fig:Simul_n9_b3_time}} 
		\ 
		\subfloat[][$n=15, b=6$ ]{\includegraphics[width=45mm]{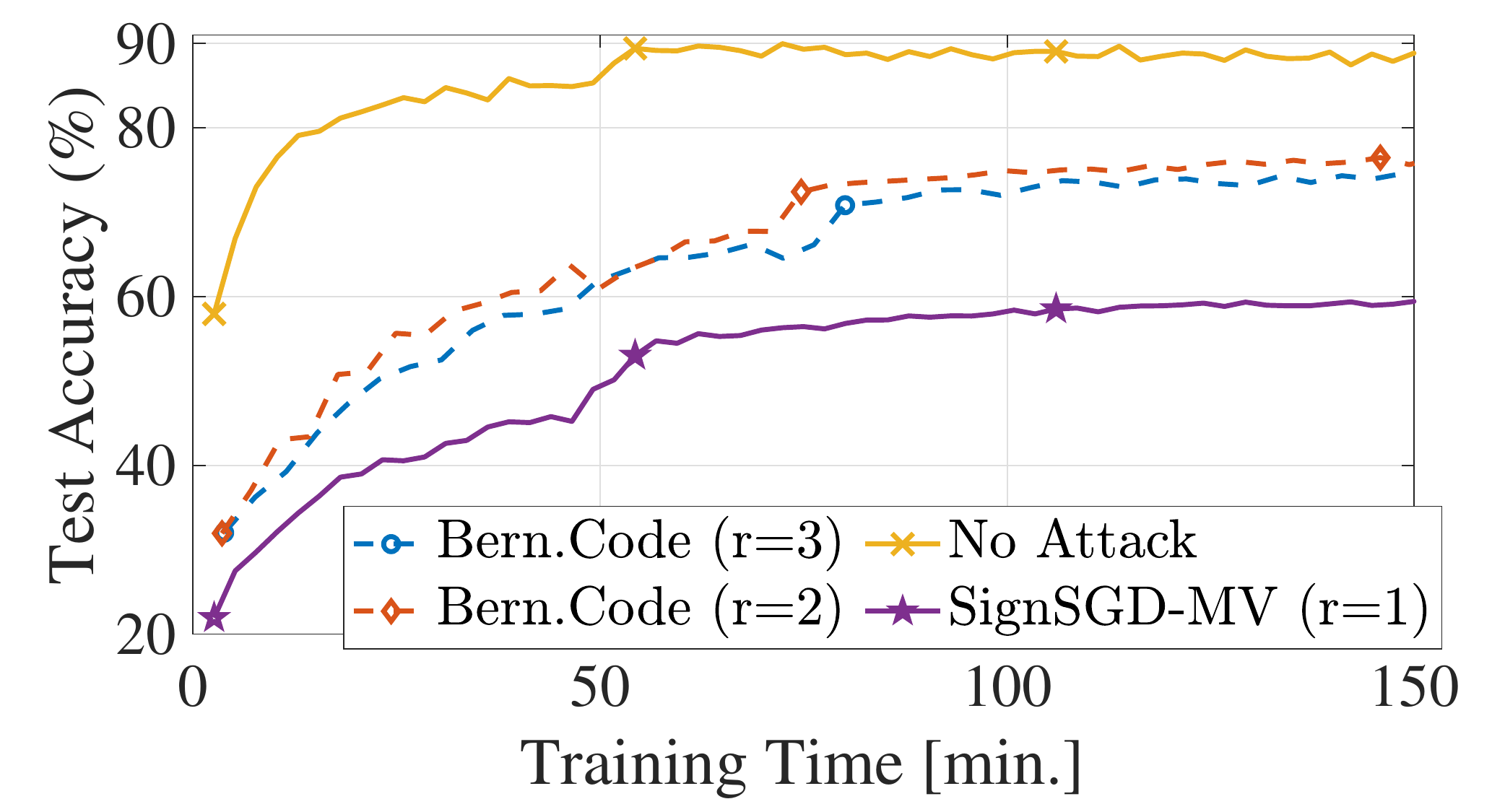}
			\label{Fig:Simul_n15_b6_time}} 
		\caption{Simulation results with a large portion of Byzantines, under the \emph{reverse} attack on \textsc{Resnet-18} training \textsc{Cifar-10}. Deterministic/Bernoulli codes are abbreviated as "Det. code" and "Bern. code", respectively.}
		\label{Fig:Simul_Resnet_mismatch}
		\vspace{-5mm}
	\end{figure}

	\vspace{-2mm}
	\subsection{Experiments on logistic regression models}
	\vspace{-1mm}
	We trained a logistic regression model on the Amazon Employee Access data set from Kaggle\footnote{https://www.kaggle.com/c/amazon-employee-access-challenge}, which is used in \cite{pmlr-v80-ye18a, tandon2017gradient} on coded gradient computation schemes.
	The model dimension is set to $d=263500$ after applying one-hot encoding with interaction terms.  
	We used \texttt{c4.large} instances for $n$ \textit{workers} that compute batch gradients, and a single \texttt{c4.2xlarge} instance for the \textit{master} that aggregates the gradients from workers and determines the model updating rule. 

	\begin{figure}[!t]
		\vspace{0mm}
		\centering
		\subfloat[][$n=15, b=5$]{\includegraphics[height=25mm]
			{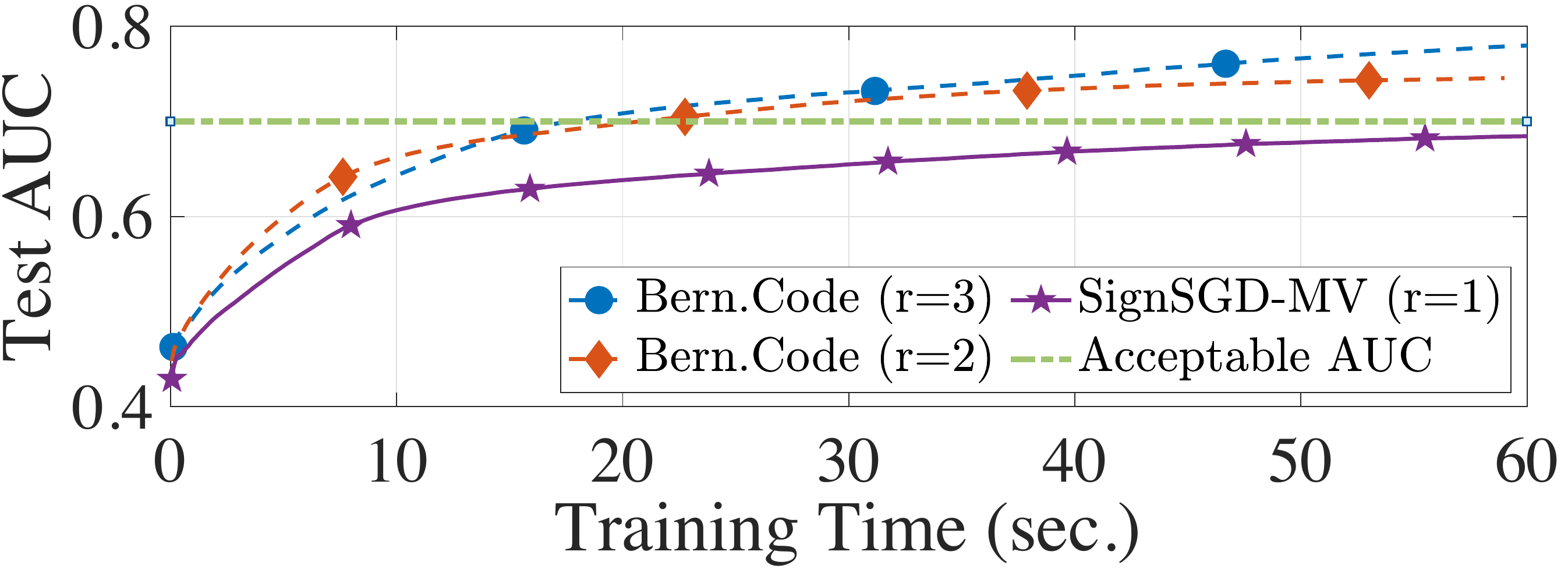}	\label{Fig:Simul_n15_b5_time}} 
		\subfloat[][$n=49, b=5$]{\includegraphics[height=25mm]
			{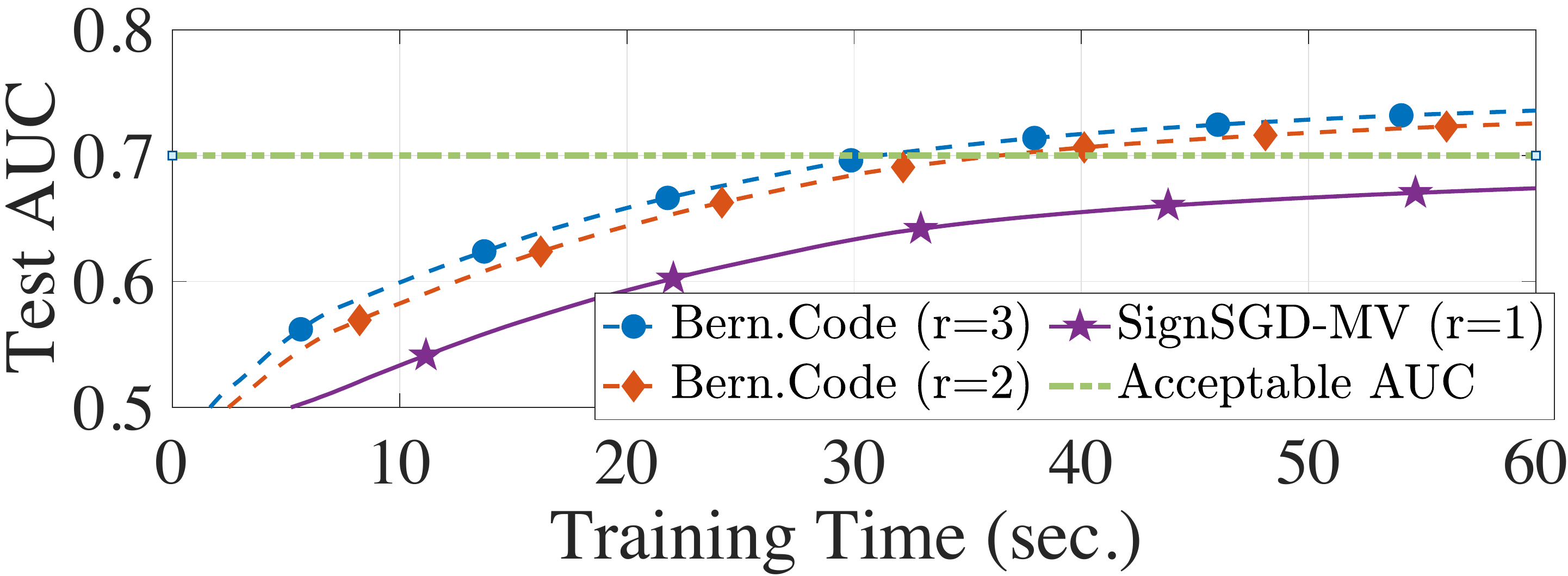}	\label{Fig:Simul_n49_b5_directional}}
		\caption{Test AUC performances of coded/uncoded schemes with various redundancy factors $r$, under \textit{directional} attack of $b$ Byzantine nodes. Random Bernoulli codes are abbreviated as "Bern. code".}
		\label{Fig:Simul_n15}
		\vspace{-5mm}
	\end{figure}
	
	\vspace{-1mm}
	Fig. \ref{Fig:Simul_n15} illustrates the generalization area under curve (AUC) performance of coded/uncoded schemes, under the \textit{directional} attack of Byzantine nodes.
	Fig. \ref{Fig:Simul_n15_b5_time} illustrates the performances when $n=15$ and $b=5$.
	Here we set the batch size $B=15$ and the number of training data $q=26325$. 
	Fig. \ref{Fig:Simul_n49_b5_directional} compares the performances of coded/uncoded schemes when $n=49$ and $b=5$.
	In this case, we set $B=5$ and $q=28665$. 
	In both scenarios, there exist clear performance gaps between the suggested Bernoulli codes and the conventional uncoded scheme.
	Moreover, our scheme with an expected redundancy as small as $\mathds{E}[r]=2$ gives a large training time reduction relative to the uncoded system in achieving a given level of accuracy (e.g. AUC=0.7, regarded as an acceptable performance).

	\vspace{-2mm}
	\section{Conclusions} %
	\vspace{-2mm}
	In this paper, we proposed \textsc{Election Coding}, a coding-theoretic framework that provides Byzantine tolerance of distributed learning 
	employing minimum worker-master communication.
	This framework tolerates arbitrary attacks corrupting the gradient computed in the training phase, by exploiting redundant gradient computations with appropriate allocation mapping between the individual workers and data  partitions.
	Making use of majority-rule-based encoding as well as decoding functions, we suggested two types of codes that tolerate Byzantine attacks with a controlled amount of redundancy, namely, random Bernoulli codes and deterministic codes. The Byzantine tolerance and the convergence of these coded distributed learning algorithms are 
	proved via mathematical analysis as well as through deep learning and logistic regression simulations on Amazon EC2.

	\section*{Broader Impact}
	Since our scheme uses minimum communication burden across distributed nodes, it is useful for numerous time-sensitive applications including smart traffic systems and anomaly detection in stock markets. 
	Moreover, our work is beneficial for various safety-critical applications that require the highest level of reliability and robustness, including autonomous driving, smart home systems, and healthcare services.
	
	In general, the failure in robustifying machine learning systems may cause some serious problems including traffic accidents or identity fraud. Fortunately, the robustness of the suggested scheme is mathematically proved, so that applying our scheme in safety-critical applications would not lead to such problems.

\newpage

\appendix
\renewcommand\thefigure{\thesection.\arabic{figure}}
\renewcommand\thetable{\thesection.\arabic{table}}
\setcounter{figure}{0}
\numberwithin{equation}{section}
\numberwithin{lemma}{section}
\numberwithin{theorem}{section}

\section{Additional experimental results}

\subsection{Speedup of \textsc{Election Coding} over SignSGD-MV}

We observe the speedup of \textsc{Election Coding} compared with the conventional uncoded scheme~\cite{bernstein2018signsgd_ICLR}. Table~\ref{Table:Speedup} summarizes the required training time to achieve the target test accuracy, for various $n,b,r$ settings, where $n$ is the number of nodes, $b$ is the number of Byzantines, and $r$ is the computational redundancy.
 As the portion of Byzantines $\frac{b}{n}$ increases, the suggested code with $r>1$ has a much higher gain in speedup, compared to the existing uncoded scheme with $r=1$. Note that this significant gain is achievable by applying codes with a reasonable redundancy of $r = 2,3$. 
\bmt{As an example, our Bernoulli code with mean redundancy of $r=2$ (two partitions per worker) achieves an 88\% accuracy at unit time of 609 while uncoded SignSGD-MV reaches a 79\% accuracy at a slower time of 750 and fails altogether to reach the 88\% level with $b=3$ out of $n=9$ workers compromised.}

\begin{table}[h]
	\caption{Training time (minutes) to reach the test accuracy for the suggested \textsc{Election Coding} with $r>1$ and the uncoded SignSGD-MV with $r=1$, for experiments using ResNet-18 to train CIFAR-10. Here, $\infty$ means that it is impossible for the uncoded scheme to reach the target test accuracy. For every target test accuracy, the suggested code with $r>1$ requires less training time than the conventional uncoded scheme with $r=1$.} 
	\centering
	\footnotesize
	\label{Table:Speedup}
	\vspace{2mm}
	\begin{tabular}{cccccccll}
		\cline{1-7}
		\multicolumn{3}{|c|}{Test accuracy}                                          & \multicolumn{1}{c|}{75\%}  & \multicolumn{1}{c|}{80\%}  & \multicolumn{1}{c|}{85\%}  & \multicolumn{1}{c|}{89\%}                  &                      &                      \\ \cline{1-7} &
		&     \vspace{-2mm}                       &                            &                            &                            &                                            &                      &                      \\ \cline{1-7}
		\multicolumn{1}{|c|}{\multirow{3}{*}{n=5, b=1}} & \multicolumn{1}{c|}{\multirow{2}{*}{\begin{tabular}[c]{@{}c@{}}Suggested\\ Election Coding\end{tabular}}} &  \multicolumn{1}{c|}{r=3.8} & \multicolumn{1}{c|}{\textbf{39.2}} & \multicolumn{1}{c|}{\textbf{68.6}} & \multicolumn{1}{c|}{\textbf{137.2}} & \multicolumn{1}{c|}{\textbf{392.1}}                 &                      &                      \\ \cline{3-7}
		\multicolumn{1}{|c|}{}                        & \multicolumn{1}{c|}{}    & \multicolumn{1}{c|}{r=2.5} & \multicolumn{1}{c|}{\textbf{28.0}}    & \multicolumn{1}{c|}{\textbf{56.0}}    & \multicolumn{1}{c|}{\textbf{119.0}}   & \multicolumn{1}{c|}{\textbf{287.0}}                   &                      &                      \\ \cline{2-7}
		\multicolumn{1}{|c|}{}                   & \multicolumn{1}{c|}{SignSGD-MV}         & \multicolumn{1}{c|}{r=1}   & \multicolumn{1}{c|}{\textbf{52.6}} & \multicolumn{1}{c|}{\textbf{95.6}} & \multicolumn{1}{c|}{\textbf{191.2}} & \multicolumn{1}{c|}{\bm{$\infty$}} &                      &                      \\ \cline{1-7}
		&
		&     \vspace{-2mm}                       &                            &                            &                            &                                            &                      &                      \\ \cline{1-7}
		\multicolumn{1}{|c|}{\multirow{3}{*}{n=9, b=2}} & \multicolumn{1}{c|}{\multirow{2}{*}{\begin{tabular}[c]{@{}c@{}}Suggested\\ Election  Coding\end{tabular}}} & \multicolumn{1}{c|}{r=3}  & \multicolumn{1}{c|}{\textbf{52.6}} & \multicolumn{1}{c|}{\textbf{87.7}} & \multicolumn{1}{c|}{\textbf{149.0}}   & \multicolumn{1}{c|}{\textbf{350.6}}                 &                      &                      \\ \cline{3-7}
	\multicolumn{1}{|c|}{}                          & \multicolumn{1}{c|}{}                                                                                       & \multicolumn{1}{c|}{r=2}   & \multicolumn{1}{c|}{\textbf{51.4}} & \multicolumn{1}{c|}{\textbf{68.6}} & \multicolumn{1}{c|}{\textbf{137.2}} & \multicolumn{1}{c|}{\textbf{334.4}}                 & \multicolumn{1}{c}{} &                      \\ \cline{2-7}
			\multicolumn{1}{|c|}{}                          & \multicolumn{1}{c|}{SignSGD-MV}                                                                             & \multicolumn{1}{c|}{r=1}   & \multicolumn{1}{c|}{\textbf{67.0}} & \multicolumn{1}{c|}{\textbf{113.8}} & \multicolumn{1}{c|}{\textbf{207.5}} & \multicolumn{1}{c|}{\textbf{381.5}}                   & \multicolumn{1}{c}{} & \multicolumn{1}{c}{} \\ \cline{1-7}
		\multicolumn{1}{l}{}                            &                            &                            &                            &                            &                                            & \multicolumn{1}{c}{} & \multicolumn{1}{c}{}
	\end{tabular}
\begin{tabular}{cccccccll}
	\cline{1-7}
	\multicolumn{3}{|c|}{Test accuracy}                                          & \multicolumn{1}{c|}{30\%} & \multicolumn{1}{c|}{60\%}                  & \multicolumn{1}{c|}{79\%}                  & \multicolumn{1}{c|}{88\%}                  &                      &                      \\ \cline{1-7}
	&          \vspace{-2mm}                  &                           &                                            &                                            &                                            &                      &                      \\ \cline{1-7}
	\multicolumn{1}{|c|}{\multirow{3}{*}{n=5, b=2}} & \multicolumn{1}{c|}{\multirow{2}{*}{\begin{tabular}[c]{@{}c@{}}Suggested\\ Election Coding\end{tabular}}} & \multicolumn{1}{c|}{r=3.8} & \multicolumn{1}{c|}{\textbf{9.8}}  & \multicolumn{1}{c|}{\textbf{19.6}}                  & \multicolumn{1}{c|}{\textbf{58.8}}                  & \multicolumn{1}{c|}{\textbf{245.0}}                 &                      &                      \\ \cline{3-7}
	\multicolumn{1}{|c|}{}                          & \multicolumn{1}{c|}{}    & \multicolumn{1}{c|}{r=2.5}  & \multicolumn{1}{c|}{\textbf{7.0}}  & \multicolumn{1}{c|}{\textbf{14.0}}                  & \multicolumn{1}{c|}{\textbf{56.0}}                  & \multicolumn{1}{c|}{\textbf{245.0}}                 &                      &                      \\ \cline{2-7}
	\multicolumn{1}{|c|}{}                          & \multicolumn{1}{c|}{SignSGD-MV}         & \multicolumn{1}{c|}{r=1}  & \multicolumn{1}{c|}{\textbf{43.0}} & \multicolumn{1}{c|}{\bm{$\infty$}} & \multicolumn{1}{c|}{\bm{$\infty$}} & \multicolumn{1}{c|}{\bm{$\infty$}} &                      &                      \\ \cline{1-7}
	&
	&     \vspace{-2mm}                       &                            &                            &                            &                                            &                      &                      \\ \cline{1-7}
	\multicolumn{1}{|c|}{\multirow{3}{*}{n=9, b=3}} & \multicolumn{1}{c|}{\multirow{2}{*}{\begin{tabular}[c]{@{}c@{}}Suggested\\ Election  Coding\end{tabular}}} & \multicolumn{1}{c|}{r=3}  & \multicolumn{1}{c|}{\textbf{8.8}}  & \multicolumn{1}{c|}{\textbf{26.3}}                  & \multicolumn{1}{c|}{\textbf{122.7}}                 & \multicolumn{1}{c|}{\textbf{394.4}}                 &                      &                      \\ \cline{3-7}
	\multicolumn{1}{|c|}{}                          & \multicolumn{1}{c|}{}                                                                                       & \multicolumn{1}{c|}{r=2}   & \multicolumn{1}{c|}{\textbf{8.6}}  & \multicolumn{1}{c|}{\textbf{60.0}}                  & \multicolumn{1}{c|}{\textbf{351.5}}                 & \multicolumn{1}{c|}{\textbf{608.7}}                 & \multicolumn{1}{c}{} &                      \\ \cline{2-7}
	\multicolumn{1}{|c|}{}                           & \multicolumn{1}{c|}{SignSGD-MV}                                                                             & \multicolumn{1}{c|}{r=1}      & \multicolumn{1}{c|}{\textbf{13.4}} & \multicolumn{1}{c|}{\textbf{167.3}}                 & \multicolumn{1}{c|}{\textbf{749.5}}                 & \multicolumn{1}{c|}{\bm{$\infty$}} & \multicolumn{1}{c}{} & \multicolumn{1}{c}{} \\ \cline{1-7}
	\multicolumn{1}{l}{}                            &                            &                           &                                            &                                            &                                            & \multicolumn{1}{c}{} & \multicolumn{1}{c}{}
\end{tabular}
\end{table}

\subsection{Performances for Byzantine mismatch scenario}

Suppose the deterministic code suggested in this paper is designed for $\hat{b}<b$ Byzantine nodes, where $b$ is the actual number of Byzantines in the system. 
When the number of nodes is $n=5$ and the number of Byzantines is $b=2$, Fig.~\ref{Fig:Simul_Resnet_Byz_mismatch} shows the performance of the deteministic code (with redundancy $r=3.8$) designed for $\hat{b} = 1$. 
Even in this underestimated Byzantine setup, the suggested code maintains its tolerance to Byzantine attacks, and the performance gap between the suggested code and the conventional SignSGD-MV is over $20 \%$.

\subsection{Performances for extreme Byzantine attack scenario}

In Fig.~\ref{Fig:Simul_Resnet_max_Byz}, we compare the performances of \textsc{Election Coding} and the conventional \textsc{SignSGD-MV}, under the maximum number of Byzantine nodes, i.e., $b = (n-1)/2$, when $n= 9$ and $b=4$. The suggested Bernoulli code enjoy over $20\%$ performance gap compared to the conventional SignSGD-MV. This shows that the suggested \textsc{Election Coding} is highly robust to Byzantine attacks even under the maximum Byzantine setup, while the conventional SignSGD-MV is vulnerable to the extreme Byzantine attack scenario.

\begin{figure}[]
	\vspace{-3mm}
	\centering
	\includegraphics[height=25mm]
		{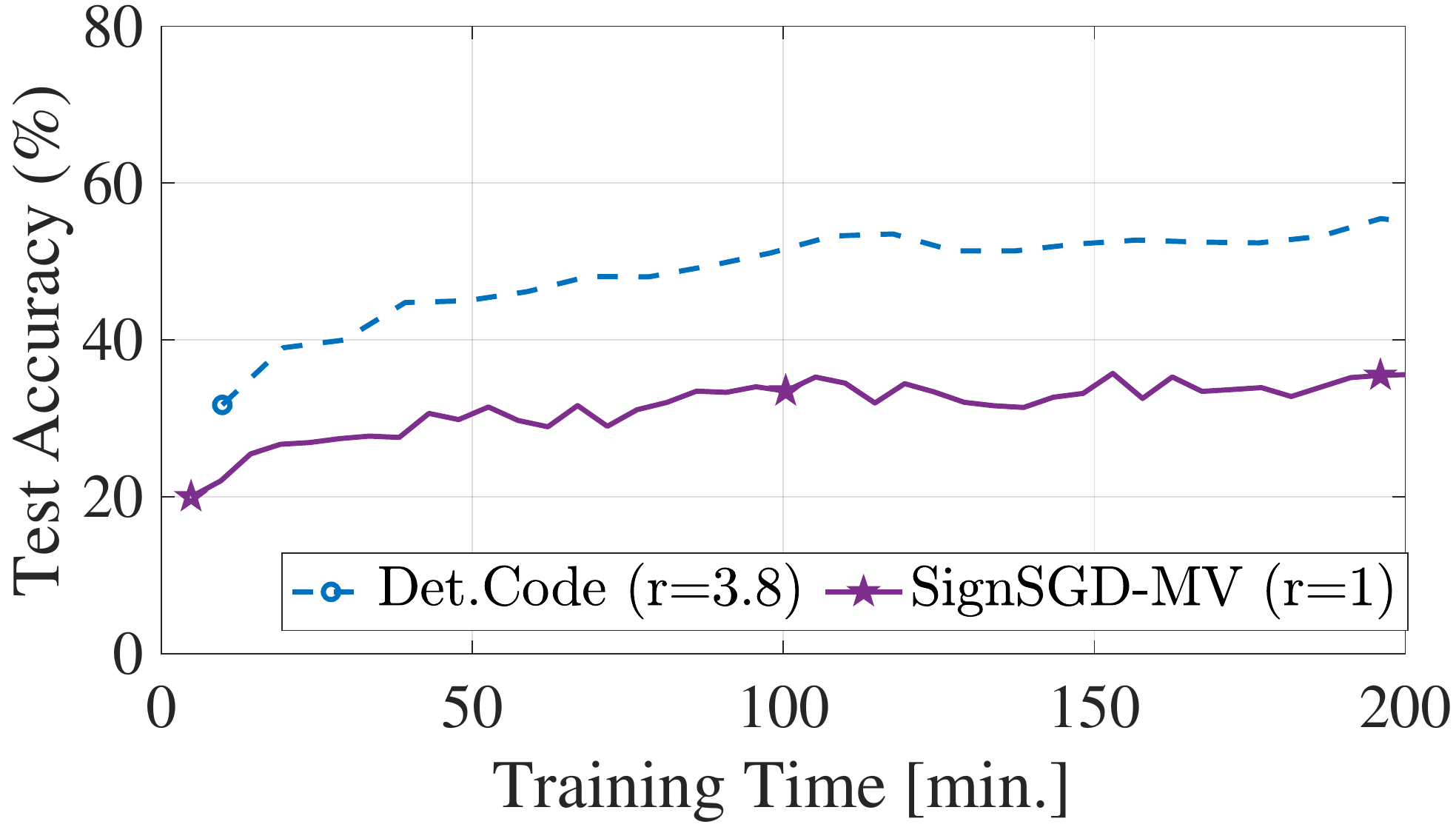}	\label{Fig:Simul_n5_b2_time_mismatch}
	\caption{Result for Byzantine mismatch scenario, i.e., $\hat{b} \neq b$, when $n=5$ and $b=2$. The suggested deterministic code (with $r=3.8$) is designed for $\hat{b}=1$, which is smaller than the actual number of Byzantines $b=2$. Even in this underestimated Byzantine setup, the suggested code is highly tolerant compared to the conventional SignSGD-MV. }
	\label{Fig:Simul_Resnet_Byz_mismatch}
	\vspace{-5mm}
\end{figure}

\begin{figure}[]
	\vspace{-3mm}
	\centering
		\includegraphics[height=25mm]
		{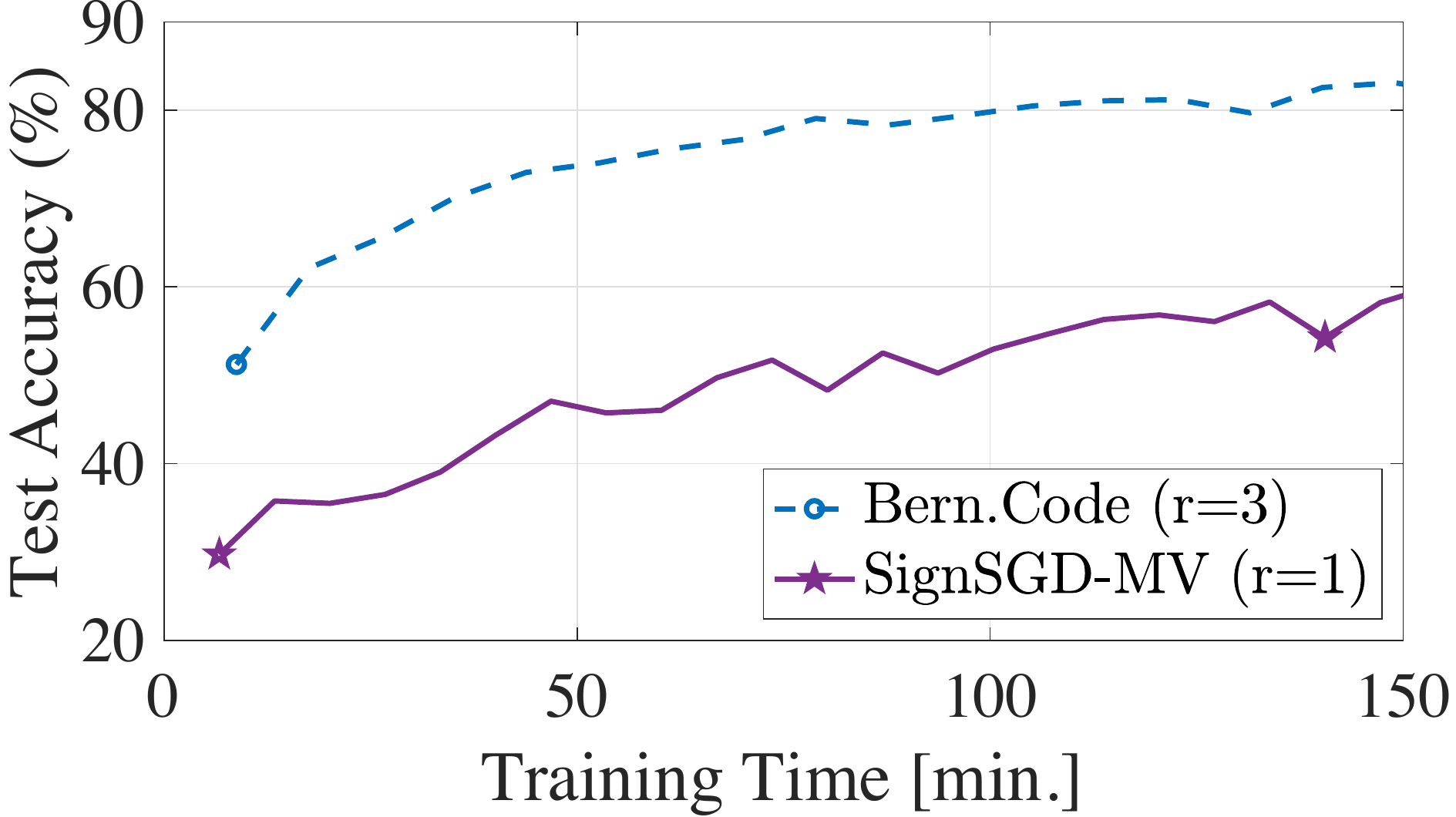}	%
	\caption{Results for maximum Byzantine scenario, i.e., $b= (n-1)/2$, when $n=9$ and $b=4$, under the \emph{reverse} attack on \textsc{Resnet-18} training \textsc{Cifar-10}. Bernoulli codes are abbreviated as "Bern. code".}
	\label{Fig:Simul_Resnet_max_Byz}
	\vspace{-5mm}
\end{figure}

\begin{figure}[!t]
	\vspace{-3mm}
	\centering
	\subfloat[][$n=5, b=2$] {\includegraphics[height=24mm]
		{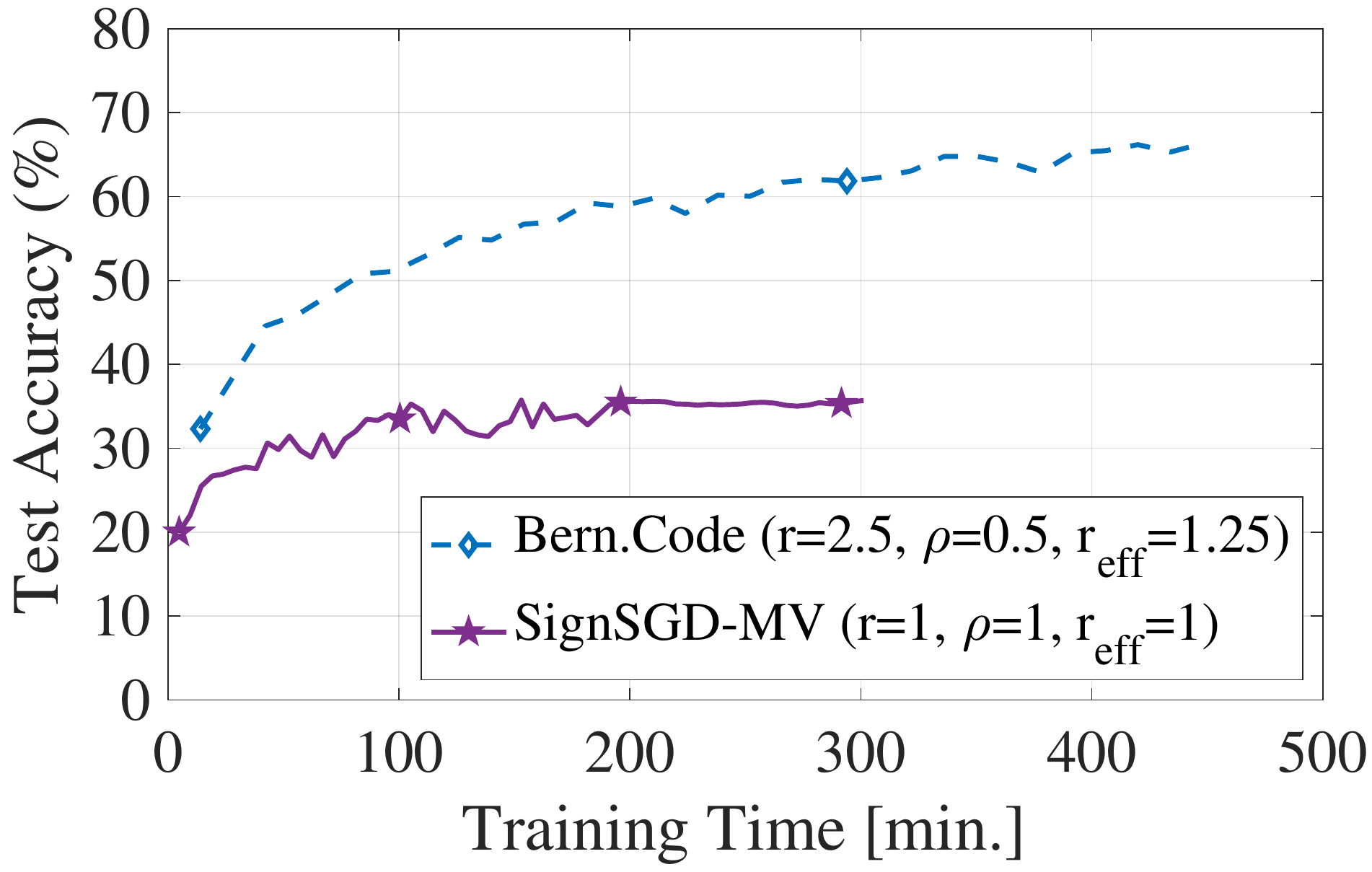}	\label{Fig:Simul_n5_b2_time_r_eff}}
	\  
	\subfloat[][$n=15, b=6$ ]{\includegraphics[height=24mm]{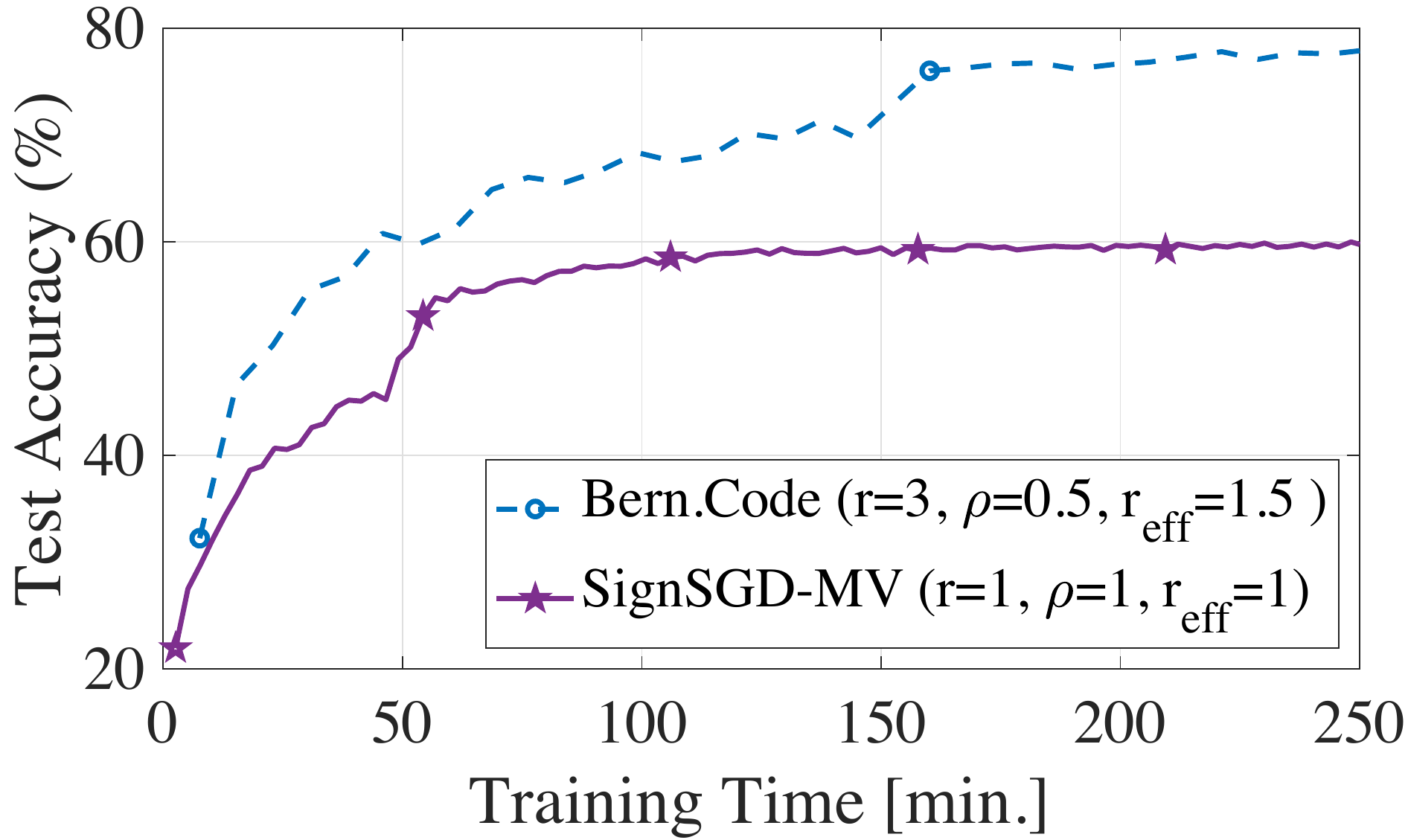}
		\label{Fig:Simul_n15_b6_time_r_eff}} 
	\caption{Impact of the reduced effective redundancy $r_{\textrm{eff}}$, under the \emph{reverse} attack on \textsc{Resnet-18} training \textsc{Cifar-10}. The suggested Bernoulli codes with effective redundancy $r_{\textrm{eff}}= 1.25, 1.5$ has a high performance gain compared to the conventional SignSGD-MV.}
	\label{Fig:Simul_Resnet_r_eff}
	\vspace{-5mm}
\end{figure}

\subsection{Reducing the amount of redundant computation}

We remark that there is a simple way to lower the computational burden of \textsc{Election Coding} by reducing the mini-batch size to $\rho B$ for some reduction factor $\rho < 1$. In this case, the effective computational redundancy can be represented as $r_{\textrm{eff}} = \rho r$. Fig.~\ref{Fig:Simul_Resnet_r_eff} shows the performance of \textsc{Election Coding} with reduced effective redundancy. For experiments on $n=5$ or $n=15$, we tested the Bernoulli code with redundancy $r = 2.5$ and $r=3$, respectively, and reduced the batch size by the ratio $\rho = 0.5$. This results in the effective redundancy of $r_{\textrm{eff}} = 1.25$ and $r_{\textrm{eff}} = 1.5$, respectively, as in Fig.~\ref{Fig:Simul_Resnet_r_eff} . Comparing with the SignSGD with effective redundancy $r_{\textrm{eff}}=1$, the suggested codes have $20\sim30 \%$ performance gap. This shows that \textsc{Election Coding} can be used as a highly practical tool for tolerating Byzantines, with effective redundancy well below 2.

\subsection{Performances for larger networks / noisy computations}
\bmt{We ran experiments for a larger network, ResNet-50, as seen in Fig.~\ref{Fig:ResNet50}. Our scheme with redundancy $r_{\text{eff}}=1.5$ is still effective here.
We also considered the effect of noisy gradient computation on the performance of the suggested scheme. We added independent Gaussian noise to all gradients corresponding to individual partitions before the signs are taken (in addition to Byzantine attacks on local majority signs). The proposed Bernoulli code can tolerate noisy gradient computation well, while uncoded signSGD-MV cannot, as shown in Fig.~\ref{Fig:noisy_computation} for added noise with variance $\sigma^2 = 1e^{-8}$.}

\subsection{Comparison with median-based approach}
\bmt{We compared our scheme with full gradient + median (FGM). As in Fig.~\ref{Fig:geometric_median}, our scheme with redundancy as low as $r_{\text{eff}}=1.5$ outperforms FGM. While FGM requires no computational redundancy, it needs $32$x more communication bandwidth than ours while performing worse. For the 20\% Byzantine scenario ($b=3$), FGM performs relatively better, but still falls well below ours. For the severe 40\% Byzantine case ($b=6$), it is harder to achieve near-perfect protection but our schemes clearly perform  better.}

\begin{figure}
	\vspace{-5mm}
	\centering
	\subfloat[][ResNet-50]{\includegraphics[height=27mm]
		{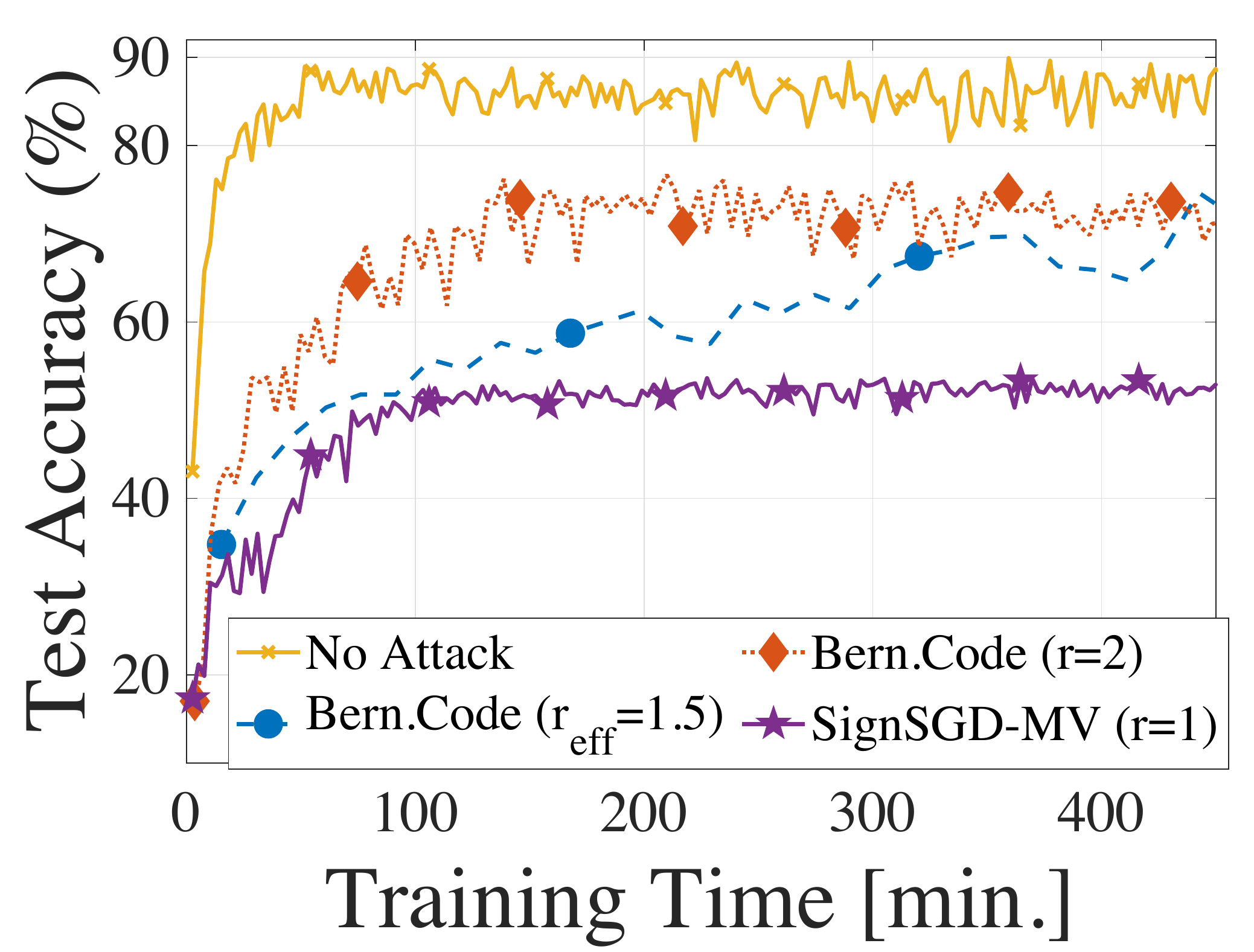}	\label{Fig:ResNet50}} 
	\quad
	\subfloat[][\centering{Noisy computation}]{\includegraphics[height=27mm]
		{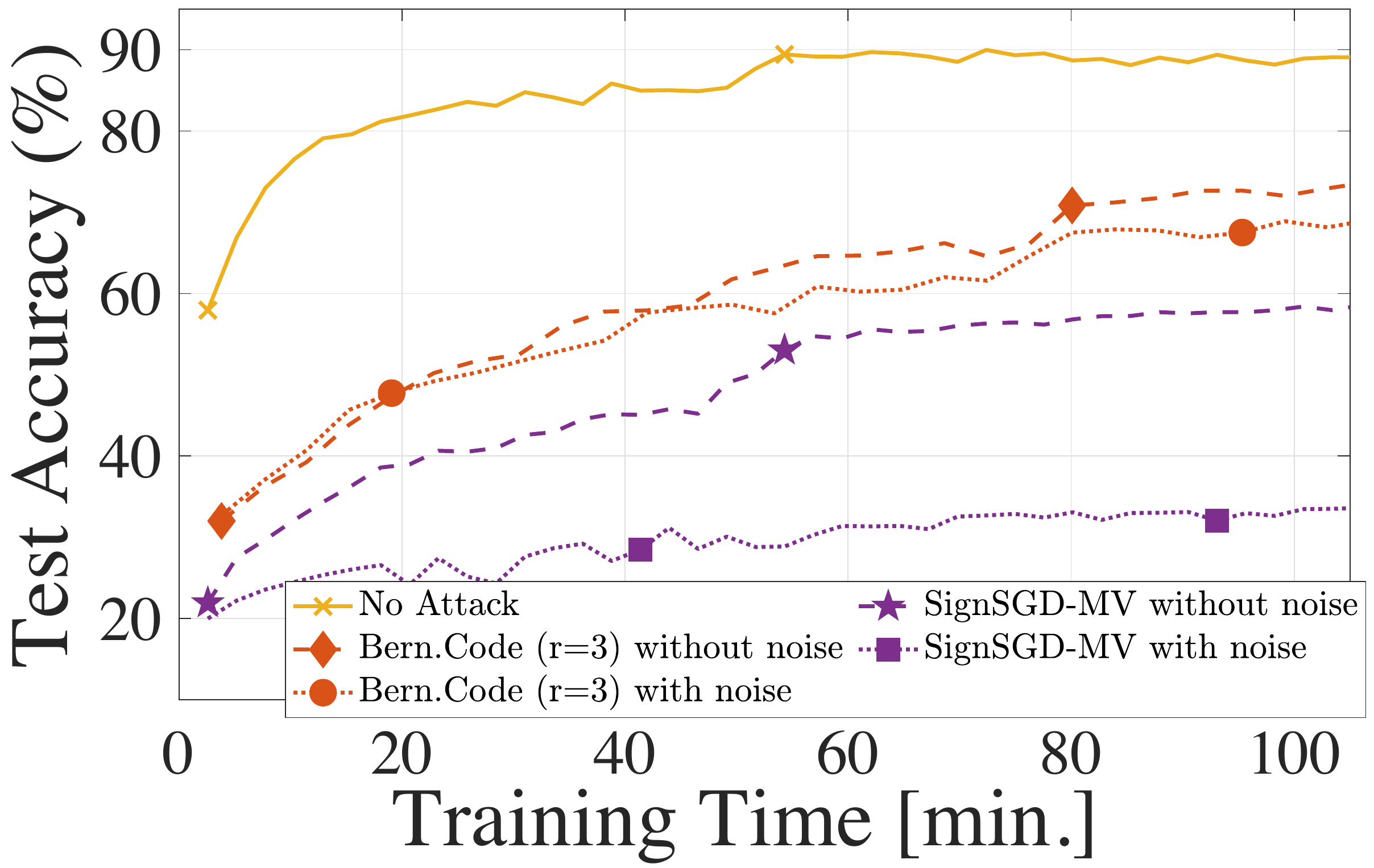}	\label{Fig:noisy_computation}}
	\quad
	\subfloat[][\centering{Compare with median}]{\includegraphics[height=27mm]
		{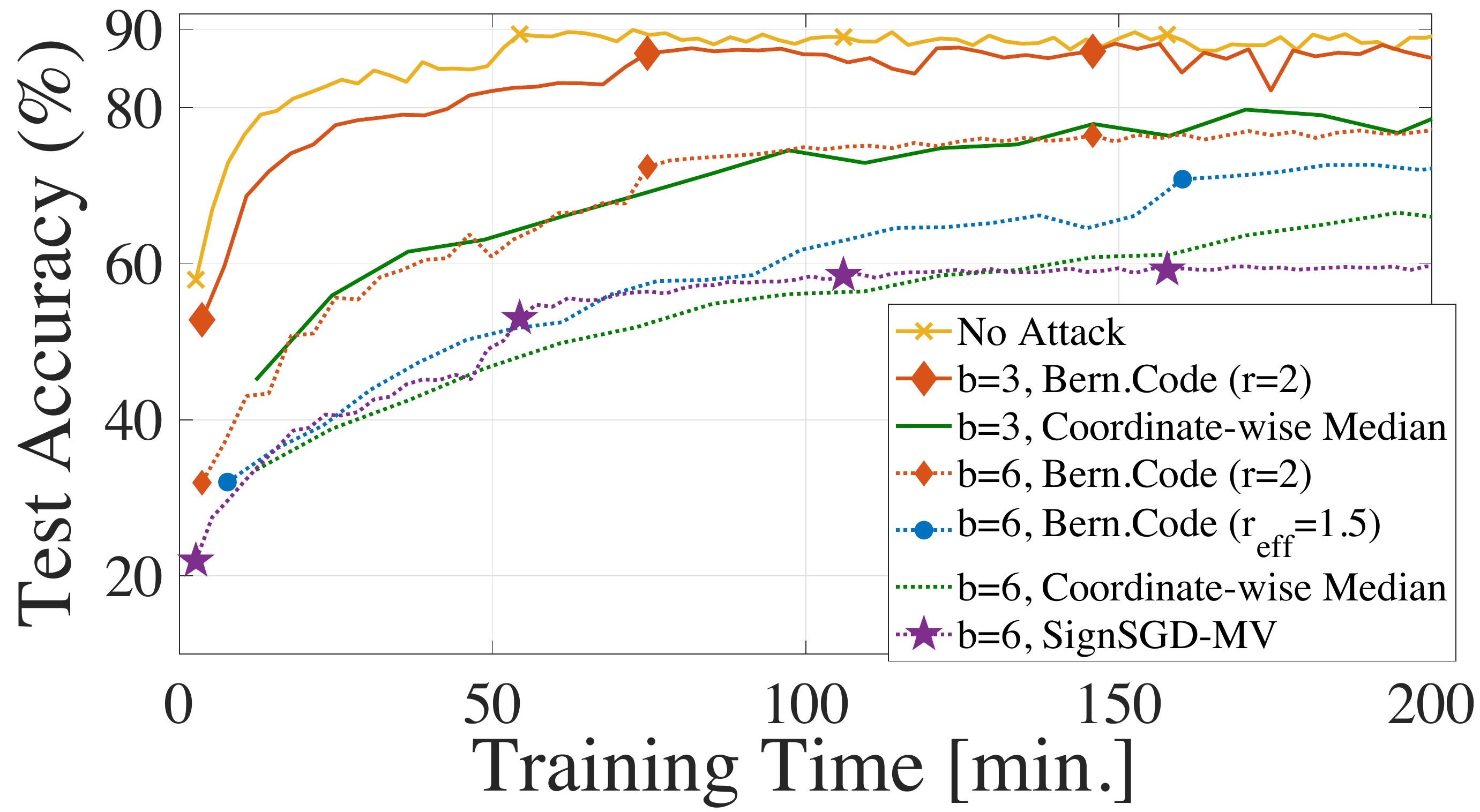}	\label{Fig:geometric_median}}
	\vspace{-0.2cm}
	\caption{\bmt{Experimental results on CIFAR-10 dataset, for $n=15$ and $b=6$ (unless stated otherwise)}}%
	\label{Fig:Simul}
	\vspace{-5mm}
\end{figure}

\section{Hyperparmeter setting in experiments for CIFAR-10 dataset}

Experiments for CIFAR-10 dataset on Resnet-18 use the hyperparameters summarized in Table~\ref{Table:Hyperparam}.  For the experiments on Resnet-50, the batch size is set to $B=64$. 

\begin{table}[h]
	\caption{Hyperparameters used in experiments for CIFAR-10 dataset on Resnet-18}
	\centering
	\small
	\label{Table:Hyperparam}
	\begin{tabular}{|c|c|c|c|c|c|c|c|c|}
		\hline
		$(n,b)$   & (5,1)   & (5,2)   & (9,2) & (9,3) & (9,4) & (15,3)            & (15,6)     & (15,7)       \\ \hline
		\begin{tabular}[c]{@{}c@{}}Batch size $B$ \\ (per data partition)\end{tabular} & 24 & 48 & 64   & \multicolumn{2}{c|}{14}    & \multicolumn{2}{c|}{16}   & 64             \\ \hline
		\begin{tabular}[c]{@{}c@{}}Learning rate\\ decaying epochs $E$ \end{tabular}        & \multicolumn{5}{c|}{{[}40, 80{]}} & \multicolumn{3}{c|}{{[}20, 40, 80{]}} \\ \hline
		Initial learning rate	$\gamma$   &   \multicolumn{8}{c|}{$10^{-4}$}           \\ \hline
		Momentum	$\eta$   &   \multicolumn{8}{c|}{$0.9$}           \\ \hline
		
	\end{tabular}
\end{table}

\section{Notations and preliminaries}

We define notations used for proving main mathematical results. 
For a given set $S$, the identification function $\mathds{1}_{\{x \in S\}}$ outputs one if $x \in S$, and outputs zero otherwise.
We denote the mapping between a message vector $\vct{m}$ and a coded vector $\vct{c}$ as $\phi (\cdot)$:
\begin{align*}
\phi (\vct{m}) = \vct{c} = [c_1, \cdots, c_n] = [E_1(\vct{m}; \mtx{G}), \cdots, E_n(\vct{m}; \mtx{G}) ].
\end{align*}
We also define the attack vector $\bm{\beta}=[\beta_1, \beta_2, \cdots, \beta_n]$, where $\beta_j = 1$ if node $j$ is a Byzantine and $\beta_j = 0$ otherwise. 
The set of attack vectors with a given support $b$ is denoted as 
$B_b = \{ \bm{\beta} \in \{0,1\}^n: \ZNorm{\bm{\beta}} = b \}.$
For a given attack vector $\vct{\beta}$,  we define an attack function 
$f_{\vct{\beta}} : \vct{c} \mapsto \vct{y}$
to represent the behavior of Byzantine nodes. 
According to the definition of $y_j$ in the main manuscript, the set of valid attack functions can be expressed as 
$\mathcal{F}_{\vct{\beta}}  \coloneqq \{ f_{\vct{\beta}} \in \mathcal{F} :  y_j = c_j \quad \forall j\in [n] \text{ with } \beta_j = 0 \},$
where
$\mathcal{F}  = \{ f :  \{0,1\}^{n} \rightarrow \{0,1\}^{n}  \}
$ is the set of all possible mappings.
Moreover, the set of message vectors $\vct{m}$ with weight $t$ is defined as
\begin{align}\label{Eqn:M_t}
M_t \coloneqq \{ \vct{m} \in \{0,1\}^n : \ZNorm{\vct{m}} = t \}.
\end{align}

Now we define several sets:
\begin{align*}
M^+ & \coloneqq \{\vct{m} \in \{ 0,1 \}^n : \ZNorm{\vct{m}} > \floor{\frac{n}{2}} \}, \quad \quad
M^-  \coloneqq \{\vct{m} \in \{ 0,1 \}^n : \ZNorm{\vct{m}} \leq \floor{\frac{n}{2}} \}, \\ 
Y^+ & \coloneqq \{\vct{y} \in \{ 0,1 \}^n : \ZNorm{\vct{y}} > \floor{\frac{n}{2}} \}, \quad \quad
Y^-  \coloneqq \{\vct{y} \in \{ 0,1 \}^n : \ZNorm{\vct{y}} \leq \floor{\frac{n}{2}} \}. 
\end{align*}
Using these definitions, Fig. \ref{Fig:mTomu} provides a description on the mapping from $\vct{m}$ to $\hat{\mu}$. 
Since decoder $D(\cdot)$ is a majority vote function, we have
$\hat{\mu} =  \mathds{1}_{ \{ \vct{y} \in Y^+  \}  }$.
Moreover, we have
$\mu = \mathds{1}_{ \{ \vct{m} \in M^+  \}  }$.

Before starting the proofs, we state several 
preliminary results. We begin with a property, which can be directly obtained from the definition of $\vct{y} = [y_1, \cdots, y_n]$. %

\begin{lemma}\label{Prop:ByzantineAttack}
	Assume that there are $b$ Byzantine nodes, i.e., the attack vector satisfies $\vct{\beta} \in B_b$. For a given vector $\vct{c}$, the output $\vct{y} = f_{\vct{\beta}}(\vct{c})$ of an arbitrary attack function $ f_{\vct{\beta}} \in \mathcal{F}_{\vct{\beta}}$ satisfies  
	$\ZNorm{\vct{y} \oplus \vct{c} } \leq b.$
	In other words, $\vct{y}$ and $\vct{c}$ differ at most $b$ positions.
\end{lemma}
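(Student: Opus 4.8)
The plan is to unpack the definition of a valid attack function and read off the claim by a coordinate-wise argument. First I would observe that, since both $\vct{y}$ and $\vct{c}$ lie in $\{0,1\}^n$, the XOR vector $\vct{y} \oplus \vct{c}$ is again binary, so $\ZNorm{\vct{y} \oplus \vct{c}}$ equals the number of coordinates at which $\vct{y}$ and $\vct{c}$ disagree, i.e. the Hamming distance between the two vectors. It therefore suffices to show that the number of disagreeing coordinates is at most $b$.

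Next I would split the coordinates according to whether the corresponding link is compromised. By the definition of the admissible attack set $\mathcal{F}_{\vct{\beta}}$, any $f_{\vct{\beta}} \in \mathcal{F}_{\vct{\beta}}$ satisfies $y_j = c_j$ for every honest index $j$ with $\beta_j = 0$. Hence $y_j \oplus c_j = 0$ at each such coordinate, so the honest coordinates contribute nothing to $\ZNorm{\vct{y} \oplus \vct{c}}$; the only coordinates that can possibly contribute are the Byzantine ones, namely those $j$ with $\beta_j = 1$.

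Finally I would count. Since $\vct{\beta} \in B_b$, there are exactly $\ZNorm{\vct{\beta}} = b$ Byzantine coordinates, and at each of them $y_j \oplus c_j \in \{0,1\}$. Summing the binary entries of $\vct{y} \oplus \vct{c}$ over all coordinates and discarding the honest ones gives $\ZNorm{\vct{y} \oplus \vct{c}} = \sum_{j : \beta_j = 1} (y_j \oplus c_j) \leq b$, which is the desired bound.

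I do not expect a genuine obstacle here, as the statement is an immediate consequence of the constraint defining admissible attacks: honest links transmit faithfully, so a disagreement between the transmitted bit $y_j$ and the computed bit $c_j$ can occur only on one of the $b$ compromised links. The only point requiring a moment of care is the first step, namely recognizing that the $\ell_0$-type quantity $\ZNorm{\cdot}$ applied to the binary vector $\vct{y} \oplus \vct{c}$ coincides with the Hamming distance; once that identification is in place, the bound follows purely by splitting the coordinate sum into honest and Byzantine parts.
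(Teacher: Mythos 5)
Your proof is correct and matches the paper's treatment: the paper states this lemma as following ``directly from the definition'' of $\vct{y}$ and $\mathcal{F}_{\vct{\beta}}$, and your argument is precisely the expected unpacking of that claim --- honest coordinates satisfy $y_j = c_j$ by the definition of admissible attack functions, so only the $\ZNorm{\vct{\beta}} = b$ Byzantine coordinates can contribute to $\ZNorm{\vct{y} \oplus \vct{c}}$. No gaps; your extra care in identifying $\ZNorm{\cdot}$ of a binary XOR with Hamming distance is a fine (if routine) detail the paper leaves implicit.
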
 

\begin{figure}
	\vspace{0mm}
	\centering
	\includegraphics[width=0.5\linewidth]{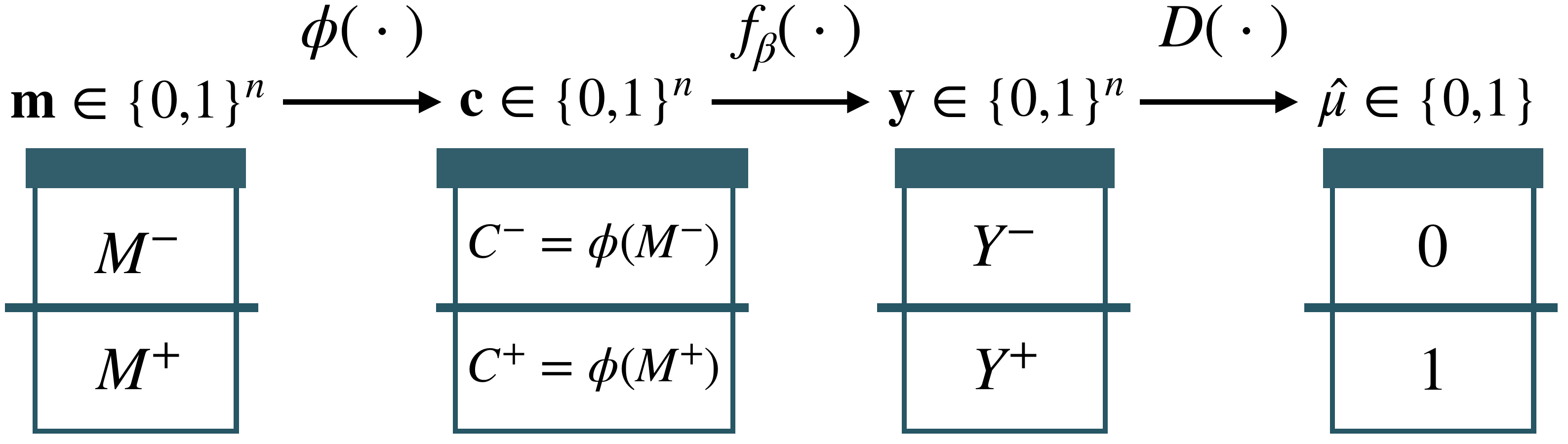}
	\caption{Mapping from $\vct{m} \in \{0,1\}^n$ to $\hat{\mu} \in \{0,1\}$. For all attack vectors $\vct{\beta} \in B_b$ and attack functions $f_{\vct{\beta}} \in \mathcal{F}_{\vct{\beta}}$, we want the overall mapping to satisfy $\hat{\mu}=1$ for all $\vct{m} \in M^-$ and $\hat{\mu}=0$ for all $\vct{m} \in M^+$. }
	\label{Fig:mTomu}
	\vspace{-5mm}
\end{figure}

Now we state four mathematical results which are useful for proving the theorems in this paper.

\begin{lemma}\label{Lemma:conv_independent}
	Consider $X = \sum_{i=1}^n X_i$ where $\{X_i\}_{i \in [n]}$ is the set of independent random variables. Then, the probability density function of $X$ is 
	\begin{align*}
	f_X = f_{X_1} * \cdots * f_{X_n} = \textrm{conv }\{f_{X_i} \}_{i \in [n]}.
	\end{align*}
\end{lemma}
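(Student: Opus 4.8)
The plan is to prove the two-variable case directly and then extend to arbitrary $n$ by induction, invoking associativity of convolution. For the base case $n=2$ with $Z = X_1 + X_2$, I would start from the cumulative distribution function $F_Z(z) = \PP{X_1 + X_2 \leq z}$. Using independence to factor the joint density as $f_{X_1}(x_1)f_{X_2}(x_2)$, this becomes the double integral of the joint density over the half-plane $\{x_1 + x_2 \leq z\}$. After the change of variables $x_2 \mapsto u = x_1 + x_2$ (holding $x_1$ fixed) and applying Fubini's theorem to interchange the order of integration, differentiating $F_Z$ in $z$ yields $f_Z(z) = \int f_{X_1}(x)\, f_{X_2}(z-x)\,dx = (f_{X_1} * f_{X_2})(z)$.

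For the inductive step, I would suppose the partial sum $Y = \sum_{i=1}^{n-1} X_i$ has density $f_{X_1} * \cdots * f_{X_{n-1}}$. The key observation is that $Y$, being a measurable function of $(X_1,\dots,X_{n-1})$, is independent of $X_n$ under the mutual-independence hypothesis; hence the base case applies to $X = Y + X_n$, giving $f_X = f_Y * f_{X_n}$. Associativity of the convolution operator then collapses the nested convolution into $\mathrm{conv}\{f_{X_i}\}_{i\in[n]}$, closing the induction.

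An alternative route, which I would note as cleaner but more machinery-heavy, is to pass to characteristic functions $\varphi_{X_i}(t) = \ept{e^{itX_i}}$. Independence gives $\varphi_X = \prod_i \varphi_{X_i}$, and since the characteristic function is the Fourier transform of the density, the product on the transform side corresponds to convolution on the density side; uniqueness of Fourier inversion then identifies $f_X$ with $\mathrm{conv}\{f_{X_i}\}$.

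The statement is classical, so the main obstacle is technical rather than conceptual: I must ensure the densities exist and justify the interchange of integration (Fubini) and of differentiation and integration in passing from the CDF to the density. In the induction, the one genuinely load-bearing point is verifying independence of the partial sum $Y$ from $X_n$, which must be derived from \emph{mutual} (not merely pairwise) independence of $\{X_i\}$; I would make this step explicit rather than take it for granted.
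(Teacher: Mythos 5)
Your proof is correct, but there is nothing in the paper to compare it against: the paper states this lemma without any proof, listing it among four preliminary facts (alongside Hoeffding's inequality and two results cited from other papers) that are invoked later, in particular in the proof of Proposition~\ref{Lemma:symm_unimodal_partition}. In other words, the authors treat the statement as classical background, and your task amounts to supplying the standard textbook argument, which you do soundly: the $n=2$ base case via the CDF, Fubini, and differentiation; the induction using independence of the partial sum $Y=\sum_{i=1}^{n-1}X_i$ from $X_n$; and associativity of convolution to collapse the nesting. Your two flagged technical points are exactly the right ones to flag. First, the independence of $Y$ and $X_n$ does require \emph{mutual} independence of $\{X_i\}$ (it fails under mere pairwise independence), and the paper's hypothesis ``the set of independent random variables'' should indeed be read as mutual independence --- which holds in the paper's application, where the $X_i$ are gradient noises of i.i.d.\ data points in a mini-batch. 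Second, the existence of the density of the sum and the legitimacy of differentiating under the integral deserve care; a slightly cleaner route for the base case is to write $F_Z(z)=\int F_{X_1}(z-x_2)\,f_{X_2}(x_2)\,dx_2$ by Fubini and verify directly that $\int_{-\infty}^{z}\left(\int f_{X_1}(u-x_2)f_{X_2}(x_2)\,dx_2\right)du = F_Z(z)$, which exhibits the convolution as a density of $Z$ without any differentiation argument. Your characteristic-function alternative is also valid, though for the paper's purposes (feeding into the unimodality/symmetry argument of Lemma~\ref{Lemma:conv_unimodal}) the real-space convolution form is the one actually needed.
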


\begin{lemma}[Theorem 2.1,~\cite{purkayastha1998simple}]\label{Lemma:conv_unimodal}
	Consider $f$ and $g$, two arbitrary unimodal distributions that are symmetric around zero. Then, the convolution $f * g$ is also a symmetric unimodal distribution with zero mean. 
\end{lemma}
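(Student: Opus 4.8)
The plan is to establish the three claimed properties of $h \coloneqq f * g$ in increasing order of difficulty: symmetry, zero mean, and unimodality. For symmetry I would argue directly from the convolution integral: substituting $y \mapsto -y$ in $h(-x) = \int f(y)\, g(-x-y)\, dy$ and using $f(-y)=f(y)$ together with $g(-z)=g(z)$ turns it into $\int f(y)\, g(x-y)\, dy = h(x)$, so $h$ is symmetric about zero. The zero-mean claim then follows immediately, since any integrable density symmetric about the origin has first moment equal to zero (and the center of symmetry is $0$ in any case).

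The substantive part is unimodality, and here symmetry is essential — the convolution of two merely unimodal densities can fail to be unimodal, so any valid argument must exploit the symmetric structure. First I would note that for a symmetric density, ``unimodal about $0$'' is equivalent to ``non-increasing on $[0,\infty)$'', so it suffices to prove that $h$ is non-increasing on the half-line. The cleanest route is the Khinchin (Choquet) representation of symmetric unimodal densities: every density that is symmetric and non-increasing on $[0,\infty)$ can be written as a mixture of symmetric uniform densities, $f(x) = \int_0^\infty u_a(x)\, dH_f(a)$ and $g(x) = \int_0^\infty u_b(x)\, dH_g(b)$, where $u_a(x) = \frac{1}{2a}\,\mathds{1}_{\{|x|\le a\}}$ and $H_f, H_g$ are distribution functions on $[0,\infty)$.

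Feeding these representations into the convolution and interchanging the order of integration (justified by Tonelli, since all integrands are nonnegative), I would obtain $h = \int_0^\infty\!\!\int_0^\infty (u_a * u_b)\, dH_f(a)\, dH_g(b)$, exhibiting $h$ as a mixture of the elementary convolutions $u_a * u_b$. Each $u_a * u_b$ is a symmetric trapezoidal density (triangular when $a=b$), which is manifestly symmetric about $0$ and non-increasing on $[0,\infty)$. The final step is the observation that a mixture of densities that are each symmetric about $0$ and non-increasing on $[0,\infty)$ is again symmetric and non-increasing on $[0,\infty)$: symmetry is preserved under averaging, and for $0 \le x_1 \le x_2$ the pointwise inequality $(u_a*u_b)(x_1) \ge (u_a*u_b)(x_2)$ integrates against $dH_f\, dH_g$ to give $h(x_1) \ge h(x_2)$. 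Hence $h$ is symmetric unimodal with zero mean, completing the argument.

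I expect the main obstacle to be invoking the Khinchin mixture representation cleanly — in particular, ensuring the representing measures $H_f, H_g$ exist for arbitrary (possibly non-smooth) symmetric unimodal densities and that the measure interchange is legitimate. An alternative that sidesteps the representation is a direct monotonicity estimate: writing $h(x) = \int_0^\infty f(t)\,[\,g(x-t) + g(x+t)\,]\, dt$ for $x \ge 0$ and showing that the bracketed kernel decreases in $x$ on average; but this is more delicate, so I would prefer the mixture argument and fall back on the direct estimate only if the representation proves awkward to invoke.
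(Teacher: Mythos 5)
Your proof is correct, but there is nothing in the paper itself to compare it against: the paper does not prove this lemma at all, it imports it verbatim as Theorem 2.1 of~\cite{purkayastha1998simple} and uses it as a black box in the proof of Proposition~\ref{Lemma:symm_unimodal_partition}. Measured against that cited source, your route is genuinely different. You give the classical Wintner-style argument: reduce symmetric unimodality to being non-increasing on $[0,\infty)$, invoke the Khinchin/Choquet representation of a symmetric unimodal law as a mixture of symmetric uniform densities $u_a$, apply Tonelli to exhibit $f*g$ as a mixture of the trapezoidal densities $u_a*u_b$, and conclude because a mixture of densities that are each symmetric about $0$ and non-increasing on $[0,\infty)$ retains both properties. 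This is sound, and the obstacle you flag is not a real one: Khinchin's theorem holds for arbitrary (possibly non-smooth, possibly atomic) unimodal distributions once one admits the degenerate component $a=0$ (a point mass at the origin), and since $\delta_0 * u_b = u_b$ the mixture computation survives unchanged. The point of the cited paper, by contrast, is precisely to avoid this machinery: Purkayastha's ``simple proof'' is a direct elementary monotonicity argument, close in spirit to the fallback you sketch in your final paragraph. So your approach buys a clean structural picture (and generalizes easily, e.g.\ to mixtures and to higher-dimensional analogues) at the price of quoting the representation theorem, while the elementary proof is self-contained. One cosmetic caveat: a symmetric unimodal law need not have a finite first moment, so the conclusion is strictly ``symmetric about zero, hence zero mean whenever the mean exists''; your parenthetical already acknowledges this, and it is all that the paper's application (Proposition~\ref{Lemma:symm_unimodal_partition}) actually needs.
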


\begin{lemma}[Lemma D.1,~\cite{pmlr-v80-bernstein18a}]\label{Lemma:node_failure_ICML18}
	Let \(\tilde{g}_{k}\) be an unbiased stochastic approximation to the $k^{th}$gradient component \(g_{k},\) with variance bounded by \(\sigma_{k}^{2} .\) Further assume that the noise distribution
	is unimodal and symmetric. Define the signal-to-noise ratio \(S_k :=\frac{\left|g_{k}\right|}{\sigma_{k}} .\) Then we have 
	$$
	\mathbb{P}\left[\operatorname{sign}\left(\tilde{g}_{k}\right) \neq \operatorname{sign}\left(g_{k}\right)\right] \leq\left\{\begin{array}{ll}
	\frac{2}{9} \frac{1}{S_k^{2}} & \text { if } S_i >\frac{2}{\sqrt{3}} \\
	\frac{1}{2}-\frac{S_k}{2 \sqrt{3}} & \text { otherwise }
	\end{array}\right.
	$$
	which is in all cases less than or equal to \(\frac{1}{2}\).
\end{lemma}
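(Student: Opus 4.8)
The plan is to reduce the sign-disagreement probability to a one-sided tail probability of the noise and then control that tail with Gauss's inequality for unimodal distributions. Write $\tilde{g}_k = g_k + \xi_k$, where $\xi_k := \tilde{g}_k - g_k$ has mean zero and, by hypothesis, is symmetric and unimodal about $0$ with second moment $v_k := \ept{\xi_k^2} \le \sigma_k^2$. Assume without loss of generality that $g_k > 0$ (the case $g_k < 0$ is identical after reflecting $\xi_k$, and $g_k = 0$ gives $S_k = 0$, for which the asserted bound is the trivial $1/2$). Then $\sign(\tilde{g}_k) \neq \sign(g_k)$ occurs exactly when $\tilde{g}_k \le 0$, i.e. when $\xi_k \le -g_k$, so that $\mathbb{P}[\sign(\tilde{g}_k) \neq \sign(g_k)] = \mathbb{P}[\xi_k \le -g_k]$. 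Using that $\xi_k$ is symmetric about zero, $\mathbb{P}[\xi_k \le -g_k] = \mathbb{P}[\xi_k \ge g_k] = \tfrac{1}{2}\mathbb{P}[|\xi_k| \ge g_k]$.

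Next I would invoke Gauss's inequality, the sharp tail bound for a unimodal law: for $\xi_k$ unimodal with mode $0$ and $\tau_k := \sqrt{v_k}$, one has $\mathbb{P}[|\xi_k| \ge \lambda \tau_k] \le \tfrac{4}{9\lambda^2}$ when $\lambda \ge 2/\sqrt{3}$ and $\mathbb{P}[|\xi_k| \ge \lambda \tau_k] \le 1 - \lambda/\sqrt{3}$ otherwise. Setting $\lambda = g_k/\tau_k =: \tilde{S}_k$, the \emph{true} signal-to-noise ratio of the estimator, and halving, gives the two-branch bound $\tfrac{1}{2}\mathbb{P}[|\xi_k| \ge g_k] \le \tfrac{2}{9\tilde{S}_k^2}$ for $\tilde{S}_k \ge 2/\sqrt{3}$ and $\le \tfrac{1}{2} - \tfrac{\tilde{S}_k}{2\sqrt{3}}$ otherwise. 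This is already exactly the claimed form, but expressed in terms of $\tilde{S}_k = |g_k|/\tau_k$ rather than $S_k = |g_k|/\sigma_k$.

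The final step accounts for the fact that the assumption only bounds the variance, $v_k \le \sigma_k^2$, whence $\tilde{S}_k \ge S_k$. Let $G(\cdot)$ denote the right-hand side of the claimed inequality as a function of the SNR. I would verify that $G$ is continuous at the crossover $2/\sqrt{3}$ (both pieces equal $1/6$ there) and strictly decreasing on each branch, hence non-increasing on $[0,\infty)$; since $\tilde{S}_k \ge S_k$ this yields $G(\tilde{S}_k) \le G(S_k)$, upgrading the bound of the previous step to the stated bound in terms of $S_k$. Because $G(0) = 1/2$ and $G$ is non-increasing, the bound is $\le 1/2$ in all cases, giving the last assertion.

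The only substantive ingredient is Gauss's inequality; everything else is bookkeeping. The cleanest route to it is Khintchine's representation theorem, which writes any symmetric unimodal distribution as a scale mixture of uniform laws $U[-a,a]$; bounding the uniform tail inside the mixture and optimizing over the threshold produces the constant $4/9$ and the crossover $2/\sqrt{3}$. I would therefore either cite Gauss's inequality directly or insert this short mixture argument. I expect this to be the main obstacle, since the factor $2/9$ (as opposed to the weaker $1/(2S_k^2)$ from Chebyshev) is precisely what unimodality buys and it requires the sharp inequality rather than a generic second-moment bound.
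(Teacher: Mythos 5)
Your proof is correct, and it follows essentially the same route as the source of this result: the paper under review does not prove this lemma itself but imports it verbatim from the cited SignSGD paper (Lemma D.1 of \cite{pmlr-v80-bernstein18a}), whose proof is exactly your argument --- reduce to a one-sided tail via symmetry, halve it, apply Gauss's inequality for a unimodal law with mode zero, and use monotonicity of the resulting bound in the SNR to replace the true noise scale by the upper bound $\sigma_k$. Your added care about the crossover value $1/6$ at $S_k = 2/\sqrt{3}$ and the Khintchine mixture-of-uniforms route to Gauss's inequality are both sound and fill in details the original leaves implicit.
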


\begin{lemma}[Hoeffding's inequality for Binomial distribution]\label{Lemma:Hoeffding_binomial}
	Let $X = \sum_{i=1}^n X_i$ be the sum of i.i.d. Bernoulli random variables $X_i \sim \textrm{Bern}(p)$. For arbitrary $\epsilon > 0$, the tail probability of $X$ is bounded as 
	\begin{align*}
	&\PP{X - np  \geq n \epsilon  } \leq {\rm e}^{-2 \epsilon^2 n},\\
	&\PP{X - np  \leq -n \epsilon  } \leq {\rm e}^{-2 \epsilon^2 n}.
	\end{align*}
	As an example, when $\epsilon = \sqrt{\frac{\log n}{n}}$, the upper bound is $\frac{2}{n^2}$, which vanishes as $n$ increases.
\end{lemma}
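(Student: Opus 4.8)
The plan is to establish both tail bounds via the standard Chernoff (exponential Markov) method, which reduces the problem to controlling the moment generating function of the centered sum. For the upper tail, I would first fix a free parameter $s > 0$ and write
\begin{align*}
\PP{X - np \geq n\epsilon} = \PP{e^{s(X-np)} \geq e^{sn\epsilon}} \leq e^{-sn\epsilon}\, \EE{e^{s(X-np)}},
\end{align*}
applying Markov's inequality to the nonnegative random variable $e^{s(X-np)}$. Because the $X_i$ are independent, the expectation factorizes as $\EE{e^{s(X-np)}} = \prod_{i=1}^n \EE{e^{s(X_i - p)}}$, so it suffices to bound a single factor.

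The central step is then to control each $\EE{e^{s(X_i-p)}}$. Each centered variable $X_i - p$ is mean-zero and takes values in the interval $[-p, 1-p]$, whose width is exactly one. Invoking Hoeffding's lemma, which states that a zero-mean random variable supported on an interval of length $\ell$ has moment generating function at most $e^{s^2 \ell^2/8}$, gives $\EE{e^{s(X_i-p)}} \leq e^{s^2/8}$. Multiplying across the $n$ independent terms yields $\PP{X - np \geq n\epsilon} \leq e^{-sn\epsilon + ns^2/8}$.

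It remains to optimize over $s$. Minimizing the exponent $-sn\epsilon + ns^2/8$ gives $s = 4\epsilon$, and substituting back produces the exponent $-2n\epsilon^2$, which is precisely the claimed bound. The lower tail $\PP{X - np \leq -n\epsilon}$ follows identically by running the same argument on the variables $p - X_i$, which are again mean-zero and supported on a unit-length interval; since the Hoeffding bound and the optimization are symmetric in $s$, the two tails share the same constant. The stated example then follows by substituting $\epsilon = \sqrt{\log(n)/n}$ into $e^{-2n\epsilon^2} = e^{-2\log n} = n^{-2}$ and adding the two tails to get $2/n^2$.

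The main obstacle, and essentially the only nonroutine ingredient, is Hoeffding's lemma itself, namely the bound $\EE{e^{sY}} \leq e^{s^2/8}$ for a mean-zero $Y$ supported on a unit-length interval. If I were to prove it from scratch rather than cite it, I would bound $e^{sY}$ by its chord across the support $[-p, 1-p]$ using convexity of $y \mapsto e^{sy}$, take expectations (using $\EE{X_i - p} = 0$) to reduce to a function $\phi(u)$ of the single variable $u = s$, and then establish $\phi(u) \leq u^2/8$ via a second-order Taylor expansion whose curvature equals the variance of a Bernoulli-type distribution and is therefore at most $1/4$. Everything else in the argument is mechanical.
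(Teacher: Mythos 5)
Your proof is correct in every step: Markov's inequality applied to $e^{s(X-np)}$, factorization via independence, Hoeffding's lemma giving $\EE{e^{s(X_i-p)}} \leq e^{s^2/8}$ since $X_i - p$ is mean-zero on a unit-length interval, and optimization at $s = 4\epsilon$ yielding the exponent $-2n\epsilon^2$, with the symmetric argument for the lower tail and the two tails summing to $2/n^2$ in the stated example. Note, however, that the paper offers no proof of this lemma at all --- it is listed among the preliminary results as the classical Hoeffding inequality specialized to Bernoulli sums and is simply invoked in the proofs of Lemma~\ref{Thm:q_k_updated}, Theorem~\ref{Thm:global_error_majority}, and Lemma~\ref{Lemma:q_i_bound_conditional} --- so your Chernoff-method derivation is the standard textbook argument supplying exactly what the paper leaves implicit.
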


\section{Proof of Theorems}

\subsection{Proof of Theorem~\ref{Thm:global_error_majority}}\label{Sec:Proof_of_Theorem:global_error_majority}
	Let $q_{\textrm{max}} = \max_{k} q_k$, where $q_k$ is defined in~\eqref{Eqn:q_k}. Moreover, define $u_k = 1 - q_k$ and $u_{\textrm{min}} = 1 - q_{\textrm{max}}$. Then, $u_{\textrm{min}}^{\star} = 1 - \max\limits_{k} q_k^{\star} \leq 1 - \max\limits_{k} q_k = u_{\textrm{min}}$ holds. 
	Now, we find the global estimation error probability $\PP{ \hat{\mu}_k \neq \sign(g_k)}$ for arbitrary $k$ as below. In the worst case scenario that maximizes the global error, a Byzantine node $i$ always sends the wrong sign bit, i.e., $y_{i,k} \neq \sign(g_k)$. Let $X_{i,k}$ be the random variable which represents whether the $i^{\textrm{th}}$ Byzantine-free node transmits the correct sign for coordinate $k$:
	\begin{align*}
	X_{i,k} = \begin{cases}
	1, & c_{i,k} = \sign(g_k)  \\
	0, & \textrm{otherwise}
	\end{cases}
	\end{align*}
	Then, $X_{i,k}  \sim \textrm{Bern}(p_{k})$.
	Thus, the number of nodes sending the correct sign bit is $X_{\textrm{global,k}} = \sum_{i=1}^{n(1-\alpha)} X_{i,k}$, the sum of $X_{i,k}$'s for $n-b = n(1-\alpha)$ Byzantine-free nodes,
	which follows the binomial distribution
	$X_{\textrm{global,k}} \sim \mathcal{B}(n(1-\alpha), p_{k})$.
	From the Hoeffding's inequality (Lemma~\ref{Lemma:Hoeffding_binomial}), we have
	\begin{align}\label{Eqn:global_error_Hoeffding}
	\PP{X_{\textrm{global,k}} - n(1-\alpha)u_{k} \leq - n (1-\alpha) \epsilon_k } \leq {\rm e}^{-2 \epsilon_k^2 n (1-\alpha)}
	\end{align}
	for arbitrary $\epsilon_k > 0$.
	Set $\epsilon_k = u_{k} - \frac{1}{2(1-\alpha)}$ and define $\epsilon_{\textrm{min}}= \min\limits_k \epsilon_k$. Then, 
	we have
	\begin{align}\label{Eqn:p_min_lower}
	u_{\textrm{min}}^{\star} - \frac{1}{2(1-\alpha)} > \sqrt{\frac{1}{2(1-\alpha)}} \sqrt{\frac{\log(\Delta)}{n}},
	\end{align}
	which is proven as below.
	Let $Y = \sqrt{2(1-\alpha)}$. Then,~\eqref{Eqn:p_min_lower} is equivalent to 
	\begin{align}\label{Eqn:p_min^star}
	u_{\textrm{min}}^{\star} Y^2 - \sqrt{\frac{\log(\Delta)}{n}} Y - 1 > 0,
	\end{align}
	which is all we need to prove.
	Note that~\eqref{Eqn:p_min_lower} implies that
	\begin{align*}
	\frac{Y^2}{2} > \frac{( \sqrt{\log(\Delta)/n} + \sqrt{ \log(\Delta)/n + 4 u_{\textrm{min}}^{\star} }   )^2}{8(u_{\textrm{min}}^{\star})^2},
	\end{align*}
	which is equivalent to 
	\begin{align}\label{Eqn:Y_bound}
	Y > \frac{ \sqrt{\log(\Delta)/n} }{2 u_{\textrm{min}}^{\star}} + \frac{ \sqrt{ \log(\Delta)/n + 4 u_{\textrm{min}}^{\star} }   }{2 u_{\textrm{min}}^{\star}}. 
	\end{align}
	Then,
	\begin{align*}
	u_{\textrm{min}}^{\star} Y^2 - \sqrt{\frac{\log(\Delta)}{n}} Y - 1 &= u_{\textrm{min}}^{\star} \left(  Y - \frac{1}{2 u_{\textrm{min}}^{\star} } \sqrt{\frac{\log(\Delta)}{n}} \right)^2 - \frac{1}{4 u_{\textrm{min}}^{\star}} \frac{\log(\Delta)}{n} - 1 > 0,
	\end{align*}
	which proves~\eqref{Eqn:p_min^star} and thus~\eqref{Eqn:p_min_lower}.
	Thus, we have $\epsilon_k \geq \epsilon_{\textrm{min}} \geq u_{\textrm{min}}^{\star} - \frac{1}{2(1-\alpha)} > \sqrt{\frac{1}{2(1-\alpha)}} \sqrt{\frac{\log(\Delta)}{n}} > 0$.
	Then,~\eqref{Eqn:global_error_Hoeffding} reduces to
	\begin{align*}
	P_{\textrm{global,k}} = 	\PP{ \hat{\mu}_k \neq \sign(g_k)}   = 	\PP{X_{\textrm{global,k}} \leq n/2} \leq {\rm e}^{-2 \epsilon_k^2 n (1-\alpha)} < 1/\Delta,
	\end{align*}
	which completes the proof.

\subsection{Proof of Theorem \ref{Thm:Convergence}}\label{Sec:Proof_of_Theorem:Convergence}

	Here we basically follow the proof of~\cite{bernstein2018signsgd_ICLR} with a slight modification, reflecting the result of Theorem~\ref{Thm:global_error_majority}.
	Let $\vct{w}_1, \cdots, \vct{w}_T$ be the sequence of updated models at each step. 
	Then, we have
	\begin{align*}
	f(\vct{w}_{t+1}) - f(\vct{w}_{t}) & \stackrel{(a)}{\leq} g(\vct{w}_{t})^T (\vct{w}_{t+1} - \vct{w}_t) + \sum_{k=1}^d \frac{L_k}{2} (\vct{w}_{t+1} - \vct{w}_t)_k^2 \stackrel{(b)}{=} - \gamma g(\vct{w}_t)^T \hat{\vct{\mu}} + \gamma^2 \sum_{k=1}^d \frac{L_k}{2} \\
	&= -\gamma \lVert g(\vct{w}_t) \rVert_1 + 2 \gamma \sum_{i=1}^d \lvert (g(\vct{w}_t))_k \rvert \cdot \mathds{1}_{ \{\sign ((g(\vct{w}_t))_k) \neq \hat{\mu}_k \}} + \gamma^2 \sum_{k=1}^d \frac{L_k}{2},
	\end{align*}
	where (a) is from Assumption~\ref{Assumption:smooth}, and (b) is obtained from $\vct{w}_{t+1} = \vct{w}_t  - \gamma \vct{\hat{\mu}} $ and $\lvert \hat{\mu}_k \rvert = 1$. Let $g_k$ be a simplified notation of $(g(\vct{w}_t))_k$. Then, taking the expectation of the equation above, we have
	\begin{align}\label{Eqn:expectation_convergence}
	\EE{f(\vct{w}_{t+1}) - f(\vct{w}_{t}) } &\leq -\gamma \lVert g(\vct{w}_t) \rVert_1 + \frac{\gamma^2}{2} \lVert \vct{L} \rVert_1  + 2 \gamma \sum_{k=1}^d \lvert g_k \rvert \  \PP{ \hat{\mu}_k \neq  \sign (g_k)  } .
	\end{align}
	Inserting the result of Theorem~\ref{Thm:global_error_majority} to \eqref{Eqn:expectation_convergence}, we have
	\begin{align*}
	\EE{f(\vct{w}_{t+1}) - f(\vct{w}_{t}) } &\leq -\gamma \left(1 - \frac{2}{\Delta}\right)  \lVert g(\vct{w}_t) \rVert_1 + \frac{\gamma^2}{2} \lVert \vct{L} \rVert_1   \\
	&\stackrel{(a)}{=} -\sqrt{\frac{f(\vct{w}_{0})-f^{*}}{\|\vct{L}\|_{1} T}} \left(1 - \frac{2}{\Delta} \right)  \left\|g(\vct{w}_t)\right\|_{1}+\frac{f(\vct{w}_{0})-f^{*}}{2 T}
	\end{align*}
	where (a) is from the parameter settings of $\gamma$ in the statement of Theorem~\ref{Thm:Convergence}.
	Thus, we have 
	\begin{align*}
	f(\vct{w}_{0})-f^{*} & \geq f(\vct{w}_{0})-\EE{f(\vct{w}_{T})} =\sum_{t=0}^{T-1} \EE{f(\vct{w}_{t})-f(\vct{w}_{t+1})} \\ & \geq \sum_{t=0}^{T-1} \EE{\sqrt{\frac{f(\vct{w}_{0})-f^{*}}{\| \vct{L} \|_{1} T}} \left( 1- \frac{2}{\Delta} \right)  \left\|g(\vct{w}_{t})\right\|_{1}-\frac{f(\vct{w}_{0})-f^{*}}{2 T}} \\ &=\sqrt{\frac{T\left(f(\vct{w}_{0})-f^{*}\right)}{\|\vct{L}\|_{1}}} \EE{\frac{1}{T} \sum_{t=0}^{T-1}\left\|g(\vct{w}_{t})\right\|_{1}} \left( 1- \frac{2}{\Delta} \right) 
	-\frac{f(\vct{w}_{0})-f^{*}}{2}.    
	\end{align*}
	The expected gradient (averaged out over $T$ iterations) is expressed as
	\begin{align*}
	\frac{1}{T} \sum_{t=0}^{T-1}  \EE{ \left\|g(\vct{w}_{t})\right\|_{1}} \leq \frac{1}{\sqrt{T}} \frac{1}{1-\frac{2}{\Delta}}  \frac{3}{2} \sqrt{\lVert \vct{L} \rVert_1 (f(\vct{w}_0) - f^*) } 	\rightarrow 0 \quad \textrm { as   } T \rightarrow \infty 
	\end{align*}
	Thus, the gradient becomes zero eventually, which completes the convergence proof.

\subsection{Proof of Theorem \ref{Thm:upper_bound}}\label{Sec:Proof_of_Thm:upper_bound}

Recall that according to Lemma~\ref{Lemma:NSCondition} and the definition of $M_t$ in \eqref{Eqn:M_t}, the system using the allocation matrix $\mtx{G}$ is perfect $b-$Byzantine tolerable if and only if
	\begin{equation}\label{Eqn:Byz_Lemma_supp}
	\sum_{v=1}^{(n-1)/2} \lvert S_v (\vct{m})
	\rvert  \leq \frac{n-1}{2} -b
	\end{equation}
	holds for arbitrary message vector $\vct{m} \in M_{\frac{n-1}{2}}$, where $S_v(\vct{m})$ is defined in~\eqref{Eqn:J_v}.
	Note that we have
	\begin{align}\label{Eqn:G_j_norm}
	\ZNorm{\mtx{G}(j,:)} =
	\begin{cases}
	1, &  1 \leq j \leq s \\
	2b+1, & s+1 \leq j \leq s+L \\
	n, & s+L+1 \leq j \leq n.
	\end{cases}
	\end{align}
	from Fig. \ref{Fig:Generator_matrix}. Thus, the condition in \eqref{Eqn:Byz_Lemma_supp} reduces to 
	\begin{align}\label{Eqn:Byz_Lemma_supp_2}
	\lvert S_1 (\vct{m}) \rvert + \lvert S_{b+1} (\vct{m}) \rvert 
	\leq s.
	\end{align}
	Now all that remains is to show that \eqref{Eqn:Byz_Lemma_supp_2} holds for arbitrary message vector
	$\vct{m} \in M_{\frac{n-1}{2}}$.

	Consider a message vector $\vct{m}\in M_{\frac{n-1}{2}}$ denoted as $\vct{m} = [m_1, m_2, \cdots, m_n]$. Here, we note that 
	\begin{align}
	S_1(\vct{m}) & \subseteq \{ 1, 2, \cdots, s \}, 	\quad S_{b+1}(\vct{m}) \subseteq \{ s+1, s+2, \cdots, s+L \} \label{Eqn:J_1_subset}
	\end{align}
	hold from Fig. \ref{Fig:Generator_matrix}. Define 
	\begin{align}
	v(\vct{m}) \coloneqq \lvert \{ i\in \{s+1, s+2, \cdots, n  \} : m_i = 1    \}   \rvert,
	\end{align}
	which is the number of $1$'s in the last $(n-s)$ coordinates of message vector $\vct{m}$.
	Since $\vct{m}\in M_{\frac{n-1}{2}}$, we have 
	\begin{align} \label{Eqn:v_complement}
	\lvert \{ i\in \{1,2,\cdots, s \} : m_i = 1 \}   \rvert = \frac{n-1}{2} - v(\vct{m}).
	\end{align}
	Note that since
	$\mtx{G}(1:s, :) = [\ \mtx{I}_s \ | \ \mathbf{0}_{s \times (n-s)}]$, we have 
	\begin{align}\label{Eqn:inner_product_J_1}
	\vct{m}^T \mtx{G}(j,:) = \mathds{1}_{\{m_j = 1\}} , \quad  \ZNorm{\mtx{G}(j,:)} = 1, \quad \forall j \in [s].
	\end{align}
	
	Combining \eqref{Eqn:J_v}, \eqref{Eqn:J_1_subset}, \eqref{Eqn:v_complement}, and \eqref{Eqn:inner_product_J_1}, we have
	$\lvert S_1(\vct{m})  \rvert = \frac{n-1}{2}-v(\vct{m}).$
	Now, in order to obtain \eqref{Eqn:Byz_Lemma_supp_2}, all we need to prove is to show
	\begin{align}\label{Eqn:J_b+1}
	\lvert S_{b+1}(\vct{m}) \rvert \leq s - \left(\frac{n-1}{2} - v(\vct{m})\right) \overset{\mathrm{(a)}}{=} v(\vct{m}) - b
	\end{align}
	where $(a)$ is from the definition of $s$ in Algorithm \ref{Alg:DetCode}. We alternatively prove that\footnote{Note that \eqref{Eqn:J_b+1_alternative} implies \eqref{Eqn:J_b+1}, when the condition part is restricted to $\lvert S_{b+1}(\vct{m}) \rvert = q$.}  
	\begin{align}\label{Eqn:J_b+1_alternative}
	\text{if } \lvert S_{b+1}(\vct{m}) \rvert \geq q \text{ for some }  q \in \{0,1,\cdots, L\}, \text{ then } v(\vct{m}) \geq b+q.
	\end{align}
	Using the definition
	$M^{(q)} \coloneqq \{ \vct{m} \in M_{\frac{n-1}{2}} : \lvert S_{b+1}(\vct{m}) \rvert \geq q  \},$
	the statement in \eqref{Eqn:J_b+1_alternative} is proved as follows: for arbitrary $q \in \{0,1,\cdots, L\}$, we first find the minimum $v(\vct{m})$ among $\vct{m} \in M^{(q)}$, i.e., we obtain a closed-form expression for 
	\begin{align}\label{Eqn:optimal_v_def}
	v^*_q \coloneqq &\min_{\vct{m} \in M^{(q)}} v(\vct{m}).
	\end{align}
	Second, we show that $v^*_q \geq b+ q$ holds for all $q \in \{0,1,\cdots, L\}$, which completes the proof.
	\ifarxiv
	\begin{figure}
		[!t]
		\centering
		\includegraphics[width=0.5\linewidth]{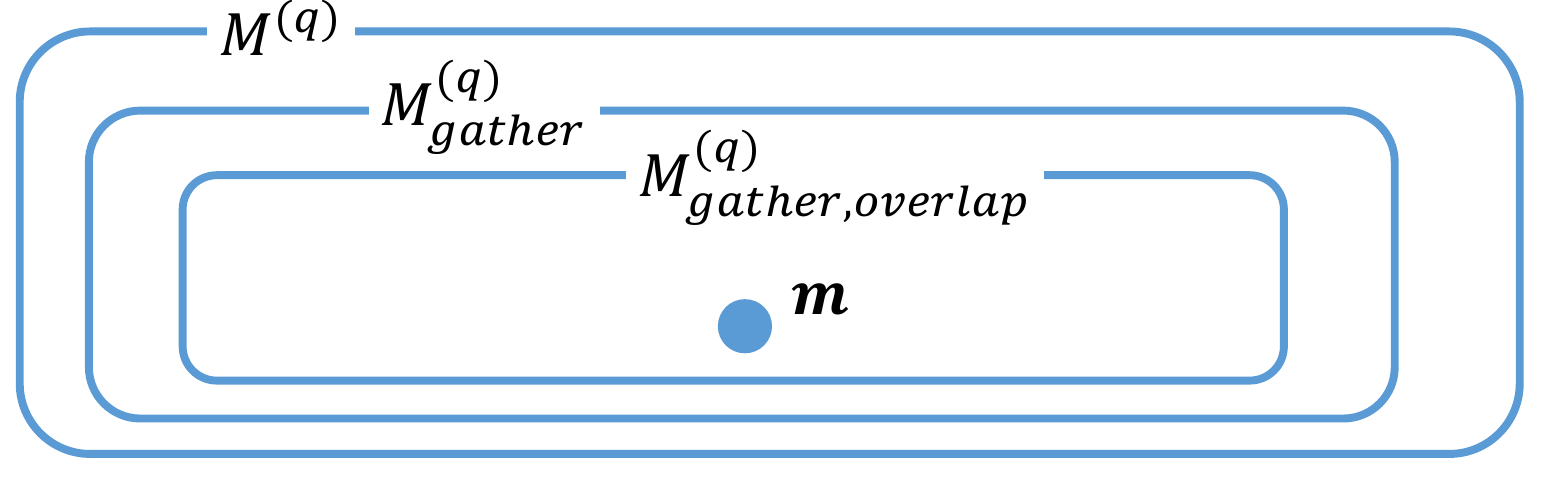}
		\caption{%
			Sets of message vectors used in proving Lemmas \ref{Lemma:gather} and \ref{Lemma:gather_overlap}}
		\label{Fig:M_set}
	\end{figure}

	The expression for $v^*_q$ can be obtained as follows. Fig. \ref{Fig:M_set} supports the explanation. First, define 
	\begin{align}
	M_{gather}^{(q)} \coloneqq \{ \vct{m} \in M^{(q)} &: 
	\text{ if } j, j+2 \in S_{b+1}(\vct{m}), \text{ then } j+1 \in S_{b+1}(\vct{m})  \},
	\end{align} 
	the set of message vectors $\vct{m}$ which satisfy that $S_{b+1}(\vct{m})$ is consisted of consecutive integers. 
	We now provide a lemma which states that within $M_{gather}^{(q)}$, there exists a minimizer of the optimization problem \eqref{Eqn:optimal_v_def}.
	\begin{lemma}\label{Lemma:gather}
		For arbitrary $q \in \{0,1,\cdots, L\}$, 
		we have
		\begin{align*}
		v_q^* = \min_{\vct{m} \in M_{gather}^{(q)}} v(\vct{m}).
		\end{align*}
	\end{lemma}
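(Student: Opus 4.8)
The plan is to establish the nontrivial direction $\min_{\vct{m}\in M_{gather}^{(q)}} v(\vct{m}) \le v_q^*$, since $M_{gather}^{(q)}\subseteq M^{(q)}$ already gives $\min_{\vct{m}\in M_{gather}^{(q)}} v(\vct{m}) \ge v_q^*$. Thus it suffices to take an arbitrary minimizer $\vct{m}$ of \eqref{Eqn:optimal_v_def} and transform it, without increasing $v$ and without dropping $|S_{b+1}|$ below $q$, into a vector whose covered set has no single hole. The structural fact I would lean on is the column geometry forced by Step 3 of Algorithm~\ref{Alg:DetCode}: the middle rows lay down their $2b+1$ ones with stride $b+1$, so window $l$ and window $l+1$ share exactly $b$ tail columns, whereas windows $l$ and $l'$ with $|l-l'|\ge 2$ occupy disjoint tail columns. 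Hence if $S_{b+1}(\vct{m})$ is split into its maximal runs of consecutive indices, the ones certifying distinct runs sit in disjoint column blocks, and $v(\vct{m})$ equals (for a minimizer) the sum of the per-run covering weights.

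First I would define a gathering move. If $\vct{m}\notin M_{gather}^{(q)}$ then $S_{b+1}(\vct{m})$ contains a gap; take the leftmost gap, lying between a run ending at window $l$ and the next run starting at window $l'>l+1$, and slide every covered column block to the right of this gap left by one stride of $b+1$ columns, so that the windows indexed $\ge l'$ are re-realized at indices $\ge l'-1$. Because this is a rigid translation by exactly one stride, each previously covered window stays covered, so $|S_{b+1}|$ is unchanged (hence still $\ge q$) and $v$ does not increase---it can only drop, if two translated ones now coincide in a shared column. I would use the span $\max S_{b+1}-\min S_{b+1}$ as a monovariant: each slide decreases it by one while leaving $v$ non-increasing, so after finitely many slides all gaps are closed, the support of $S_{b+1}$ becomes a single run, and the resulting vector lies in $M_{gather}^{(q)}$ with tail weight $\le v_q^*$.

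The hard part will be proving cleanly that a slide never increases $v$. The delicate case is when the slide finally brings the right run adjacent to the left run: the two then begin sharing a $b$-column overlap, and I must check that the translated ones together with the untouched left ones still certify $b+1$ ones in every affected window while the overall column count does not grow. This amounts to a merge inequality $\kappa(r_1+r_2)\le \kappa(r_1)+\kappa(r_2)$ for the minimum weight $\kappa(r)$ needed to cover a run of $r$ consecutive windows, whose source is precisely that adjacent windows overlap in exactly $b$ columns, letting a single one be charged to two neighboring windows; I would prove it by explicitly interleaving the two optimal coverings across the newly shared overlap. Once the merge inequality is secured, termination of the monovariant plus the trivial inclusion yield $\min_{\vct{m}\in M_{gather}^{(q)}} v(\vct{m}) = v_q^*$, which is the assertion. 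As a more direct fallback I would instead bound $v(\vct{m})\ge \sum_i \kappa(r_i)$ over the runs and apply the merge inequality to conclude that a single contiguous run attains the optimum, but the sliding argument is self-contained and fits the set-surgery picture of Fig.~\ref{Fig:M_set}.
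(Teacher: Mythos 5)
Your overall strategy is the same as the paper's: iteratively close gaps in $S_{b+1}(\vct{m})$ by sliding the right-hand blocks of ones leftward along the stride-$(b+1)$ structure of $\mtx{A}$, arguing that coverage is preserved and $v$ does not increase. (The paper collapses a whole gap of width $\delta$ in one move, shifting by $(\delta-1)(b+1)$ and taking coordinatewise maxima on the $b$ overlapping columns; you shift by one stride at a time, which is an inessential difference.) However, there is a genuine gap in your argument, and it sits exactly at the point you dismiss as benign. Both $M^{(q)}$ and $M_{gather}^{(q)}$ are by definition subsets of $M_{\frac{n-1}{2}}$, i.e.\ their elements have weight exactly $\floor{n/2}$. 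When your slide makes translated ones ``coincide in a shared column,'' the weight of the resulting vector drops below $\frac{n-1}{2}$, so the vector you produce is \emph{not} in $M_{gather}^{(q)}$ at all, and the chain of slides never yields a witness for $\min_{\vct{m}\in M_{gather}^{(q)}} v(\vct{m}) \le v_q^*$. To repair this you must re-insert the lost ones: if there are enough zeros among the first $s$ coordinates they can be absorbed there at no cost to $v$ (the paper's Case~I), but if the head block is already full, the surplus must go into the tail and \emph{increases} $v$, and one has to verify that the net change is still non-positive (the paper's Case~II, via the bookkeeping $v(\vct{m}'') = v(\vct{m}) - \min\{\epsilon,\epsilon_0\}$, together with the check that $\lvert S_{b+1}\rvert \ge q$ survives the re-insertion). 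This weight-restoration accounting is the actual core of the paper's proof and is entirely absent from your proposal.

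Conversely, the step you single out as ``the hard part''---the merge inequality $\kappa(r_1+r_2)\le \kappa(r_1)+\kappa(r_2)$ when two runs become adjacent---is not needed in this approach. Under a rigid translation by $b+1$, distinct columns map to distinct columns, so a window with at least $b+1$ ones re-realizes one index to the left with at least $b+1$ ones; coincidences with the untouched ones of the left run only merge ones (the paper's $\max$ operation) and can never destroy coverage. So adjacency is handled automatically, and no separate interleaving-of-optimal-coverings argument is required. In short: your proposal misdiagnoses the difficulty---the delicate point is preserving membership in $M_{\frac{n-1}{2}}$ while keeping $v$ non-increasing, not merging runs---and without that piece the induction does not go through.
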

	\begin{proof}
		From Fig. \ref{Fig:M_set} and the definition of $v^{*}_{q}$, all we need to prove is the following statement: for all $\vct{m} \in M^{(q)} \cap (M_{gather}^{(q)})^c$, we can assign another message vector $\vct{m}^* \in M_{gather}^{(q)}$ such that $v(\vct{m}^*) \leq v(\vct{m})$ holds. Consider arbitrary $\vct{m} \in M^{(q)} \cap (M_{gather}^{(q)})^c$, denoted as $\vct{m} = [m_1, m_2, \cdots, m_n]$. Then, there exist integers $j \in \{1, \cdots, L \}$ and $\delta \in \{2, 3, \cdots, L-j \}$ such that $s+j, s+j+\delta \in S_{b+1}(\vct{m})$ and $s+j+1, \cdots, s+j+\delta-1 \notin S_{b+1}(\vct{m})$ hold. Select the smallest $j$ which satisfies the condition. 
		Consider $\vct{m}' =[m_1', \cdots, m_n']$ generated as the following rule:
		\begin{enumerate}
			\item The first $s+j(b+1)$ elements (which affect the first $j$ rows of $\mtx{A}$ in Figure \ref{Fig:Generator_matrix})  of $\vct{m}'$ is identical to that of $\vct{m}$. 
			\item The last $n-(j+\delta-1)(b+1)-s$ elements of $\vct{m}$ are shifted to the left by $(\delta-1)(b+1)$, and inserted to $\vct{m}'$. In the shifting process, we have $b$ locations where the original $m_i$ and the shifted $m_{i+(\delta-1)(b+1)}$ overlap. In such locations, $m_i'$ is set to the maximum of two elements; if either one is 1, we set $m_i'=1$, and otherwise we set $m_i'=0$.
		\end{enumerate}

		This can be mathematically expressed as below:
		\begin{align}\label{Eqn:m_i'}
		m_i' =
		\begin{cases}
		m_i, & 1 \leq i \leq s+j(b+1) \\
		\max \{ m_i, m_{i+(\delta-1) (b+1)}  \}, & s+j(b+1)+1 \leq i \leq s+(j+1)(b+1) \\
		m_{i+(\delta-1) (b+1)} , & s+(j+1)(b+1)+1 \leq i \leq n-(\delta-1)(b+1) \\
		0, & n-(\delta-1)(b+1)+1 \leq i \leq n
		\end{cases}
		\end{align}
		Note that we have
		\begin{align}\label{Eqn:m_i_sum}
		\sum_{i=1}^n m_i = \frac{n-1}{2}
		\end{align}
		since $\vct{m} \in M_{(n-1)/2}$. Moreover, \eqref{Eqn:m_i'} implies
		\begin{align}
		\sum_{i=1}^n m_i' &= \sum_{i=1}^{s+j(b+1)} m_i' +  \sum_{i=s+j(b+1)+1}^{s+(j+1)(b+1)} m_i' +  \sum_{i=s+(j+1)(b+1)+1}^{n-(\delta-1)(b+1)} m_i'  \nonumber\\
		&= \sum_{i=1}^{s+j(b+1)} m_i +  \sum_{i=s+j(b+1)+1}^{s+(j+1)(b+1)} m_i' +  \sum_{i=s+(j+\delta)(b+1)+1}^{n} m_i \nonumber\\
		&\overset{\eqref{Eqn:m_i_sum}}{=} \frac{n-1}{2} - \left( \sum_{i=s+j(b+1)+1}^{s+(j+\delta)(b+1)} m_i - \sum_{i=s+j(b+1)+1}^{s+(j+1)(b+1)} m_i' \right)  \overset{\mathrm{(a)}}{\leq} \frac{n-1}{2} \label{Eqn:m_i'_sum}
		\end{align}
		where Eq.$(a)$ is from 
		\begin{align*}
		\sum_{i=s+j(b+1)+1}^{s+(j+\delta)(b+1)} m_i &\overset{\mathrm{(b)}}{\geq} \sum_{i=s+j(b+1)+1}^{s+(j+1)(b+1)} (m_i + m_{i+(\delta-1)(b+1)}) \geq \sum_{i=s+j(b+1)+1}^{s+(j+1)(b+1)} \max \{ m_i, m_{i+(\delta-1)(b+1)} \} \\
		&\overset{\eqref{Eqn:m_i'}}{=} \sum_{i=s+j(b+1)+1}^{s+(j+1)(b+1)} m_i'
		\end{align*}
		and Eq.$(b)$ is from $\delta \geq 2$. 
		Note that 
		\begin{align}\label{Eqn:v_message_case0}
		v(\vct{m}') = v(\vct{m}) - \epsilon
		\end{align}
		holds for 
		\begin{align}\label{Eqn:weight_gap}
		\epsilon \coloneqq \frac{n-1}{2} - \sum_{i=1}^n m_i',
		\end{align}
		which is a non-negative integer from \eqref{Eqn:m_i'_sum}.
		Now, we show the behavior of $S_{b+1}(\vct{m})$ as follows. Recall that for $j_0 \in \{s+1, \cdots, s+L\}$,
		\begin{equation}\label{Eqn:A_one_sequence}
		\mtx{G} (j_0, i_0) = 
		\begin{cases}
		1, & \text{ if } s+(j_0-s-1)(b+1)+ 1 \leq i_0 \leq s+(j_0-s-1)(b+1)+2b+1\\
		0, & \text{ otherwise}
		\end{cases}
		\end{equation}
		holds from Algorithm \ref{Alg:DetCode} and Fig. \ref{Fig:Generator_matrix}. Define 
		\begin{align*}
		S_+  & \coloneqq \{ j' \in \{ s+j+\delta, \cdots, s+L \} : j' \in S_{b+1}(\vct{m})  \}, \\
		S_- & \coloneqq \{ j' \in \{ s+1, \cdots, s+j \} : j' \in S_{b+1}(\vct{m})  \}.
		\end{align*}
		From \eqref{Eqn:m_i'} and \eqref{Eqn:A_one_sequence}, we have
		\begin{align*}
		\begin{cases}
		S_- \subseteq S_{b+1}(\vct{m}'), \\
		\text{ if } j' \in S_+, \text{ then } j'-(\delta-1) \in S_{b+1}(\vct{m}').
		\end{cases}
		\end{align*}
		Thus, we have 
		\begin{align}\label{Eqn:J_b+1_bound_m'}
		\lvert S_{b+1} (\vct{m}')  \rvert  \geq \lvert S_- \rvert +\lvert S_+ \rvert = \lvert S_{b+1}(\vct{m}) \rvert \overset{\mathrm{(a)}}{\geq}  q
		\end{align}
		where Eq.$(a)$ is from $\vct{m} \in M^{(q)}$.

		Now we construct $\vct{m}'' \in M^{(q)}$ which satisfies $v(\vct{m}'') \leq v(\vct{m})$.
		Define $S_0 \coloneqq \{ i\in [s]: m_i' = 0 \}$ and $\epsilon_0 \coloneqq \lvert S_0 \rvert $. The message vector $\vct{m}'' =[m_1'', \cdots, m_n'']$ is defined as follows. 
		
		\textbf{Case I} (when $\epsilon \leq \epsilon_0$):
		Set $m_i'' = m_i'$ for $i \in \{s+1, s+2, \cdots, n\}$. Randomly select $\epsilon$ elements in $S_0$, denoted as
		$\{i_1, \cdots, i_{\epsilon}\} = S_0^{(\epsilon)} \subseteq S_0.$
		Set $m_i'' = 1$ for $i \in S_0^{(\epsilon)}$, and set $m_i'' = m_i'$ for $i \in S_0 \setminus S_0^{(\epsilon)}$.
		Note that this results in 
		\begin{align}
		v(\vct{m}'') = v(\vct{m}').  \label{Eqn:v_message_case1}
		\end{align}
		
		\textbf{Case II} (when $\epsilon > \epsilon_0$):	
		Set $m_i'' = 1$ for $i \in [s]$. Define $S_1 \coloneqq \{ i\in \{ s+1, \cdots, n\}: m_i' = 0 \}$. Randomly select $\epsilon-\epsilon_0$ elements in $S_1$, denoted as
		$\{i_1', \cdots, i_{\epsilon-\epsilon_0}'\} = S_1^{(\epsilon)} \subseteq S_1.$
		Set $m_i'' = 1$ for $i \in S_1^{(\epsilon)}$, and set $m_i'' = m_i'$ for $i \in \{s+1, \cdots, n\} \setminus S_1^{(\epsilon)}$.
		Note that this results in 
		\begin{align}
		v(\vct{m}'') = v(\vct{m}') + (\epsilon-\epsilon_0).  \label{Eqn:v_message_case2}
		\end{align}
		
		Note that in both cases, the weight of $\vct{m}''$ is 
		\begin{align}\label{Eqn:m''_weight}
		\ZNorm{\vct{m}''} = \sum_{i=1}^n m_i'' = \sum_{i=1}^n m_i'  + \epsilon \overset{\eqref{Eqn:weight_gap}}{=} \frac{n-1}{2}.
		\end{align}
		Moreover,
		\begin{align}\label{Eqn:J_b+1_bound_m''}
		\lvert S_{b+1}(\vct{m}'') \rvert \overset{\textrm{(a)}}{\geq} \lvert S_{b+1}(\vct{m}') \rvert  \overset{\eqref{Eqn:J_b+1_bound_m'}}{\geq} q
		\end{align}
		holds where Eq.$(a)$ is from the fact that all elements of 
		$\vct{m}''-\vct{m}$ are non-negative. Finally, %
		\begin{align}\label{Eqn:v_gap}
		v(\vct{m}'') = v(\vct{m}) - \min \{ \epsilon, \epsilon_0 \} \leq v(\vct{m})
		\end{align}
		holds from \eqref{Eqn:v_message_case0}, \eqref{Eqn:v_message_case1}, and \eqref{Eqn:v_message_case2}. Combining \eqref{Eqn:m''_weight}, \eqref{Eqn:J_b+1_bound_m''} and \eqref{Eqn:v_gap}, one can confirm that $\vct{m}'' \in M^{(q)}$ and $v(\vct{m}'') \leq v(\vct{m})$ hold; this \textit{gathering} process\footnote{In Fig. \ref{Fig:gathering_process}, one can confirm that $S_{b+1}(\vct{m})$ is not consisted of consecutive integers (i.e., there's a gap), while $S_{b+1}(\vct{m}'')$ has no gap. Thus, we call this process as \textit{gathering} process.} maintains the weight of a message vector and does not increase the $v$ value. Let $\vct{m}^*$ be the message vector generated by applying this gathering process to $\vct{m}$ sequentially until $S_{b+1}(\vct{m}^*)$ is consisted of consecutive integers.
		Then, $\vct{m}^*$ satisfies the followings:
		\begin{enumerate}
			\item $S_{b+1}(\vct{m}^*)$ contains more than $q$ elements. Moreover, since $S_{b+1}(\vct{m}^*)$ is consisted of consecutive integers, we have $ \vct{m}^* \in M_{gather}^{(q)}$. 
			\item $v(\vct{m}^*) \leq  v(\vct{m}'') \leq v(\vct{m})$ holds.
		\end{enumerate}
		Since the above process of generating $\vct{m}^* \in M_{gather}^{(q)}$ is valid for arbitrary message vector $\vct{m} \in M^{(q)} \cap (M_{gather}^{(q)})^c$, this completes the proof.

	\begin{figure}
		\centering
		\includegraphics[width=0.85\linewidth]{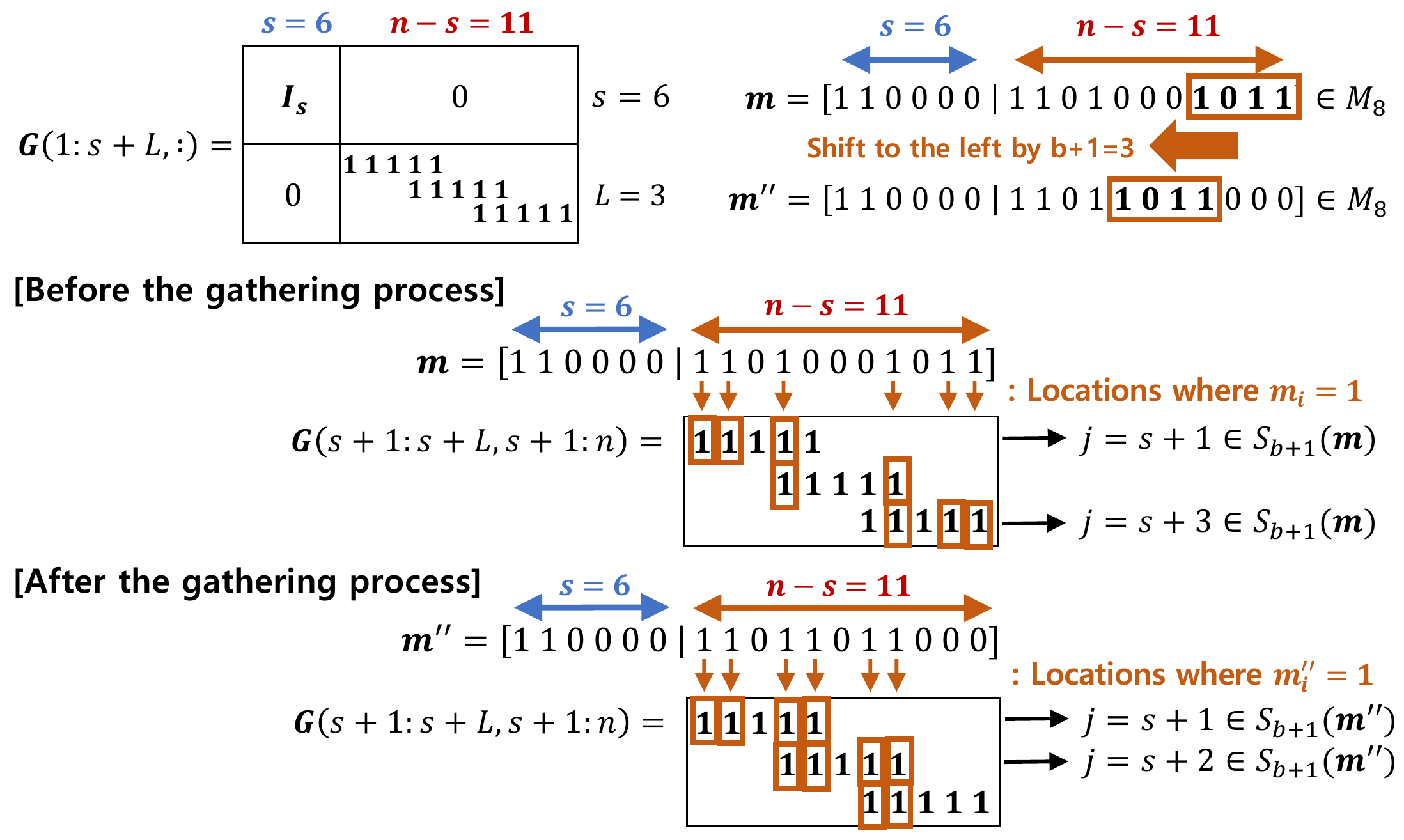}
		\caption{The gathering process illustrated in the proof of Lemma \ref{Lemma:gather}, when $n=17, b=2$. Under this setting, we have $S_{b+1}(\vct{m}) = S_{3}(\vct{m}) = \{ j \in \{s+1, \cdots, s+L\} : \vct{m}^T G(j,:) \geq 3  \}$. Before the gathering process, we have $S_{b+1}(\vct{m})  = \{s+1, s+3\}$, while $S_{b+1}(\vct{m}'')  = \{s+1, s+2\}$ holds after the process.}
		\label{Fig:gathering_process}
	\end{figure}

	Now consider arbitrary message vectors satisfying $\vct{m} \in M_{gather}^{(q)}$. 
	Then, we have
	\begin{align}\label{Eqn:J_b+1_overlap}
	S_{b+1}(\vct{m}) = \{j, j+1, \cdots, j+\delta-1 \} 
	\end{align}
	for some $j \in \{ s+1, \cdots, s+L-\delta+1\}$ and $\delta \geq q$.
	Here, we define 
	\begin{align}\label{Eqn:M_gather_overlap}
	M_{gather,overlap}^{(q)} = \left\{ \vct{m} \in M_{gather}^{(q)} : m_{s+(j_0-s)(b+1)}= 0 \text{ for } j_0 \in \{j, \cdots, j  + \delta - 1 \} \right\}
	\end{align}
	We here prove that arbitrary message vector $\vct{m} \in M_{gather}^{(q)}$ can be mapped into another message vector $\vct{m}' \in M_{gather, overlap}^{(q)}$ without increasing the corresponding $v$ value, i.e., $v(\vct{m}') \leq v(\vct{m})$. 
	Given a message vector $\vct{m} \in M_{gather}^{(q)}$ , define $\vct{m}' = [m_1', m_2', \cdots, m_n']$  as in Algorithm \ref{Alg:message_vector_overlapping}.
	In line 9 of this algorithm, we can always find $l \in [s]$ that satisfies $m_l = 0$, due to the following reason. 
	Note that 
	\begin{align}
	\sum_{i=s+1}^n m_i \overset{\mathrm{(a)}}{\geq} \vct{m}^T \mtx{G}(j,:) \overset{\eqref{Eqn:J_b+1_overlap}}{\geq} b+1
	\end{align}
	holds where $(a)$ is from the fact that $G(j,i)=0$ for $i\in[s]$ as in \eqref{Eqn:A_one_sequence}.  Thus, we have
	\begin{align}\label{Eqn:existence_l}
	\sum_{i=1}^s m_i \overset{\eqref{Eqn:J_b+1_overlap}}{\leq} \sum_{i=1}^n m_i - (b+1) = \frac{n-1}{2} - (b+1) = s-1.
	\end{align}
	Therefore, we have
	\begin{align}\label{Eqn:existence_m_l_0}
	\exists l \in [s] \text{ such that } m_l = 0. 
	\end{align}  
	
	The vector $\vct{m}'$ generated from Algorithm \ref{Alg:message_vector_overlapping} satisfies the following four properties:
	\begin{enumerate}
		\item $\vct{m}' \in M_{(n-1)/2}$,
		\item $S_{b+1}(\vct{m}') = S_{b+1}(\vct{m}) = \{j, j+1, \cdots, j+\delta- 1\}$,
		\item $\vct{m}'  \in M_{gather, overlap}^{(q)} $,
		\item $v(\vct{m}') \leq v(\vct{m})$.
	\end{enumerate}
	The first property is from the fact that lines 7 and 10 of the algorithm maintains the weight of the message vector to be $\ZNorm{\vct{m}} = (n-1)/2$. 
	The second property is from the fact that 
	\begin{align*}
	(\vct{m}')^T \mtx{G}(j_0, :) &\overset{\mathrm{\eqref{Eqn:A_one_sequence}}}{=} \sum_{i=1}^{2b+1} m_{s+(j_0-s-1)(b+1)+i}' \\
	&\overset{\mathrm{\textrm{(a)}}}{=} \begin{cases}
	\sum_{i=1}^{2b+1} m_{s+(j_0-s-1)(b+1)+i} ,& \text{ if line 6 of Algorithm \ref{Alg:message_vector_overlapping} is satisfied} \\
	2b, &  \text{ otherwise}
	\end{cases} 
	\overset{\mathrm{\eqref{Eqn:J_b+1_overlap}}}{\geq} b+1
	\end{align*}
	for $j_0 \in \{ j, j+1, \cdots, j+\delta-1  \}$, where $(a)$ is from the fact that $\sum_{i=1}^{2b+1} m_{s+(j_0-s-1)(b+1)+i} = 2b+1$ holds if line 6 of Algorithm \ref{Alg:message_vector_overlapping} is not satisfied. The third property is from the first two properties and the definition of $M_{gather, overlap}^{(q)}$ in \eqref{Eqn:M_gather_overlap}.
	The last property is from the fact that 1) each execution of line 7 in the algorithm maintains $v(\vct{m}')=v(\vct{m})$, and 2) each execution of line 10 in the algorithm results in $v(\vct{m}') = v(\vct{m})-1$.
	Thus, 
	combining with Lemma \ref{Lemma:gather}, we have the following lemma: 
	\begin{lemma}\label{Lemma:gather_overlap}
		For arbitrary $q \in \{0,1,\cdots, L\}$, we have
		\begin{align*}
		v_q^* = \min_{\vct{m} \in M_{gather, overlap}^{(q)}} v(\vct{m}).
		\end{align*}
	\end{lemma}

	\begin{algorithm}[t]
		\small
		\caption{\small Defining $\vct{m}' \in M_{gather,overlap}^{(q)}$ from arbitrary $\vct{m} \in M_{gather}^{(q)}$.}
		\label{Alg:message_vector_overlapping}
		\begin{algorithmic}[1]
			\STATE \textbf{Input:} message vector $\vct{m}=[m_1, m_2, \cdots, m_n]$ having $S_{b+1}(\vct{m}) = \{j, j+1, \cdots, j+\delta-1\}$
			\STATE \textbf{Output:} message vector $\vct{m}' =[m_1', m_2', \cdots, m_n']$
			\STATE \textbf{Initialize:} $\vct{m}' = \vct{m}$
			\FOR{$j_0=j$ to $j+\delta-1$}
			\IF{$m_{s+(j_0-s)(b+1)}=1$}
			\IF{$\exists i \in [2b+1]$ such that $m_{s+(j_0-s-1)(b+1)+i}=0$}
			\STATE $m_{s+(j_0-s-1)(b+1)+i}' \leftarrow 1$, \quad  $m_{s+(j_0-s)(b+1)}' \leftarrow 0$. 
			\ELSE
			\STATE Find $l \in [s]$ such that $m_l = 0$ \text{ (The existence of such $l$ is proven in \eqref{Eqn:existence_m_l_0}.) }
			\STATE $m_{l}' \leftarrow 1$, \quad $m_{s+(j_0-s)(b+1)}' \leftarrow 0$. 
			\ENDIF
			\ENDIF
			\ENDFOR
		\end{algorithmic}
	\end{algorithm}

	\begin{figure} [!t]
		\centering
		\includegraphics[width=0.75\linewidth]{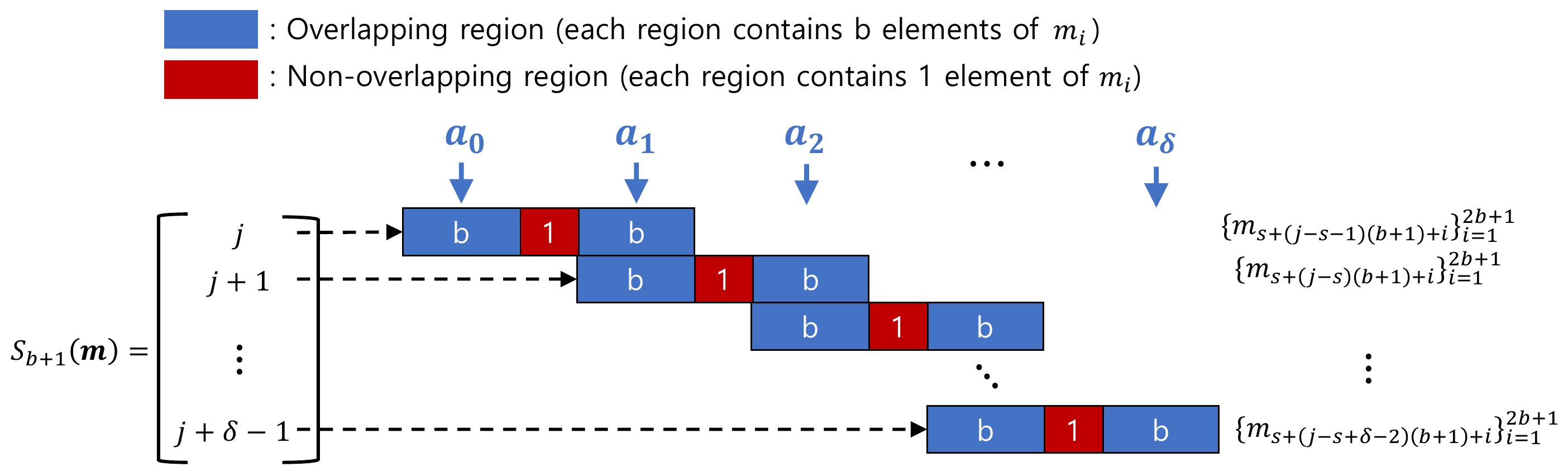}
		\caption{The illustration of overlapping regions and $\{a_l\}_{l=0}^\delta$ in \eqref{Eqn:a_l_seq}}
		\label{Fig:overlapping_region}
	\end{figure}

	According to Lemma \ref{Lemma:gather_overlap}, in order to find $v^*_q$, all that remains is to find the optimal $\vct{m} \in M_{gather,overlap}^{(q)}$ which has the minimum $v(\vct{m})$.
	Consider arbitrary $\vct{m} \in M_{gather,overlap}^{(q)}$ and denote 
	\begin{align}\label{Eqn:J_b+1_gather_overlap}
	S_{b+1}(\vct{m}) = \{j, j+1, \cdots, j+\delta-1 \}. 
	\end{align}
	Define the corresponding assignment vector $\{a_l\}_{l=0}^{\delta}$ as
	\begin{align}\label{Eqn:a_l_seq}
	a_l = \sum_{i=1}^b m_{s+(j-s-1+l)(b+1)+i},
	\end{align}
	which represents the number of indices $i$ satisfying $m_i=1$ within $l^{\text{th}}$ overlapping region, as illustrated in Fig. \ref{Fig:overlapping_region}.
	Then, we have
	\begin{align}
	a_{j_0 - j} + a_{j_0-j+1} &= \sum_{i=1}^b m_{s+(j_0-s-1)(b+1)+i} + m_{s+(j_0-s)(b+1)+i}\nonumber\\
	&\overset{\mathrm{\eqref{Eqn:M_gather_overlap}}}{=} \sum_{i=1}^{2b+1} m_{s+(j_0-s-1)(b+1)+i} \overset{\mathrm{\eqref{Eqn:A_one_sequence}, \eqref{Eqn:J_b+1_gather_overlap}}}{\geq} b+1 \label{Eqn:sum_a_seq}
	\end{align}
	for $j_0 \in \{ j, j+1, \cdots, j+\delta-1 \}$. Since $a_t$ is the sum of $b$ binary elements, we have 
	\begin{align}\label{Eqn:a_seq_constraint}
	1 \leq a_t \leq b , \quad \forall t \in \{0, 1, \cdots, \delta\}
	\end{align}
	from \eqref{Eqn:sum_a_seq}.
	Now define a message vector $\vct{m}' \in M_{gather, overlap}^{(q)}$ satisfying the followings: the corresponding assignment vector is 
	\begin{align}\label{Eqn:a_prime_seq}
	(a_0', a_1', \cdots, a_{\delta}') &= \begin{cases}
	(1,b,1,b,\cdots, 1,b) , & \text{ if } \delta \text{ is odd} \\
	(1,b,1,b,\cdots, 1) , & \text{ otherwise},
	\end{cases}
	\end{align}
	for $a_l' = \sum_{i=1}^b m_{s+(j-s-1+l)(b+1)+i}'$,
	and the elements $m_i'$ for $i \in [s]$ is
	$m_i' = \mathds{1}_{ \{i \leq i_{max} \}} $
	where 
	$i_{max} = \frac{n-1}{2} - \sum_{l=0}^{\delta} a_l' \leq s.$
	Then, we have
	\begin{align}\label{Eqn:optimum_message_vector}
	v(\vct{m}) \geq \sum_{l=0}^{\delta} a_l \overset{\mathrm{\eqref{Eqn:sum_a_seq}, \eqref{Eqn:a_seq_constraint}}}{\geq} \sum_{l=0}^{\delta} a_l' \overset{\mathrm{\eqref{Eqn:a_prime_seq}
	}}{=} v(\vct{m}')
	\end{align}
	for arbitrary $\vct{m} \in M_{gather, overlap}^{(q)}$.
	Moreover, among $\delta \geq q$, setting $\delta=q$ minimizes $v(\vct{m}')$, having the optimum value of
	\begin{align*}
	v^*_q \overset{\mathrm{(a)}}{=} v(\vct{m}') &\overset{\mathrm{\eqref{Eqn:a_prime_seq}}}{=} \begin{cases}
	\sum_{i=1}^{\frac{q+1}{2}} (1+b), & \text{ if } q \text{ is odd} \\
	1 + \sum_{i=1}^{\frac{q}{2}} (1+b), & \text{ otherwise }\end{cases}  \overset{\mathrm{(b)}}{\geq} \begin{cases}
	1+b+ 2 (\frac{q+1}{2}-1)  = b+q, & \text{ if } q \text{ is odd} \\
	1 + (1+b) + 2 (\frac{q}{2}-1) = b+q , & \text{ otherwise }\end{cases}   
	\end{align*}
	where $(a)$ is from \eqref{Eqn:optimum_message_vector} and Lemma \ref{Lemma:gather_overlap}, and $(b)$ is from $b\geq 1$. Combining this with the definition of $v^*_q$ in \eqref{Eqn:optimal_v_def} proves \eqref{Eqn:J_b+1_alternative}. This completes the proofs for \eqref{Eqn:Byz_Lemma_supp_2} and \eqref{Eqn:Byz_Lemma_supp}.
	Thus, the data allocation matrix $\mtx{G}$ in Algorithm \ref{Alg:DetCode} perfectly tolerates $b$ Byzantines. From Fig. \ref{Fig:Generator_matrix}, the required redundancy of this code is 
	\begin{align*}
	r &= \frac{s + (2b+1)L + n (n-s-L)}{n} \overset{\mathrm{(a)}}{=} \frac{n-(2b+1)}{2n} + \frac{2b+1}{n}L + \left(\frac{n+(2b+1)}{2} - L \right) \\
	&= \frac{n+(2b+1)}{2}- \left(L-\frac{1}{2}\right) \frac{n-(2b+1)}{n}, 
	\end{align*}
	where Eq.$(a)$ is from the definition of $s$ in Algorithm \ref{Alg:DetCode}.	
	\fi
\end{proof}

\section{Proof of Lemmas and Propositions}

\subsection{Proof of Lemma~\ref{Thm:q_k_updated}}\label{Sec:Proof_of_Thm:q_k_updated}

We start from finding the estimation error $q_{k \vert n_i}$ of an arbitrary node $i$ having $n_i$ data partitions.

\begin{lemma}[conditional local error]\label{Lemma:q_i_bound_conditional}
	Suppose $n_i$ data partitions are assigned to a Byzantine-free node $i$. Then, the probability of this node transmitting a wrong sign to PS for coordinate $k$ is bounded as
	\begin{align}\label{Eqn:node_fail_prob_bound}
	q_{k \vert n_i}= \PP{c_{i,k} \neq \sign(g_k) | n_i} \leq 
	\exp{(-n_i S_k^2/\{ 2(S_k^2 + 4)\})}.   
	\end{align}
\end{lemma}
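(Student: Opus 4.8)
The plan is to express the local majority error as the probability that a binomial random variable exceeds half its trials, bound this tail by Hoeffding's inequality, and then control the single-partition sign-error probability via the \textsc{SignSGD} estimate of Lemma~\ref{Lemma:node_failure_ICML18}. First I would fix a coordinate $k$ and, without loss of generality, assume $\sign(g_k)=+1$ (the case $g_k<0$ is symmetric and $g_k=0$ is trivial). For each partition $j\in P_i$ assigned to the Byzantine-free node $i$, set $B_j=\One{m_j\neq\sign(g_k)}$. By Proposition~\ref{Lemma:symm_unimodal_partition} the mini-batch gradients $\{\tilde g_j\}_{j\in P_i}$ are independent and each symmetric-unimodal about $g$, so the $B_j$ are i.i.d.\ $\Bern(\theta)$ with $\theta\coloneqq\PP{\sign(\tilde g_j)\neq\sign(g_k)}$. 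Since $c_{i,k}=\maj(\{m_j\}_{j\in P_i})$ equals the sign of $\sum_{j}m_j$ (ties counted as errors), the node errs only if at least half of the local decisions are wrong, giving
\begin{align*}
q_{k\mid n_i}=\PP{c_{i,k}\neq\sign(g_k)\mid n_i}\leq \PP{X\geq n_i/2},\qquad X\coloneqq\sum_{j\in P_i}B_j\sim\mathcal{B}(n_i,\theta).
\end{align*}

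Next I would apply Hoeffding's inequality (Lemma~\ref{Lemma:Hoeffding_binomial}) with $\epsilon=\tfrac12-\theta>0$, which is legitimate because $\theta\leq\tfrac12$. Writing $\PP{X\geq n_i/2}=\PP{X-n_i\theta\geq n_i\epsilon}$, the lemma yields
\begin{align*}
q_{k\mid n_i}\leq \exp\!\big(-2\epsilon^2 n_i\big)=\exp\!\Big(-\tfrac{n_i}{2}(1-2\theta)^2\Big).
\end{align*}
It therefore remains to show the single-partition margin satisfies $(1-2\theta)^2\geq S_k^2/(S_k^2+4)$, equivalently $\theta\leq\tfrac12\big(1-S_k/\sqrt{S_k^2+4}\big)$.

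For this I would invoke Lemma~\ref{Lemma:node_failure_ICML18}, which bounds $\theta$ by $\tfrac12-\tfrac{S_k}{2\sqrt3}$ when $S_k\leq 2/\sqrt3$ and by $\tfrac{2}{9S_k^2}$ otherwise. In the low-SNR branch $1-2\theta\geq S_k/\sqrt3\geq S_k/\sqrt{S_k^2+4}$ holds at once, since $S_k^2+4\geq 3$. In the high-SNR branch $1-\tfrac{4}{9S_k^2}$ is positive, so squaring and clearing denominators (with $u=S_k^2$) reduces the claim $(1-\tfrac{4}{9u})^2(u+4)\geq u$ to $63u^2-68u+16\geq 0$; its larger root is below $4/3<u$, so the inequality holds throughout the branch. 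Combining the two displays gives $q_{k\mid n_i}\leq\exp\!\big(-n_iS_k^2/\{2(S_k^2+4)\}\big)$, as claimed.

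The main obstacle is precisely this high-SNR verification: the clean target exponent $S_k^2/(S_k^2+4)$ does \emph{not} follow from Chebyshev or Cantelli bounds on $\theta$ (both are too weak, and indeed vacuous for $S_k\lesssim 1$), so one genuinely needs the unimodal-and-symmetric estimate of Lemma~\ref{Lemma:node_failure_ICML18} together with the polynomial check above. A minor subtlety is handling ties and the possibly non-identical flip probabilities $\theta_j$ across partitions; this is resolved by counting ties as errors and by bounding each $\theta_j$ with the worst-case value of Lemma~\ref{Lemma:node_failure_ICML18}, so that the binomial tail is dominated by the $\Bern(\theta)$ case used above.
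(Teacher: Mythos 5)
Your proposal is correct and follows essentially the same route as the paper's proof: reduce the local majority error to a binomial tail with per-partition flip probability $\theta$, apply Hoeffding's inequality (Lemma~\ref{Lemma:Hoeffding_binomial}), control $\theta$ via Lemma~\ref{Lemma:node_failure_ICML18}, and verify the exponent inequality $(1-2\theta)^2 \geq S_k^2/(S_k^2+4)$ by the same case split at $S_k = 2/\sqrt{3}$. The only difference is cosmetic: you certify the high-SNR case by reducing it to the quadratic $63u^2-68u+16\geq 0$ in $u=S_k^2$, whereas the paper verifies the equivalent bound $\tfrac{1}{4(1/2-\theta)^2}-1 \leq 4/S_k^2$ through a chain of fraction estimates.
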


This lemma is proven by applying Hoeffding's inequality for binomial random variable, as shown in Section~\ref{Section:Proof_of_Lemma:q_i_bound_conditional} of the Supplementary Materials. 
The remark below summaries the behavior of the local estimation error as the number of data partitions assigned to a node increases.

\begin{remark}
	The error bound in Lemma~\ref{Lemma:q_i_bound_conditional} is an exponentially decreasing function of $n_i$.
	This implies that as the number of data partitions assigned to a node increases, it is getting highly probable that the node correctly estimates the sign of true gradient.
	This supports the experimental results in Section 4 showing that random Bernoulli codes with a small amount of redundancy (e.g. $\EE{r}=2, 3$) are enough to enjoy a significant gap compared to the conventional SignSGD-MV~\cite{bernstein2018signsgd_ICLR} with $r=1$.
\end{remark}

Note that $n_i \sim \mathcal{B}(n,p)$ is a binomial random variable. Based on the result of Lemma~\ref{Lemma:q_i_bound_conditional}, we obtain the local estimation error $q_k$ by averaging out $q_{k \vert n_i}$ over all realizations of $n_i$.
Define $\epsilon = p^{\star}/2 = \sqrt{C \log(n)/n}$.
Then, we calculate the failure probability of node $i$ as
\begin{align*}
q_{k}  &= \PP{c_{i,k} \neq \sign(g_k)} = \sum_{n_i = 0}^n \PP{n_i} \PP{c_{i,k} \neq \sign(g_k) \ \big\vert \  n_i }\\ 
&= \sum_{n_i: \lvert n_i - np \rvert \geq n\epsilon }  \PP{n_i} \PP{c_{i,k} \neq \sign(g_k) \ \big\vert \ n_i } + \sum_{n_i: \lvert n_i - np \rvert < n\epsilon }  \PP{n_i} \PP{c_{i,k} \neq \sign(g_k) \ \big\vert \ n_i } \\
& \stackrel{(a)}{\leq} \sum_{n_i: \lvert n_i - np \rvert \geq n\epsilon }  \PP{n_i} + 
\PP{c_{i,k} \neq \sign(g_k) \ \big\vert \ n_i = n (p -\epsilon) } \\
& \stackrel{(b)}{\leq} 2 {\rm e}^{-2 \epsilon^2 n }  +  {\rm e}^{-n (p-\epsilon) \frac{S_k^2}{2(S_k^2+4)}} \leq  \frac{2}{n^{2C}} +  {\rm e}^{- \sqrt{ C n \log(n)} \frac{S_k^2}{2(S_k^2+4)}} \leq q_k^{\star}  , 
\end{align*}
where (a) is from the fact that the upper bound in \eqref{Eqn:node_fail_prob_bound} is a decreasing function of $n_i$, and (b) holds from Lemma~\ref{Lemma:q_i_bound_conditional} and Lemma~\ref{Lemma:Hoeffding_binomial}, the Hoeffding's inequality on the binomial distribution.

\subsection{Proof of Lemma~\ref{Lemma:NSCondition}}\label{Sec:Proof_of_Lemma:NSCondition}
Let an attack vector $\vct{\beta}$ and an attack function $f_{\vct{\beta} } (\cdot)$ given. Consider an arbitrary $ \vct{m} \in M^+ $. From the definitions of $\mu$ and $\hat{\mu}$, we have $\mu=\hat{\mu}$ iff 
$ f_{\vct{\beta}}(\phi(\vct{m}) ) \in Y^+$. Similarly, for an arbitrary $ \vct{m} \in M^- $, we have $\mu=\hat{\mu}$ iff $ f_{ \vct{\beta}}(\phi(\vct{m}) ) \in Y^-$. 
Thus, from the definitions of $Y^+$ and $Y^-$, the sufficient and necessary condition for $b-$Byzantine tolerance
can be expressed as follows.

\begin{prop}\label{Prop:Byz_intermediate1}
	The \textit{perfect} $b-$Byzantine tolerance condition is equivalent to the following:
	$\forall \vct{\beta} \in B_b, \forall f_{\vct{\beta} } \in \mathcal{F}_{\vct{\beta} }$,
	\begin{align}\label{Eqn:Byz_intermediate1}
	\begin{cases}
	\ \ \ZNorm{f_{\vct{\beta} }(\phi(\vct{m}) )} > \floor{\frac{n}{2}}, & \forall \vct{m} \in M^+ \\
	\ \ \ZNorm{f_{\vct{\beta} }(\phi(\vct{m}) )} \leq  \floor{\frac{n}{2}}, & \forall \vct{m} \in M^- 
	\end{cases}
	\end{align}
\end{prop}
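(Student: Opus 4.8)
The plan is to unfold the definition of \emph{perfect} $b$-Byzantine tolerance and then translate the event $\{\hat{\mu}=\mu\}$ into an explicit statement about Hamming weights. Perfect $b$-Byzantine tolerance demands $\PP{\hat{\mu}\neq\mu}=0$ under the attack of any $b$ Byzantines. Because this is a worst-case robustness notion, it must hold simultaneously for every admissible attack vector $\vct{\beta}\in B_b$ and every valid attack function $f_{\vct{\beta}}\in\mathcal{F}_{\vct{\beta}}$. Moreover, under the symmetric-unimodal gradient-noise model every sign pattern $\vct{m}\in\{0,1\}^n$ is realized with positive probability, so the probabilistic requirement $\PP{\hat{\mu}\neq\mu}=0$ is equivalent to the purely combinatorial requirement that $\hat{\mu}(\vct{m})=\mu(\vct{m})$ for \emph{every} message vector $\vct{m}\in\{0,1\}^n$. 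This reduction is the conceptual crux; once it is in place the rest is a direct substitution of definitions.

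Next I would fix an arbitrary pair $\vct{\beta}$, $f_{\vct{\beta}}$ and use the indicator representations $\mu=\mathds{1}_{\{\vct{m}\in M^+\}}$ and $\hat{\mu}=\mathds{1}_{\{\vct{y}\in Y^+\}}$ recorded above, where $\vct{y}=f_{\vct{\beta}}(\phi(\vct{m}))$. Since $M^+$ and $M^-$ partition $\{0,1\}^n$, I split into two exhaustive cases. If $\vct{m}\in M^+$ then $\mu=1$, so $\hat{\mu}=\mu$ holds iff $\hat{\mu}=1$, i.e.\ iff $f_{\vct{\beta}}(\phi(\vct{m}))\in Y^+$; by the definition of $Y^+$ this is exactly $\ZNorm{f_{\vct{\beta}}(\phi(\vct{m}))}>\floor{n/2}$. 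Symmetrically, if $\vct{m}\in M^-$ then $\mu=0$, and $\hat{\mu}=\mu$ holds iff $f_{\vct{\beta}}(\phi(\vct{m}))\in Y^-$, i.e.\ iff $\ZNorm{f_{\vct{\beta}}(\phi(\vct{m}))}\leq\floor{n/2}$. Collecting the two cases and re-attaching the universal quantifiers over $\vct{\beta}\in B_b$ and $f_{\vct{\beta}}\in\mathcal{F}_{\vct{\beta}}$ yields precisely the two-line system in~\eqref{Eqn:Byz_intermediate1}, which establishes the claimed equivalence.

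The steps themselves are short substitutions, so I do not anticipate any genuine computational difficulty. The one point that deserves care --- and which I would state explicitly rather than gloss over --- is the passage from the probability statement to the universally quantified weight inequalities: one must argue both that robustness is demanded against the worst case (so the ``for all $\vct{\beta},f_{\vct{\beta}}$'' quantifiers are forced, not a convenient relaxation) and that no realizable $\vct{m}$ may be discarded as a null event. The remaining ingredients --- that $M^+,M^-$ and $Y^+,Y^-$ each form a partition, and that membership in $Y^{\pm}$ is by definition a threshold on $\ZNorm{\cdot}$ --- make the case split clean and exhaustive, so beyond this bookkeeping nothing further is needed.
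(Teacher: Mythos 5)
Your proposal is correct and follows essentially the same route as the paper: fix an arbitrary attack pair $(\vct{\beta}, f_{\vct{\beta}})$, use $\mu=\mathds{1}_{\{\vct{m}\in M^+\}}$ and $\hat{\mu}=\mathds{1}_{\{\vct{y}\in Y^+\}}$, and split on $\vct{m}\in M^+$ versus $\vct{m}\in M^-$ to read off the threshold conditions on $\ZNorm{f_{\vct{\beta}}(\phi(\vct{m}))}$. Your extra care in justifying the passage from $\PP{\hat{\mu}\neq\mu}=0$ to the pointwise requirement over all $\vct{m}$ (no sign pattern is a null event) is a detail the paper treats as definitional, but it does not change the argument.
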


The condition stated in Proposition \ref{Prop:Byz_intermediate1} can be further simplified as follows.

\begin{prop}\label{Prop:Byz_intermediate2}
	The \textit{perfect} $b-$Byzantine tolerance condition in Proposition \ref{Prop:Byz_intermediate1} is equivalent to 
	\begin{align}\label{Eqn:Byz_intermediate2}
	\begin{cases}
	\ \ \ZNorm{\phi(\vct{m})} > \floor{\frac{n}{2}} + b, & \forall \vct{m} \in M^+ \\
	\ \ \ZNorm{\phi(\vct{m})} \leq \floor{\frac{n}{2}} - b, & \forall \vct{m} \in M^-
	\end{cases}
	\end{align}
\end{prop}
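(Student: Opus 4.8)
The plan is to prove the equivalence between Proposition~\ref{Prop:Byz_intermediate1} and Proposition~\ref{Prop:Byz_intermediate2} by reducing, for each fixed $\vct{m}$, the universal quantifier over attacks $(\vct{\beta}, f_{\vct{\beta}})$ to a statement about the \emph{extremal} weights of $\vct{y}$ that an adversary can produce from $\vct{c} = \phi(\vct{m})$. The central observation is that an attack alters at most $b$ output bits: by Lemma~\ref{Prop:ByzantineAttack}, $\ZNorm{\vct{y} \oplus \vct{c}} \le b$, so the weight obeys
\begin{align*}
\ZNorm{\phi(\vct{m})} - b \;\le\; \ZNorm{f_{\vct{\beta}}(\phi(\vct{m}))} \;\le\; \ZNorm{\phi(\vct{m})} + b .
\end{align*}
Conversely, the adversary can \emph{realize} the two endpoints: to minimize the weight it puts $\mathrm{supp}(\vct{\beta})$ on coordinates where $c_i = 1$ and sets those outputs to $0$, and to maximize it, it puts the support on coordinates where $c_i = 0$ and sets those outputs to $1$. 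Hence the achievable range of $\ZNorm{f_{\vct{\beta}}(\phi(\vct{m}))}$ over all $\vct{\beta} \in B_b$ and $f_{\vct{\beta}} \in \mathcal{F}_{\vct{\beta}}$ is exactly $[\max\{0, \ZNorm{\vct{c}} - b\},\ \min\{n, \ZNorm{\vct{c}} + b\}]$.

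The direction \eqref{Eqn:Byz_intermediate2} $\Rightarrow$ \eqref{Eqn:Byz_intermediate1} is then immediate from the two-sided bound above and needs no case analysis: for $\vct{m} \in M^+$, the hypothesis $\ZNorm{\phi(\vct{m})} > \floor{n/2} + b$ gives $\ZNorm{f_{\vct{\beta}}(\phi(\vct{m}))} \ge \ZNorm{\phi(\vct{m})} - b > \floor{n/2}$ for \emph{every} attack, while for $\vct{m} \in M^-$ the bound $\ZNorm{\phi(\vct{m})} \le \floor{n/2} - b$ yields $\ZNorm{f_{\vct{\beta}}(\phi(\vct{m}))} \le \ZNorm{\phi(\vct{m})} + b \le \floor{n/2}$.

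For the converse \eqref{Eqn:Byz_intermediate1} $\Rightarrow$ \eqref{Eqn:Byz_intermediate2}, I would instantiate \eqref{Eqn:Byz_intermediate1} at the weight-minimizing attack for $\vct{m} \in M^+$ and at the weight-maximizing attack for $\vct{m} \in M^-$. For $\vct{m} \in M^+$ the minimal attainable weight is $\max\{0, \ZNorm{\vct{c}} - b\}$, and \eqref{Eqn:Byz_intermediate1} forces it to exceed $\floor{n/2} \ge 0$; since the clamp value $0$ cannot exceed $\floor{n/2}$, the minimum must equal $\ZNorm{\vct{c}} - b$, giving $\ZNorm{\phi(\vct{m})} > \floor{n/2} + b$. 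Symmetrically, for $\vct{m} \in M^-$ the maximal weight $\min\{n, \ZNorm{\vct{c}} + b\} \le \floor{n/2} < n$ rules out the clamp at $n$ and yields $\ZNorm{\phi(\vct{m})} \le \floor{n/2} - b$.

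The step deserving the most care, and the main obstacle, is justifying that these extremal weights are genuinely \emph{achievable} by a \emph{valid} attack: one must exhibit a support of size exactly $b$ overlapping the ones (resp. zeros) of $\vct{c}$, together with a function lying in $\mathcal{F}_{\vct{\beta}}$ (so that non-Byzantine coordinates are left untouched), and handle the degenerate regimes in which $\vct{c}$ has fewer than $b$ ones or fewer than $b$ zeros. In those regimes the clamps at $0$ and $n$ become active, but the argument above shows each clamp is inconsistent with \eqref{Eqn:Byz_intermediate1} for the vectors in question, so they cannot occur; this closes the equivalence.
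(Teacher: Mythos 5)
Your proof is correct and takes essentially the same route as the paper's: the forward direction follows from the perturbation bound $\ZNorm{\vct{y} \oplus \vct{c}} \le b$ of Lemma~\ref{Prop:ByzantineAttack}, and the converse from exhibiting extremal weight-shifting attacks, with the degenerate regimes (fewer than $b$ zeros or ones in $\vct{c}$) handled separately — the paper does this via contraposition with an identity-attack case, which is exactly your clamp analysis in different clothing. The only cosmetic difference is that you package everything as an exact achievable-weight interval $\left[\max\{0,\ZNorm{\vct{c}}-b\},\ \min\{n,\ZNorm{\vct{c}}+b\}\right]$ and treat $M^+$ and $M^-$ symmetrically, whereas the paper argues the $M^-$ case in detail and invokes symmetry for $M^+$.
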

\begin{proof}
	Consider arbitrary $\vct{m} \in M^-$. We want to prove that
	\begin{align}\label{Eqn:Equivalence_1}
	\forall \vct{\beta} \in B_b, \forall f_{\vct{\beta} } \in \mathcal{F}_{\vct{\beta} },  \quad \ZNorm{f_{\vct{\beta} }(\phi(\vct{m}) )} \leq  \floor{\frac{n}{2}}
	\end{align}
	is equivalent to 
	\begin{align}\label{Eqn:Equivalence_2}
	\ZNorm{\phi(\vct{m})} \leq \floor{\frac{n}{2}} - b.
	\end{align}
	First, we show that \eqref{Eqn:Equivalence_2} implies \eqref{Eqn:Equivalence_1}. 
	According to Lemma~\ref{Prop:ByzantineAttack}
	$\ZNorm{	f_{\vct{\beta} }(\phi(\vct{m}) )  \oplus \phi(\vct{m}) } \leq b$
	holds for  arbitrary $\vct{\beta} \in B_b$ and arbitrary $f_{\vct{\beta} } \in \mathcal{F}_{\vct{\beta} }$.
	Thus, 
	\begin{align*}
	\ZNorm{f_{\vct{\beta} }(\phi(\vct{m}) )} &\leq \ZNorm{	f_{\vct{\beta} }(\phi(\vct{m}) )  \oplus \phi(\vct{m}) } + \ZNorm{\phi(\vct{m}) }
	\leq b + \left( \floor{\frac{n}{2}} - b \right) = \floor{\frac{n}{2}}
	\end{align*}
	holds for $\forall \vct{\beta} \in B_b, \forall f_{\vct{\beta} } \in \mathcal{F}_{\vct{\beta} }$, which completes the proof.	
	Now, we prove that \eqref{Eqn:Equivalence_1} implies \eqref{Eqn:Equivalence_2}, by contra-position. 
	Suppose $\ZNorm{\phi(\vct{m})} > \floor{\frac{n}{2}} - b$. We divide the proof into two cases. The first case is when $\ZNorm{\phi(\vct{m})} > n-b$. In this case, we arbitrary choose $\vct{\beta}^* \in B_b$ and select the identity mapping $f_{\vct{\beta}^*}^*: \vct{c} \mapsto \vct{y}$ such that $y_j = c_j$ for all $j \in [n]$. Then,  
	$\ZNorm{f_{\vct{\beta}^* }^*(\phi(\vct{m}) )}=\ZNorm{\phi(\vct{m}) } > n-b \geq n- \floor{n/2} \geq \floor{n/2}$. Thus, we can state that
	\begin{equation*}
	\exists \vct{\beta}^*\in B_b, \exists f_{\vct{\beta}^*}^* \in \mathcal{F}_{\vct{\beta}^*} \text{ such that } \ZNorm{f_{\vct{\beta}^* }^*(\phi(\vct{m}) )} \leq  \floor{\frac{n}{2}}
	\end{equation*}
	when $\ZNorm{\phi(\vct{m})} > n - b$, which completes the proof for the first case.
	Now consider the second case where $ \floor{n/2} - b < \ZNorm{\phi(\vct{m})} \leq n - b$. To begin, denote $\phi(\vct{m}) =  \vct{c} = [c_1, c_2, \cdots, c_n]$. Let $S = \{i \in [n]: c_i = 0 \}$, and select $\vct{\beta}^* \in B_b$ which satisfies\footnote{We can always find such $\vct{\beta}^*$ since $\lvert S \rvert \geq b$ due to the setting of $\ZNorm{\phi(\vct{m})} \leq n - b$. }
	$\{ i\in [n]: \beta_i^* = 1 \} \subseteq S$. Now define 
	$f_{\vct{\beta}^*}^* (\cdot)$ as $f_{\vct{\beta}^*}^* (\phi(\vct{m})) = \phi(\vct{m}) \oplus \vct{\beta}^*$.
	Then, we have
	\begin{align*}
	\ZNorm{f_{\vct{\beta}^*}^* (\phi(\vct{m}))} =  \ZNorm{\phi(\vct{m})} + \ZNorm{\vct{\beta}^*} > \floor{\frac{n}{2}} - b + b = \floor{\frac{n}{2}}.
	\end{align*}
	Thus, the proof for the second case is completed, and this completes the statement of \eqref{Eqn:Byz_intermediate2} for arbitrary $\vct{m} \in M^-$. Similarly, we can show that 
	\begin{align*}
	\forall \vct{\beta} \in B_b, \forall f_{\vct{\beta} } \in \mathcal{F}_{\vct{\beta} },  \quad \ZNorm{f_{\vct{\beta} }(\phi(\vct{m}) )} >  \floor{\frac{n}{2}}
	\end{align*}
	is equivalent to 
	$\ZNorm{\phi(\vct{m})} > \floor{\frac{n}{2}} + b$
	for arbitrary $\vct{m} \in M^+$. This completes the proof. 
\end{proof}

Now, we further reduce the condition in Proposition \ref{Prop:Byz_intermediate2} as follows.
\begin{prop}\label{Prop:Byz_intermediate3}
	The \textit{perfect} $b-$Byzantine tolerance condition in Proposition \ref{Prop:Byz_intermediate2} is equivalent to 
	\begin{align}\label{Eqn:Byz_intermediate3}
	\ \ \ZNorm{\phi(\vct{m})} \leq \floor{\frac{n}{2}} - b, \quad \forall \vct{m} \in M^-
	\end{align}
\end{prop}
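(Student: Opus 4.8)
The plan is to prove the two conditions equivalent by observing that one implication is immediate and putting all the work into the other. Since \eqref{Eqn:Byz_intermediate3} is verbatim the second line of \eqref{Eqn:Byz_intermediate2}, the direction \ref{Prop:Byz_intermediate2}~$\Rightarrow$~\ref{Prop:Byz_intermediate3} is trivial. The entire content lies in the converse: assuming only the bound $\ZNorm{\phi(\vct{m})} \leq \floor{n/2} - b$ for all $\vct{m} \in M^-$, I must recover the companion bound $\ZNorm{\phi(\vct{m})} > \floor{n/2} + b$ for all $\vct{m} \in M^+$. My strategy is to exploit a complementation symmetry of the majority-based encoder that links each vector in $M^+$ to its bitwise complement in $M^-$.

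The key structural fact I would first establish is that the encoder $\phi$ is equivariant under bit complementation: writing $\overline{\vct{m}}$ for the bitwise complement of $\vct{m}$, we have $\phi(\overline{\vct{m}}) = \overline{\phi(\vct{m})}$. This holds because each coordinate of $\phi$ is a local majority $c_i = \maj(\{m_j\}_{j \in P_i})$ taken over an \emph{odd} number $\ZNorm{\mtx{G}(i,:)}$ of partitions, as already assumed in this section. With an odd number of inputs there is never a tie, so flipping every input bit strictly flips the majority outcome, i.e.\ $\maj(\{\overline{m}_j\}) = \overline{\maj(\{m_j\})}$ coordinatewise. This is the one step I expect to require care, since it is precisely where the odd-allocation hypothesis is indispensable; for even allocation sizes ties would break the argument.

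With this in hand the derivation is short. Take any $\vct{m} \in M^+$, so $\ZNorm{\vct{m}} > \floor{n/2}$; using that $n$ is odd throughout this section (forced, e.g., by the all-ones rows and the odd-weight requirement), $\ZNorm{\overline{\vct{m}}} = n - \ZNorm{\vct{m}} \leq n - \floor{n/2} - 1 = \floor{n/2}$, so $\overline{\vct{m}} \in M^-$ and the hypothesis gives $\ZNorm{\phi(\overline{\vct{m}})} \leq \floor{n/2} - b$. Invoking equivariance together with the complement-weight identity $\ZNorm{\vct{c}} + \ZNorm{\overline{\vct{c}}} = n$ for binary vectors of length $n$, I obtain $\ZNorm{\phi(\vct{m})} = n - \ZNorm{\overline{\phi(\vct{m})}} = n - \ZNorm{\phi(\overline{\vct{m}})} \geq n - \floor{n/2} + b = \floor{n/2} + 1 + b > \floor{n/2} + b$, which is exactly the $M^+$ bound. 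This establishes \ref{Prop:Byz_intermediate3}~$\Rightarrow$~\ref{Prop:Byz_intermediate2} and completes the equivalence. The main obstacle is thus not computational but conceptual: recognizing and justifying the complementation symmetry of $\phi$, after which the weight bookkeeping and the parity of $n$ do the rest.
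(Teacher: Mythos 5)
Your proof is correct and follows essentially the same route as the paper's: both treat the direction from Proposition~\ref{Prop:Byz_intermediate2} to \eqref{Eqn:Byz_intermediate3} as immediate, and both recover the $M^+$ bound by mapping $\vct{m} \in M^+$ to its bitwise complement in $M^-$ and using that the majority encoder over an odd number of partitions commutes with complementation, so that $\ZNorm{\phi(\vct{m})} + \ZNorm{\phi(\overline{\vct{m}})} = n$. Your explicit flagging of the odd-allocation and odd-$n$ requirements is a point the paper handles only implicitly (in a footnote and via its standing assumptions), but the substance of the argument is identical.
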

\begin{proof}
	All we need to prove is that \eqref{Eqn:Byz_intermediate3} implies \eqref{Eqn:Byz_intermediate2}.
	Assume that the mapping $\phi$ satisfies \eqref{Eqn:Byz_intermediate3}. 
	Consider an arbitrary $\vct{m}' \in M^+$ and denote $\vct{m}' =[m_1', m_2', \cdots, m_n']$. Define $\vct{m}=[m_1, m_2, \cdots, m_n]$ such that $m_i' \oplus m_i = 1$ for all $i \in [n]$. Then, we have $\vct{m} \in M^-$ from the definitions of $M^+$ and $M^-$. 
	Now we denote $ \phi(\vct{m}) =\vct{c} = [c_1, c_2, \cdots, c_n]$ and  $\phi(\vct{m}') =\vct{c}' =  [c_1', c_2', \cdots, c_n']$. Then,  $c_j \oplus c_j' = 1$ holds for all $j \in [n]$ since $E_j (\cdot)$ is a majority vote function\footnote{Recall that $c_j = E_j (\{m_i\}_{i \in P_j}) = \maj(\{m_i\}_{i \in P_j})$ and $c_j' = \maj(\{m_i'\}_{i \in P_j})$. Thus, $m_i' \oplus m_i = 1$ for all $i \in [n]$ implies that $c_j \oplus c_j' = 1$ holds for all $j \in [n]$.}. 
	In other words, 
	$\ZNorm{\phi(\vct{m})} + \ZNorm{\phi(\vct{m}')} = n$
	holds. Thus, if a given mapping $\phi$ satisfies
	$\ZNorm{\phi(\vct{m})} \leq \floor{n/2} - b$ for all $\vct{m} \in M^-,$
	then  
	$\ZNorm{\phi(\vct{m}')} \geq n - (\floor{n/2} - b) = \ceil{n/2} + b > \floor{n/2} + b$ holds for all $\vct{m} \in M^+,$
	which completes the proof.
\end{proof}

In order to prove Lemma~\ref{Lemma:NSCondition}, all that remains is to prove that \eqref{Eqn:Byz_intermediate3} reduces to 
\begin{align}\label{Eqn:Thm_alg_revisited}
\sum_{v=1}^{\floor{n/2}} \lvert S_v (\vct{m}) \rvert  \leq \floor{n/2} - b \quad \quad \forall \vct{m} \in M_{\floor{n/2}}.
\end{align}
Recall that $\phi(\vct{m}) = \vct{c} = [c_1, c_2, \cdots, c_n]$ where $c_j = \maj(\{m_i\}_{i \in P_j} )$ and $P_j = \{ i \in [n]: G_{ji} = 1 \}$. Moreover, we assumed that $\lvert P_j \rvert = \ZNorm{\mtx{G}(j,:)}$ is an odd number.
Thus, 
$c_j = 
\mathds{1}_{ \{ \ZNorm{\mtx{G}(j,:)}+1 \ \leq \ 2  \vct{m}^T \mtx{G}(j,:)  \}  }  $,
and the set $[n]=\{1,2,\cdots, n\}$ can be partitioned as $[n] = S_1 \cup S_2 \cup \cdots \cup S_{\floor{n/2} + 1}$ where
$S_v \coloneqq \{j \in [n]: \ZNorm{\mtx{G}(j,:)} = 2v-1 \}.$
Therefore, for a given $\vct{m} \in M^-$, we have
\begin{align*}
\ZNorm{\phi(\vct{m})} &= \sum_{j=1}^n c_j = \sum_{v=1}^{\floor{n/2} + 1} \left\lvert \{ j\in S_v: c_j = 1 \} \right\rvert \nonumber \\
&= \sum_{v=1}^{\floor{n/2} + 1} \left\lvert  \left\{ j \in S_v: \vct{m}^T \mtx{G}(j,:) \geq \frac{\ZNorm{\mtx{G}(j,:)} + 1}{2} +1 =v \right\}   \right\rvert = \sum_{v=1}^{\floor{n/2} + 1} \left\lvert S_v  (\vct{m}) \right\rvert. %
\end{align*}
Note that $S_{v}  (\vct{m})$ for $v= \floor{n/2}+1$ reduces to 
\begin{align*}%
S_{ \floor{n/2}+1}   (\vct{m}) = \{ j \in [n]: \ZNorm{\mtx{G}(j,:)} = 2 ( \floor{n/2} - 1) +1, \vct{m}^T \mtx{G}(j,:) \geq \floor{n/2} + 1 \} = \varnothing
\end{align*}
since $\vct{m} \in M^-$.
Thus, combining the two equations above, we obtain the following.

\begin{prop}\label{Prop:Byz_intermediate4}
	The perfect $b-$Byzantine tolerance condition in Proposition \ref{Prop:Byz_intermediate3} is equivalent to
	\begin{align*}%
	\sum_{v=1}^{\floor{n/2}} \left\lvert S_v  (\vct{m}) \right\rvert \leq \floor{\frac{n}{2}}-b \quad \quad \forall \vct{m} \in M^-,
	\end{align*}
	or equivalently,
	\begin{align}\label{Eqn:Byz_intermediate4_equiv}
	\sum_{v=1}^{\floor{n/2}} \left\lvert S_v  (\vct{m}) \right\rvert \leq \floor{\frac{n}{2}}-b \quad \quad \forall \vct{m} \in M_t, \quad \forall t = 0,1,\cdots, \floor{n/2}.
	\end{align}
\end{prop}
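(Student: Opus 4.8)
The plan is to recognize that the condition in Proposition~\ref{Prop:Byz_intermediate3} --- that $\ZNorm{\phi(\vct{m})} \leq \floor{n/2} - b$ for every $\vct{m} \in M^-$ --- becomes the stated inequality once the Hamming weight $\ZNorm{\phi(\vct{m})}$ is expanded through the level sets $S_v(\vct{m})$ of \eqref{Eqn:J_v}. Concretely, the whole proposition reduces to the single identity
\begin{align*}
\ZNorm{\phi(\vct{m})} = \sum_{v=1}^{\floor{n/2}} \lvert S_v(\vct{m}) \rvert \qquad \text{for all } \vct{m} \in M^-,
\end{align*}
since substituting this into the Proposition~\ref{Prop:Byz_intermediate3} bound gives exactly the first displayed inequality, and the second follows by reindexing $M^-$ by weight.

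To establish the identity, first I would put each encoder bit in threshold form. Because $c_j = \maj(\{m_i\}_{i \in P_j})$ and the row weight $\ZNorm{\mtx{G}(j,:)} = \lvert P_j \rvert$ is odd, say $\ZNorm{\mtx{G}(j,:)} = 2v-1$, the majority equals one precisely when at least $v$ of the assigned partitions carry a one, i.e. $c_j = 1 \iff \vct{m}^T \mtx{G}(j,:) \geq v \iff j \in S_v(\vct{m})$. Next I would partition the node set $[n]$ according to this odd row weight, so that each index $j$ belongs to exactly one level $v \in \{1, \dots, \floor{n/2}+1\}$, and sum the bits level by level to obtain $\ZNorm{\phi(\vct{m})} = \sum_j c_j = \sum_{v=1}^{\floor{n/2}+1} \lvert S_v(\vct{m}) \rvert$.

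The only step requiring attention is the top level $v = \floor{n/2}+1$, whose defining threshold is $\vct{m}^T \mtx{G}(j,:) \geq \floor{n/2}+1$. For any $\vct{m} \in M^-$ we have $\vct{m}^T \mtx{G}(j,:) \leq \ZNorm{\vct{m}} \leq \floor{n/2}$, so this level set is empty and drops out, truncating the sum at $\floor{n/2}$ and yielding the identity. This is the one place where the hypothesis $\vct{m} \in M^-$ is genuinely used, and it is what makes the truncated range in the statement correct.

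Finally, the passage from the $M^-$ form to the per-weight form is purely set-theoretic: by the definitions of $M^-$ and $M_t$ in \eqref{Eqn:M_t}, we have the disjoint decomposition $M^- = \bigcup_{t=0}^{\floor{n/2}} M_t$, so ``for all $\vct{m} \in M^-$'' is the same quantifier as ``for all $t \in \{0, \dots, \floor{n/2}\}$ and all $\vct{m} \in M_t$,'' requiring no further argument. I expect no real obstacle in this proposition --- it is essentially a bookkeeping identity --- with the only subtle points being the correct range of the level index $v$ and the vanishing of the maximal level for vectors of weight at most $\floor{n/2}$.
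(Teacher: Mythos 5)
Your proposal is correct and follows essentially the same route as the paper: the same threshold characterization $c_j = 1 \iff \vct{m}^T \mtx{G}(j,:) \geq v$ for odd row weight $2v-1$, the same partition of $[n]$ by row weight giving $\ZNorm{\phi(\vct{m})} = \sum_{v=1}^{\floor{n/2}+1} \lvert S_v(\vct{m})\rvert$, the same observation that the top level $v=\floor{n/2}+1$ is empty for $\vct{m} \in M^-$, and the same weight-decomposition $M^- = \bigcup_{t=0}^{\floor{n/2}} M_t$ for the second form. No gaps.
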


Now, we show that \eqref{Eqn:Byz_intermediate4_equiv} is equivalent to \eqref{Eqn:Thm_alg_revisited}. We can easily check that the former implies the latter, which is directly proven from the statements. Thus, all we need to prove is that \eqref{Eqn:Thm_alg_revisited} implies \eqref{Eqn:Byz_intermediate4_equiv}.
First, when $t=0$, note that $\lvert S_v (\vct{m}) \rvert = 0 $ for $\forall \vct{m} \in M_0, \forall v \in \{ 1,2, \cdots, \floor{n/2} \}$, which implies that \eqref{Eqn:Byz_intermediate4_equiv} holds trivially. Thus, in the rest of the proof, we assume that $t > 0$.

Consider an arbitrary $t \in \{1,2, \cdots \floor{n/2} \}$ and an arbitrary $\vct{m} \in M_t$. Denote $\vct{m} = \vct{e}_{i_1} +  \vct{e}_{i_2} + \cdots +  \vct{e}_{i_t}$ where $\vct{e}_1 = [1, 0, \cdots, 0]$, $\vct{e}_2 = [0, 1, 0, \cdots, 0]$, and $\vct{e}_n = [0, \cdots, 0, 1]$.
Moreover, consider an arbitrary $\vct{m}' \in M_{\floor{n/2}}$ which satisfies $m_i' = 1$ for $i=i_1, i_2, \cdots, i_t$. Denote $\vct{m}' = \vct{e}_{i_1} + \cdots + \vct{e}_{i_t} + \vct{e}_{j_1} + \cdots + \vct{e}_{j_{\floor{n/2} - t}}$. Then, 
$(\vct{m}' - \vct{m})^T \mtx{G}(j,:) \geq 0$ holds for all $j \in [n]$,
which implies $S_v (\vct{m}) \subseteq S_v (\vct{m}')$ for all $v = 1, 2, \cdots, \floor{n/2}$. Thus, we have
$\lvert S_v (\vct{m}) \rvert  \leq \lvert S_v (\vct{m}') \rvert$ for all $v \in \{ 1,2,\cdots, \floor{n/2} \},$
which implies 
$\sum_{v=1}^{\floor{n/2}} \lvert S_v (\vct{m}) \rvert  \leq  \sum_{v=1}^{\floor{n/2}} \lvert S_v (\vct{m}') \rvert .$
Since this holds for arbitrary $\vct{m}' \in M_{\floor{n/2}}$,  $\vct{m} \in M_t$, and  $t\ \in \{ 1,2, \cdots, \floor{n/2} \}$,  
we can conclude that \eqref{Eqn:Thm_alg_revisited} implies \eqref{Eqn:Byz_intermediate4_equiv}.
All in all, \eqref{Eqn:Byz_intermediate4_equiv} is equivalent to \eqref{Eqn:Thm_alg_revisited}. Combining this with Propositions \ref{Prop:Byz_intermediate1},\ref{Prop:Byz_intermediate2}, \ref{Prop:Byz_intermediate3} and \ref{Prop:Byz_intermediate4}
completes the proof of Lemma~\ref{Lemma:NSCondition}.

\subsection{Proof of Lemma~\ref{Lemma:q_i_bound_conditional}}\label{Section:Proof_of_Lemma:q_i_bound_conditional}
From Lemmas~\ref{Lemma:symm_unimodal_partition} and~\ref{Lemma:node_failure_ICML18}, we have
\begin{align}\label{Eqn:q_k^idv}
q_k^{(idv)} &= \PP{\sign(\tilde{g}_k^{(j)}) \neq \sign(g_k)} \leq 
\left\{\begin{array}{ll}
\frac{2}{9} \frac{1}{S_k^{2}} & \text { if } S_k >\frac{2}{\sqrt{3}} \\
\frac{1}{2}-\frac{S_k}{2 \sqrt{3}} & \text { otherwise }
\end{array}\right.
\end{align}
for arbitrary $j\in [n], k \in [d]$ where $S_k = \frac{\left|g_{k}\right|}{\bar{\sigma}_{k}}$ is defined in Definition~\ref{Def:S_k}.
Denote the set of data partitions assigned to node $i$ by $P_i = \{j_1, j_2, \cdots, j_{n_k} \}$. Define a random variable $X_s$ as 
\begin{align*}
X_s = \mathds{1}_{\sign(\tilde{g}_k^{(j_s)}) = \sign(g_k)}.
\end{align*}
Then, from the definition of $q_k^{(idv)}$ in \eqref{Eqn:q_k^idv}, we have $\PP{X_s = 1}=  p_k^{(idv)} \coloneqq 1-q_k^{(idv)}$, and $\PP{X_s = 0} = q_k^{(idv)}$.
Recall that $c_{i,k} = \maj \{ \sign  ( \tilde{g}_k^{(j)} ) \}_{j \in P_i}$. By using a new random variable defined as $X \coloneqq \sum_{s=1}^{n_i} X_s$, the failure probability of node $i$ estimating the sign of $g_k$ is represented as
\begin{align*}
\PP{c_{i,k} \neq \sign(g_k) | n_i} &= \PP{X \leq \frac{n_i}{2}} = \PP{X - n_i p_k^{(idv)} \leq -n_i (-\frac{1}{2} + p_k^{(idv)})} \\
&\stackrel{(a)}{\leq} e^{-2 (-\frac{1}{2} + p_k^{(idv)})^2 n_i} \stackrel{(b)}{\leq} e^{-n_i \frac{S_k^2}{2(S_k^2 + 4)}}
\end{align*}
where (a) is from Lemma~\ref{Lemma:Hoeffding_binomial} and (b) is from the fact that $\frac{1}{4 \left(-\frac{1}{2} + p_k^{(idv)} \right)^2} -1 \leq \frac{4}{S_k^2}$, which is shown as below.
Note that
\begin{align*}
-\frac{1}{2} + p_k^{(idv)} &= \frac{1}{2} - q_k^{(idv)} \geq 
\left\{\begin{array}{ll}
\frac{1}{2} - \frac{2}{9} \frac{1}{S_k^{2}} & \text { if } S_k >\frac{2}{\sqrt{3}} \\
\frac{S_k}{2 \sqrt{3}} & \text { otherwise }
\end{array}\right.
\end{align*}
When $S_k \leq \frac{2}{\sqrt{3}}$, we have 
$\frac{1}{4 \left(-\frac{1}{2} + p_k^{(idv)} \right)^2} -1 \leq \frac{3}{S_k^2} - 1 < \frac{4}{S_k^2}$. For the case of $S_k > \frac{2}{\sqrt{3}}$, we have
\begin{align*}
\frac{1}{4 \left(-\frac{1}{2} + p_k^{(idv)} \right)^2} -1 \leq  \frac{1}{S_k^{2}} \frac{\frac{8}{9}-\frac{16}{81} \frac{1}{S_k^{2}}}{1-\frac{8}{9} \frac{1}{S_k^{2}}+\frac{16}{81} \frac{1}{S_k^{4}}}<\frac{1}{S_k^{2}} \frac{\frac{8}{9}}{1-\frac{8}{9} \frac{1}{S_k^{2}}}<\frac{4}{S_k^{2}}
\end{align*}
where the last inequality is from the condition on $S_k$. This completes the proof of Lemma~\ref{Lemma:q_i_bound_conditional}.

\ifarxiv

\subsection{Proof of Proposition~\ref{Lemma:symm_unimodal_partition}}\label{Section:Proof_of_Lemma:symm_unimodal_partition}

Let $u_k^{(j)} = \tilde{g}_k^{(j)} - g_k$.
From the definition of $\tilde{g}_k^{(j)}$, we have 
$Y = B u_k^{(j)} = B (\tilde{g}_k^{(j)} - g_k) = \sum_{x \in B_j} u_k(x)$. 
From Lemma~\ref{Lemma:conv_independent}, 
$f_Y = \textrm{conv } \{ f_{u_k(x)}  \}_{x \in B_j}$.
Since $u_k(x)$ are zero-mean, symmetric, and unimodal from Assumption~\ref{Assumption:symmtric_unimodal}, Lemma~\ref{Lemma:conv_unimodal} implies that  $Y$ (and thus $u_k^{(j)}$) is also zero-mean, symmetric, and unimodal. Therefore, $\tilde{g}_k^{(j)} = g_k + u_k^{(j)}$ is unimodal and symmetric around the mean $g_k$. The result on the variance of $\tilde{g}_k^{(j)}$ is directly obtained from the independence of $\tilde{g}_k(x)$ for different $x \in B_j$.

\newpage 

\small
\bibliographystyle{plain}
\bibliography{NeurIPS_20_Election_Coding}

\end{document}